\documentclass[12pt]{article}
\usepackage[dvips]{graphics,color}
\usepackage{latexsym}
\usepackage{amsmath}
\usepackage{amssymb}
\usepackage{amsfonts,amsthm}

\topmargin-.75in \oddsidemargin.0in \textwidth6.5in \textheight9in
\addtolength{\leftmargini}{-1.5mm}
\addtolength{\leftmarginii}{-1.5mm}
\addtolength{\leftmarginiii}{-1.5mm}
\addtolength{\itemsep}{-1.5mm}

\newtheorem{theorem}{Theorem}[section]
\newtheorem{lemma}[theorem]{Lemma}

\newtheorem{corollary}[theorem]{Corollary}

\newenvironment{remark}{\noindent{\bf Remark:}}{}









\newcommand{\ignore}[1]{}

\makeatletter
\def\argmin{\mathop{\operator@font argmin}}
\makeatother

\begin{document}

\title{Dominator Tree Certification and Independent Spanning Trees\thanks{This work
	is a rewritten and expanded combination of two conference papers,
	``Dominator Tree Verification and Vertex-Disjoint Paths,''
        {\em Proc.~16th ACM-SIAM Symposium on Discrete Algorithms,}
        pp.~433--442, 2005,
        and ``Dominators, Directed Bipolar Orders, and Independent Spanning Trees,''
        {\em Proc.~39th International Colloquium on Automata, Languages and Programming,}
        pp.~375--386, 2012.
    }}

\date{\today}

\author{Loukas Georgiadis$^{1}$ \and Robert E. Tarjan$^{2}$}

\maketitle

\begin{abstract}
How does one verify that the output of a complicated program is correct?  One can formally prove that the program is correct, but this may be beyond the power of existing methods. Alternatively one can check that the output produced for a particular input satisfies the desired input-output relation, by running a checker on the input-output pair. Then one only needs to prove the correctness of the checker. But for some problems even such a checker may be too complicated to formally verify. There is a third alternative: augment the original program to produce not only an output but also a \emph{correctness certificate}, with the property that a very simple program (whose correctness is easy to prove) can use the certificate to verify that the input-output pair satisfies the desired input-output relation.

We consider the following important instance of this general question: How does one verify that the dominator tree of a flow graph is correct?  Existing fast algorithms for finding dominators are complicated, and even verifying the correctness of a dominator tree in the absence of additional information seems complicated. We define a correctness certificate for a dominator tree, show how to use it to easily verify the correctness of the tree, and show how to augment fast dominator-finding algorithms so that they produce a correctness certificate.  We also relate the dominator certificate problem to the problem of finding independent spanning trees in a flow graph, and we develop algorithms to find such trees. All our algorithms run in linear time. Previous algorithms apply just to the special case of only trivial dominators, and they take at least quadratic time.
\end{abstract}

\footnotetext[1]{Department of Computer Science, University of Ioannina, Greece. E-mail: \texttt{loukas@cs.uoi.gr}.}
\footnotetext[2]{Department of Computer Science, Princeton
University, 35 Olden Street, Princeton, NJ, 08540, and Hewlett-Packard Laboratories. E-mail: \texttt{ret@cs.princeton.edu}.
Research at Princeton University partially supported by NSF grants CCF-0830676 and CCF-0832797. Research while visiting Stanford University partially supported by an AFOSR MURI grant. The information contained herein does not necessarily reflect the opinion or policy of the federal government and no official endorsement should be inferred.}

\section{Introduction}
\label{sec:introduction}
A vexing problem when running a computer program is to verify that the output produced by the program is correct. One option is to formally prove the correctness of the program, but if the program implements a sophisticated algorithm, this may be at best challenging and at worst beyond the power of current program verification techniques. A second option is to implement a \emph{checker}~\cite{cheker:BK1995}: an algorithm that, given an input-output pair, verifies that it satisfies the desired input-output relation. For some problems even this may be hard.  There is a third option: augment the original algorithm to produce not only the desired output but also a \emph{correctness certificate}, and implement a \emph{certifier}: an algorithm that, given an input-output pair and a certificate, uses the certificate to verify that the pair satisfies the desired input-output relation.  The augmented algorithm should be as efficient as the original (to within a constant factor), and the certifier should be at least as efficient as the original algorithm, much simpler, and easy to prove correct.

We address the following important instance of this problem: verify the correctness of the dominator tree of a flow graph. A \emph{flow graph} is a directed graph $G$ with a designated \emph{start} vertex $s$ such that every vertex is reachable from $s$. A vertex $v$ \emph{dominates} a vertex $w$ if $v$ is on every path from $s$ to $w$; if $v \not= w$, $v$ is a \emph{strict dominator} of $w$. The dominator relation defines a \emph{dominator tree} $D$ such that $v$ dominates $w$ if and only if $v$ is an ancestor of $w$ in $D$.  (See Figure \ref{fig:dominator-tree}.) The dominator tree is a central tool in program optimization and code generation~\cite{cytron:91:toplas}, and it has applications in other diverse areas including constraint programming~\cite{QVDR:PADL:2006},
circuit testing~\cite{amyeen:01:vlsitest}, theoretical biology~\cite{foodwebs:ab04}, memory profiling~\cite{memory-leaks:mgr2010}, connectivity and path-determination problems~\cite{2vc,2VCSS:Geo,Italiano2012}, and the analysis of diffusion networks~\cite{Rodrigues:icml12}.
Lengauer and Tarjan~\cite{domin:lt} gave two near-linear-time algorithms for computing $D$ that run fast in practice~\cite{dom_exp:gtw} and have been used in many of these applications.  The simpler of these runs in $O(m \log_{(m/n + 1)}{n})$ time on an $n$-vertex, $m$-arc flow graph with $n > 1$. The other runs in $O(m \alpha_{m/n}(n))$ time.
Here $\alpha$ is a functional inverse of Ackermann's function defined as follows: $\alpha_{d}(n) = \min \{ k > 0 \ | \ A(k, \lceil d \rceil) > n \}$, where $A$ is defined recursively by $A(0, j) = j + 1$; $A(k, 0) = A(k - 1, 1)$ if $k > 0$; $A(k, j) = A(k - 1, A(k, j - 1))$ if $k > 0$, $j > 0$.  This is a slight variant of Tarjan's original definition~\cite{dsu:tarjan}. Subsequently, more-complicated but truly linear-time algorithms to compute $D$ were discovered~\cite{domin:ahlt,dominators:bgkrtw,domin:bkrw,dom:gt04}.

\begin{figure}[t]
\begin{center}
\scalebox{0.7}[0.7]{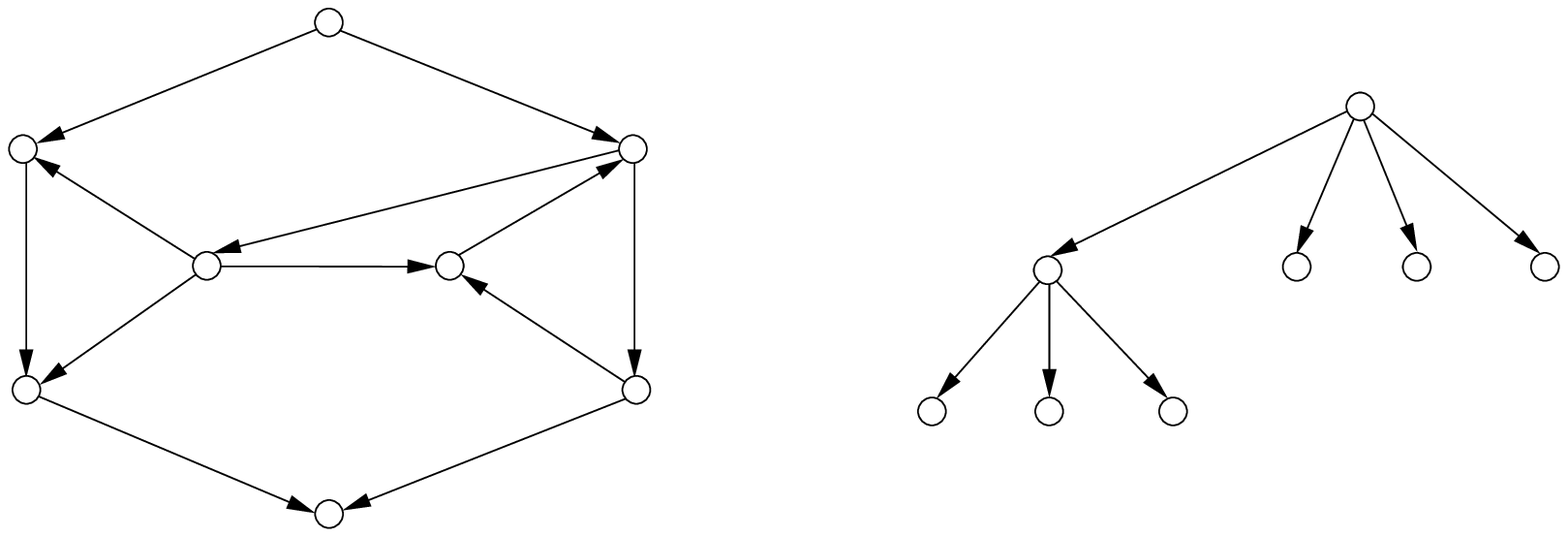}
\end{center}
\caption{\label{fig:dominator-tree} A flow graph and its dominator tree.}
\end{figure}

How does one know that the tree produced by one of these fast but complicated algorithms is in fact the dominator tree?  This natural question was asked of the second author by Steve Weeks in 1999.  Researchers in program verification have asked the same question~\cite{domver:ZZ}. This paper is our answer.

Formal verification of any of the fast algorithms has to our knowledge never been done. Instead, only a much simpler but much slower algorithm has been formally verified \cite{domver:ZZ}. It seems hard even to build a fast checker of the dominator tree. We take the third approach to verification: augment the algorithm that computes the dominator tree to produce in addition a correctness certificate, such that it is simple to verify the correctness of the tree with the help of the certificate. Here what ``simple'' means is not rigorous: since linear time is necessary and sufficient for constructing the tree and hence for verifying it, running time cannot be the measure of simplicity. Our goal is a linear-time certifier that avoids the technical complications of the fast dominator-finding algorithms and is as simple as possible, simple enough that one could easily formally prove the correctness of the certifier. (We leave this task for those skilled in program verification.)

Our certificate is a preorder of the vertices of $D$ with a certain property, which we call \emph{low-high}.  Since a low-high order is a preorder, it can be embedded in the representation of $D$, or it can be represented separately.  Verifying that an order is low-high is entirely straightforward and can be done easily in linear time.  In addition to providing a certifier, we develop linear-time algorithms to compute a low-high order given $G$ and $D$. By adding one of these algorithms to an algorithm to compute $D$, one obtains a \emph{certifying algorithm}~\cite{certifying} to compute $D$.

To obtain our results, we develop new theory about dominators and related concepts, theory that has additional applications.  We obtain our definition of a low-high order by generalizing a notion of Plein~\cite{dir_st:plehn91} and Cheriyan and Reif~\cite{dir_st:cr}, which they used in algorithms for constructing a pair of disjoint spanning trees. Suppose the dominator tree $D$ is flat; that is, each vertex $v \not= s$ has only one strict dominator, $s$.  Then only $s$ and $v$ are common to all paths from $s$ to $v$.  By Menger's Theorem~\cite{menger}, there are two paths from $s$ to $v$ containing no common vertex other than $s$ and $v$.  Whitty~\cite{vdp:whitty} proved that such paths can be realized for all $v$ by a pair of trees: there are spanning trees $B$ and $R$ of $G$, rooted at $s$, such that for any $v$ the paths from $s$ to $v$ in $B$ and $R$ share only $s$ and $v$. We call such trees \emph{disjoint}.  Whitty actually proved something stronger: there are two disjoint spanning trees $B$ and $R$ rooted at $s$ such that, for any distinct vertices $v$ and $w$, either the path in $B$ from $s$ to $v$ and the path in $R$ from $s$ to $w$ share only $s$, or the path in $R$ from $s$ to $v$ and the path in $B$ from $s$ to $w$ share only $s$. We call such trees \emph{strongly disjoint}. Two trees can be disjoint without being strongly disjoint, as the example in Figure \ref{fig:independent-strong} shows.

\begin{figure}
\begin{center}
\scalebox{1}[1]{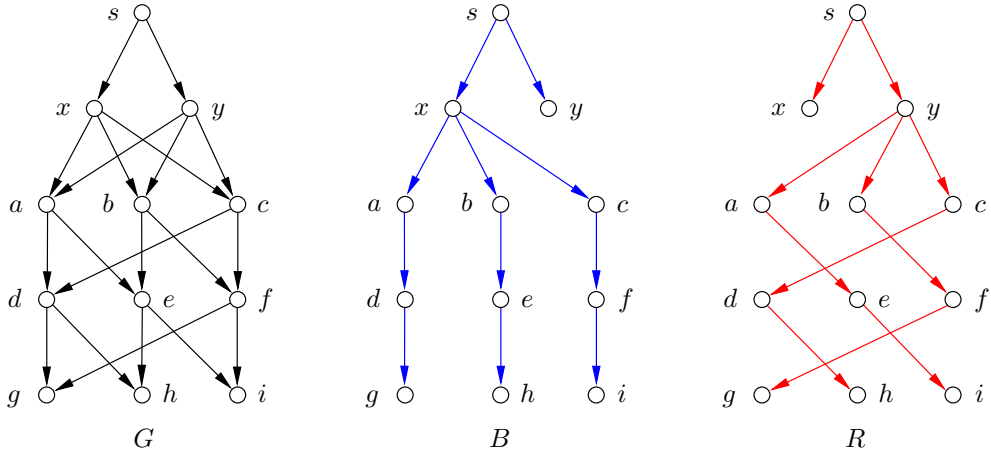}
\end{center}
\caption{\label{fig:independent-strong} A graph with two disjoint but not strongly disjoint spanning trees. Graph $G$ has a flat dominator tree. Spanning trees $B$ and $R$ are disjoint, but the pair $g,h$ violates the condition for strong disjointness.}
\end{figure}

Plehn~\cite{dir_st:plehn91} and independently Cheriyan and Reif~\cite{dir_st:cr} gave simpler proofs of Whitty's result using what Cheriyan and Reif called a directed $st$-numbering as an intermediary.  Given a directed graph with $n$ vertices and two distinct vertices $s$ and $t$, a \emph{directed $st$-numbering} is a numbering of the vertices from $1$ to $n$ such that $s$ is numbered $1$, $t$ is numbered $n$, and every other vertex $v$ has an entering arc from a smaller vertex and an entering arc from a larger vertex.  The proofs of Whitty, of Plehn, and of Cheriyan and Reif give polynomial-time constructions of directed $st$-numberings and of strongly disjoint spanning trees, but their constructions seem to require $\Omega(nm)$ time in the worst case.  Huck~\cite{ind_st:huck94} later gave an $O(nm)$-time algorithm to find two disjoint spanning trees.
None of these proofs or algorithms is especially simple. Note that if $D$ is flat, constructing it is trivial, but $D$ provides no help in constructing two disjoint or strongly disjoint trees. That is, constructing two such trees is a different problem than finding $D$.  Furthermore, verifying that two spanning trees are disjoint or strongly disjoint is not easy.

We extend these concepts to arbitrary flow graphs. Let $G$ be a flow graph with vertex set $V$, arc set $A$, and start vertex $s$. Let $T$ be a tree rooted at $s$ with vertex set $V$ (not necessarily a spanning tree of $G$), with $t(v)$ the parent of $v$ in $T$.
A \emph{preorder} of $T$ is a total order of its vertices obtainable by doing a depth-first traversal of $T$ and ordering the vertices by first visit. A preorder is of $T$ is \emph{low-high} on $G$ if, for all $v \not= s$, $(t(v), v) \in A$ or there are two arcs $(u, v) \in A$, $(w,v) \in A$ such that $u$ is less than $v$, $v$ is less than $w$, and $w$ is not a descendant of $v$. Two spanning trees $B$ and $R$ rooted at $s$ are \emph{independent} if for all $v$, the paths from $s$ to $v$ in $B$ and $R$ share only the dominators of $v$; $B$ and $R$ are \emph{strongly independent} if for every pair of vertices $v$ and $w$, either the path in $B$ from $s$ to $v$ and the path in $R$ from $s$ to $w$ share only the common dominators of $v$ and $w$, or the path in $R$ from $s$ to $v$ and the path in $B$ from $s$ to $w$ share only the common dominators of $v$ and $w$. (See Figure \ref{fig:low-high}.) These three definitions extend the notions of $st$-numberings and disjoint and strongly disjoint spanning trees, respectively, to flow graphs with non-flat dominator trees.  We prove that every flow graph has a pair of strongly independent spanning trees and its dominator tree has a low-high order.

\begin{figure}[t]
\begin{center}
\scalebox{0.7}[0.7]{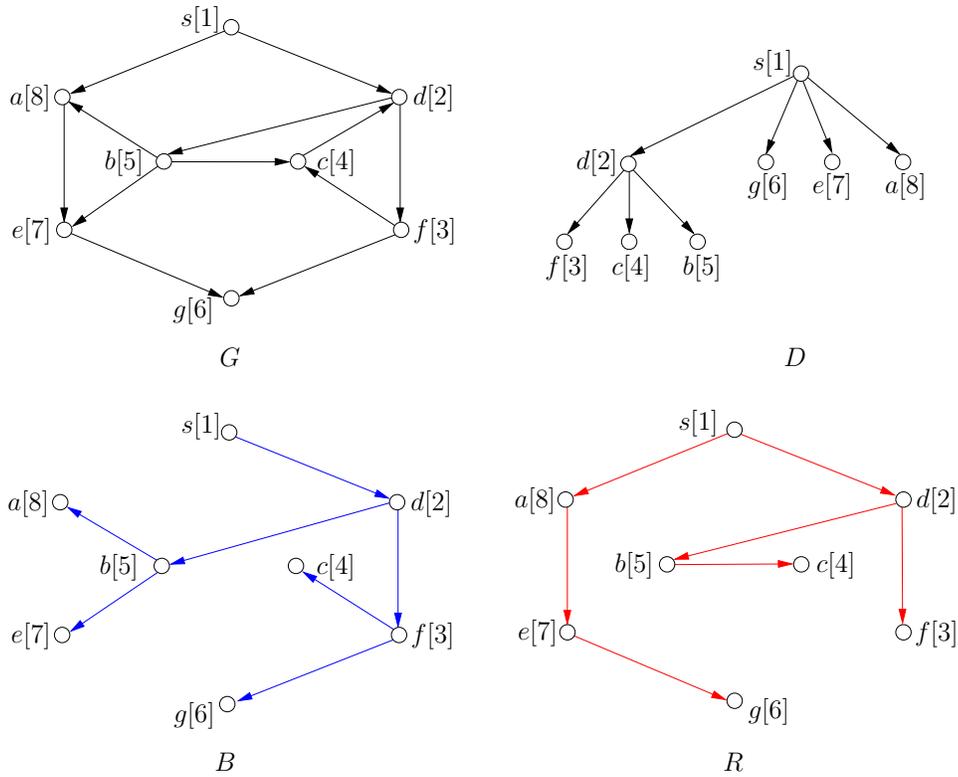}
\end{center}
\caption{\label{fig:low-high} The flow graph of Figure \ref{fig:dominator-tree}, its dominator tree with vertices numbered in low-high order (numbers in brackets), and two strongly independent spanning trees $B$ and $R$.}
\end{figure}

We develop linear-time algorithms related to these concepts, generalizing and significantly improving the previous results. We present a very simple certifier that, given a flow graph and a tree with a vertex order, verifies that the order is low-high and that the tree is the dominator tree of the flow graph.  We present an algorithm that, given a flow graph and its dominator tree with a low-high order, constructs a pair of strongly independent spanning trees.  These trees have the additional property that they are arc-disjoint except for bridges. (A \emph{bridge} is an arc $(u, v)$ such that every path from $s$ to $v$ contains $(u, v)$.) That the trees are strongly independent follows immediately from the fact that the order is low-high.  We present three algorithms to construct a low-high order, given a flow graph and its dominator tree.  The first applies only to reducible flow graphs (defined below), a class that includes acyclic flow graphs.
The second and third apply to arbitrary flow graphs, but both require additional input: the second requires loop-nesting information (defined in Section \ref{sec:low-high-headers}), the third requires information about semi-dominators (defined in Section \ref{sec:semidominators}).  The third algorithm uses the semi-dominator information to construct a pair of independent spanning trees, from which it then constructs a low-high ordering.  Although both steps are simple, the correctness proof of the first step is surprisingly complicated.

Many program control flow graphs are reducible. For these, the first low-high order algorithm can be used to certify the dominator tree.  For arbitrary flow graphs, either the second or third algorithm can be used.  Many applications of dominators need loop information as well; for these, the second algorithm is appropriate, since the loop information must be computed anyway.  The fast algorithms for finding dominators do so by computing semi-dominators, making it easy to extend these algorithms to produce the information needed by the third algorithm.  If the flow graph is not reducible, the application does not need loop information, and dominators are computed by a simple algorithm that does not compute semi-dominators, such as the iterative algorithm of Cooper et al.~\cite{dom:chk01}, the second algorithm can be used if loops are computed, which can be done in near-linear time~\cite{st:t} using disjoint set union~\cite{dsu:tarjan} or truly linear time~\cite{dominators:bgkrtw} using additional techniques.  Thus, depending on the situation, at least one of our three algorithms can be used to certify the dominator tree.

Our paper is a rewritten and expanded combination of two conference papers \cite{domv:gt05,bipolar:2012}. It contains six sections in addition to this introduction. Section \ref{sec:properties} introduces some terminology and develops some properties of dominator trees. It also presents our algorithms to construct a pair of strongly independent spanning trees given a low-high order and to verify the correctness of a dominator tree with a low-high order. Section \ref{sec:derived-graph} introduces the \emph{derived graph}, which in effect allows us to reduce the problem of constructing a low-high order to the case of a flat dominator tree. Section \ref{sec:low-high-reducible} presents our algorithm for finding a low-high order on a reducible flow graph.  As part of the implementation of this algorithm, we develop a simplified linear-time algorithm for a special case of the dynamic list order problem~\cite{list_order:bender_et_al,list_order:ds87}.  This algorithm may itself have additional applications. Sections \ref{sec:low-high-headers} and \ref{sec:semidominators} contain our algorithms to find a low-high order on an arbitrary flow graph.  The algorithm in Section \ref{sec:low-high-headers} uses loop information, and the algorithm in Section \ref{sec:semidominators} uses semi-dominator information. Section \ref{sec:remarks} discusses other applications and open problems.

\section{Dominators, Low-High Orders, and Strongly Independent Spanning Trees}
\label{sec:properties}

\subsection{Terminology}
\label{sec:terminology}

Let $T$ be a rooted tree.  We denote by $t(v)$ the parent of vertex $v$ in $T$; $t(v) = \mathit{null}$ if $v$ is the root of $T$. If $v$ is an ancestor of $w$, $T[v, w]$ is the path in $T$ from $v$ to $w$.  If $v$ is a proper ancestor of $w$, $T(v, w]$ is the path to $w$ from the child of $v$ that is an ancestor of $w$.  Tree $T$ is \emph{flat} if its root is the parent of every other vertex.  A \emph{preorder} of $T$ is a total order of the vertices of $T$ such that, for every vertex $v$, the descendants of $v$ are ordered consecutively, with $v$ first.  Equivalently, for each non-root vertex $v$, $t(v)$ is less than $v$, and if $x$ is a descendant of $v$ but $y$ is not, then $y$ is less than $v$ or greater than $x$. The possible preorders of $T$ are exactly those that can be obtained by totally ordering the children of each vertex, doing a depth-first traversal of $T$, and ordering the vertices in the order they are first visited by the traversal.

Throughout the rest of this paper, $G$ is a flow graph with vertex set $V$, arc set $A$, start vertex $s$, and no arcs entering $s$: arcs entering $s$ can be deleted without affecting any of the concepts we study.  We denote by $n$ the number of vertices and by $m$ the number of arcs.  To simplify bounds we assume $n > 1$.  Since $m \ge  n - 1$, this implies $m = \Omega(n)$.  We denote the dominator tree of $G$ by $D$: if $v \not=  s$, the parent $d(v)$ of $v$ in $D$ is the \emph{immediate dominator} of $v$, the unique strict dominator of $v$ dominated by all strict dominators of $v$. (Some authors use $\mathit{idom}(v)$ to denote the immediate dominator of $v$.)

\subsection{The parent and sibling properties}
\label{sec:parent-sibling-properties}

Let $T$ be a rooted tree whose vertex set is a subset of $V$. Tree $T$ has the \emph{parent property} if for all $(v, w) \in A$, $t(w)$ is an ancestor of $v$ in $T$.  Since $s$ has no entering arcs, but every other vertex has at least one entering arc (all vertices are reachable from $s$), the parent property implies that $T$ is rooted at $s$ and has vertex set exactly $V$. Tree $T$ has the \emph{sibling property} if $v$ does not dominate $w$ for all siblings $v$ and $w$. The parent and sibling properties are necessary and sufficient for a tree to be the dominator tree.

\begin{theorem}
\label{theorem:parent-sibling}
Tree $D$ has the parent and sibling properties.
\end{theorem}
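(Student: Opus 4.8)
The plan is to verify each property directly from the definition of the dominator tree, using the characterization that $v$ dominates $w$ if and only if $v$ lies on every path from $s$ to $w$, equivalently $v$ is an ancestor of $w$ in $D$. For the parent property, I would fix an arc $(v,w)\in A$ and argue that $d(w)$ is an ancestor of $v$ in $D$, i.e.\ that $d(w)$ dominates $v$. Suppose not; then there is a path $P$ from $s$ to $v$ avoiding $d(w)$. Appending the arc $(v,w)$ to $P$ yields a path from $s$ to $w$ avoiding $d(w)$ (note $w\neq d(w)$, and $w$ could only equal an internal vertex of $P$ if it already appears, but we may take $P$ to be simple and, since $w$ has no relevance until we append the arc, $d(w)\notin P\cup\{w\}$ forces a contradiction to $d(w)$ dominating $w$). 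Hence $d(w)$ does not dominate $w$, contradicting the definition of the immediate dominator. Therefore $d(w)$ dominates $v$, so $d(w)=t(w)$ is an ancestor of $v$ in $D$, which is exactly the parent property.

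For the sibling property, let $v$ and $w$ be distinct siblings in $D$, so $d(v)=d(w)=:u$, and suppose for contradiction that $v$ dominates $w$. Since $v\neq w$ and $v$ dominates $w$, $v$ is a strict dominator of $w$. But every strict dominator of $w$ dominates $w$'s immediate dominator $u$ as well, or is dominated by... more carefully: the immediate dominator $u=d(w)$ is dominated by every strict dominator of $w$ except those dominated by $u$ — actually the cleanest statement is that the strict dominators of $w$ are exactly the proper ancestors of $w$ in $D$, and they are linearly ordered along $D[s,w]$. Since $v$ is a strict dominator of $w$, $v$ is a proper ancestor of $w$ in $D$; but $v$'s parent is $u$, so $v$ lies strictly below $u$ on $D[s,w]$, i.e.\ $v$ is a proper descendant of $u=d(w)$ and a proper ancestor of $w$. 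This contradicts $d(w)=u$ being $w$'s parent in $D$ (there is no room for $v$ strictly between $u$ and $w$ on the path $D[s,w]$ since $(u,w)$ is a tree edge). Hence $v$ does not dominate $w$, and by symmetry $w$ does not dominate $v$, giving the sibling property.

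The main obstacle, such as it is, is purely bookkeeping: making sure that the path-splicing argument for the parent property really does produce a simple $s$-$w$ path avoiding $d(w)$ — one must handle the degenerate cases $w=v$ is impossible since $(v,w)\in A$ and $G$ has no self-loops of interest here, and the case where $w$ already appears on $P$ is resolved by truncating $P$ at its first occurrence of $w$. Everything else follows from the standard fact that the strict dominators of any vertex $w$ form a chain, which is precisely what makes the dominator relation tree-structured and is implicit in the statement that $D$ is a tree. I do not expect any genuinely hard step here; the theorem is essentially an unpacking of the definitions of immediate dominator and dominator tree.
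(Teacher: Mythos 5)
Your proposal is correct and takes essentially the same approach as the paper: the parent property is shown by appending the arc $(v,w)$ to a $d(w)$-avoiding $s$-$v$ path to contradict $d(w)$ dominating $w$, and the sibling property is an unpacking of the defining property of $D$ (siblings are not ancestors of one another, so neither dominates the other). The paper dispatches the sibling property with ``by definition'' and your version simply spells that out; your extra care about simplicity and possible reoccurrence of $w$ in the spliced path is harmless but not needed, since any walk from $s$ to $w$ avoiding $d(w)$ already contains a simple $s$-$w$ path avoiding $d(w)$.
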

\begin{proof}
Tree $D$ has the sibling property by definition.
Tarjan ~\cite[Lemma 7]{path:tarjan81} proved that $D$ has the parent property. For completeness, we include a proof here.
Suppose $D$ violates the parent property. Then there is an arc $(v, w)$ such that $d(w)$ (the immediate dominator of $w$) is not an ancestor of $v$ in $D$; that is, $d(w)$ does not dominate $v$.  But then there is a path from $s$ to $w$ that avoids $d(w)$, consisting of a path from $s$ to $v$ avoiding $d(w)$ followed by arc $(v, w)$. Thus $d(w)$ does not dominate $w$, a contradiction.
\end{proof}

\begin{lemma}
\label{lemma:parent}
Suppose $T$ has the parent property. Let $u$ and $v$ be vertices such that $v \not= s$ and $u$ is not a descendant of $t(v)$ in $T$. Then any path from $u$ to $v$ contains $t(v)$.
\end{lemma}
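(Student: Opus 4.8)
The plan is to argue by contradiction: assume there is a path $P$ from $u$ to $v$ that avoids $t(v)$, and derive a violation of the parent property. The key observation is that $P$ starts outside the subtree of $T$ rooted at $t(v)$ (since $u$ is not a descendant of $t(v)$) and ends inside it (since $v$ is a child, hence a descendant, of $t(v)$), so somewhere along $P$ it must cross into that subtree.

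First I would make precise what ``crossing in'' means. Traverse $P$ from $u$ to $v$ and let $y$ be the first vertex on $P$ that is a descendant of $t(v)$ in $T$; such a $y$ exists because $v$ itself is one, and $y \neq u$ because $u$ is not a descendant of $t(v)$. Let $x$ be the vertex preceding $y$ on $P$, so $(x,y) \in A$ and, by the choice of $y$, $x$ is not a descendant of $t(v)$.

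Next I would use the hypothesis that $P$ avoids $t(v)$ to conclude $y \neq t(v)$, so $y$ is a \emph{proper} descendant of $t(v)$; hence $t(y)$ lies on the tree path $T[s, y]$ strictly below $t(v)$ or equals $t(v)$, and in either case $t(y)$ is a descendant of $t(v)$. Applying the parent property to the arc $(x,y)$ gives that $t(y)$ is an ancestor of $x$ in $T$, so $x$ is a descendant of $t(y)$, and therefore a descendant of $t(v)$. This contradicts the previous paragraph, completing the proof.

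The only delicate point is bookkeeping about whether ``descendant'' and ``ancestor'' include the vertex itself; I expect this to be the main place to be careful, but it is handled by noting that $t(v) \notin P$ forces $y \neq t(v)$, which is exactly what makes $t(y)$ a descendant of $t(v)$ rather than possibly an ancestor of it. No other subtlety is anticipated; the argument is a one-step application of the parent property at the ``entry arc'' of $P$ into the subtree rooted at $t(v)$.
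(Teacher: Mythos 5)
Your proof is correct and is essentially the paper's argument: both identify the first arc $(x,y)$ on the path whose head $y$ is a descendant of $t(v)$, apply the parent property to conclude $t(y)$ is an ancestor of $x$, and use the fact that $x$ is not a descendant of $t(v)$ to force $y = t(v)$. The only difference is presentational — you frame it as a contradiction while the paper gives the direct version — so the content is the same.
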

\begin{proof}
Consider any path from $u$ to $v$. Let $(x, y)$ be the first arc on this path such that $y$ is a descendant of $t(v)$ in $T$.  By the parent property, $t(y)$ is an ancestor of $x$ in $T$. But since $x$ is not a descendant of $t(v)$ in $T$, it must be the case that $y = t(v)$. That is, the path contains $t(v)$.
\end{proof}

\begin{corollary}
\label{corollary:parent}
Suppose $T$ has the parent property. If $v$ is an ancestor of $w$ in $T$, then $v$ dominates $w$.
\end{corollary}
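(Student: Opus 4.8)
The plan is to reduce the statement to the single tree-edge case and then chain. I would first record two elementary properties of domination that I will use freely. It is \emph{reflexive}: every vertex dominates itself. It is \emph{transitive}: if $a$ dominates $b$ and $b$ dominates $c$, then for any path $P$ from $s$ to $c$ we have $b \in P$, and the portion of $P$ from $s$ to the first occurrence of $b$ is a path from $s$ to $b$, which must contain $a$; hence $a \in P$, so $a$ dominates $c$. (Recall that, as already noted, the parent property forces $T$ to be rooted at $s$ with vertex set exactly $V$, so these statements make sense for all vertices of $T$.)

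The core step is to show that for every non-root vertex $z$ of $T$, its parent $t(z)$ dominates $z$. This follows from Lemma~\ref{lemma:parent} applied with $u = s$ and with its ``$v$'' taken to be $z$. If $t(z) \neq s$, then $s$ is not a descendant of $t(z)$ in $T$, since the root $s$ is a descendant only of itself; the lemma then says that every path from $s$ to $z$ contains $t(z)$, i.e.\ $t(z)$ dominates $z$. If instead $t(z) = s$, the claim is trivial, as $s$ lies on every path from $s$. So in all cases $t(z)$ dominates $z$.

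Finally, suppose $v$ is an ancestor of $w$ in $T$. If $v = w$ we are done by reflexivity. Otherwise let $v = x_0, x_1, \dots, x_k = w$ be the path in $T$ from $v$ to $w$, so that $t(x_{i+1}) = x_i$ for $0 \le i < k$. By the core step, $x_i$ dominates $x_{i+1}$ for each $i$, and by transitivity $v = x_0$ dominates $x_k = w$. I do not anticipate a genuine obstacle: the only point requiring care is the bookkeeping around the root $s$ — confirming that $s$ is never a proper descendant in $T$, which is exactly what makes the small case split $t(z) = s$ versus $t(z) \neq s$ necessary when invoking Lemma~\ref{lemma:parent}. Everything else is routine.
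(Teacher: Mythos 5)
Your proof is correct and follows essentially the same route as the paper: use Lemma~\ref{lemma:parent} (with $u=s$) to show that each non-root vertex is dominated by its parent, then chain up the tree path by transitivity of domination. Your explicit case split on $t(z)=s$ versus $t(z)\neq s$ is a touch more careful than the paper's terse wording, since the lemma's hypothesis ``$u$ is not a descendant of $t(v)$'' technically fails when $t(v)=s$, but the conclusion is trivial there anyway.
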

\begin{proof}
Suppose $v \not= s$.  By Lemma \ref{lemma:parent}, any path from $s$ to $v$ contains $t(v)$, so $t(v)$ dominates $v$. The corollary follows by the transitivity of dominance.
\end{proof}

\begin{corollary}
\label{corollary:parent-2}
Suppose $T$ has the parent property and $v \not= s$.  If $x$ is a vertex on a simple path $P$ from $t(v)$ to $v$, then $x$ is a descendant of $t(v)$ but not a proper descendant of $v$.
\end{corollary}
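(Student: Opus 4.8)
The plan is to write the simple path as $P = (z_0, z_1, \ldots, z_k)$ with $z_0 = t(v)$ and $z_k = v$; since $v \neq t(v)$ we have $k \geq 1$. The claim splits into two independent assertions: (a) every $z_i$ is a descendant of $t(v)$ in $T$, and (b) no $z_i$ is a proper descendant of $v$ in $T$. Assertion (a) should follow in one line from Lemma~\ref{lemma:parent}, and (b) from a short ``first bad vertex'' argument along $P$ using the parent property and the simplicity of $P$.

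For (a): suppose some $z_i$ is not a descendant of $t(v)$. Then $i \geq 1$, since $z_0 = t(v)$ is trivially a descendant of itself. Apply Lemma~\ref{lemma:parent} with $u := z_i$ (and the given $v$, recalling $v \neq s$) to the subpath $(z_i, z_{i+1}, \ldots, z_k)$ of $P$, which is a path from $z_i$ to $v$: it must contain $t(v)$. But $t(v) = z_0$, and since $P$ is simple $z_0$ occurs on $P$ only at position $0$, so $z_0 \notin \{z_i, \ldots, z_k\}$ when $i \geq 1$ --- a contradiction. Hence every vertex of $P$ is a descendant of $t(v)$.

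For (b): suppose, for contradiction, that some vertex of $P$ is a proper descendant of $v$, and let $z_i$ be the first such vertex on $P$. We have $i \geq 1$ because $z_0 = t(v)$ is a proper ancestor of $v$ and hence not a descendant of $v$, and $i \leq k - 1$ because $z_k = v$ is not a proper descendant of itself. By the choice of $i$, vertex $z_{i-1}$ is not a proper descendant of $v$. Now consider the arc $(z_{i-1}, z_i) \in A$. By the parent property, $t(z_i)$ is an ancestor of $z_{i-1}$ in $T$. Since $v$ is a proper ancestor of $z_i$ in $T$ and $t(z_i)$ is the parent of $z_i$, the vertex $v$ is an ancestor of $t(z_i)$; thus $t(z_i)$ is a descendant of $v$, and therefore so is $z_{i-1}$. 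But $z_{i-1}$ is not a proper descendant of $v$, so $z_{i-1} = v = z_k$. As $P$ is simple, this forces $i - 1 = k$, contradicting $i \leq k-1$. Hence no vertex of $P$ is a proper descendant of $v$, which together with (a) proves the corollary.

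I do not expect any real obstacle here: the statement is essentially a repackaging of the parent property, with Lemma~\ref{lemma:parent} doing all the work for part (a). The only point that needs a moment's care is part (b), specifically the observation that being a proper ancestor of $z_i$ makes $v$ an ancestor of $z_i$'s parent $t(z_i)$ (because distinct ancestors of a common vertex are linearly ordered and $v \neq z_i$); once that is in hand, the parent property propagates ``descendant of $v$'' one step back along $P$, and simplicity of $P$ delivers the contradiction.
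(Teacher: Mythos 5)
Your proof is correct. For part (a) your argument matches the paper's: apply Lemma~\ref{lemma:parent} to the tail of $P$ starting at the offending vertex to force a second occurrence of $t(v)$ on $P$. For part (b) the paper simply cites Lemma~\ref{lemma:parent} to conclude that the prefix of $P$ from $t(v)$ to $x$ contains $v$; as literally stated, the lemma would produce $t(x)$ on that prefix rather than $v$, so one either iterates or uses the mildly stronger form (a path from a non-descendant of a vertex $p$ to a descendant of $p$ must pass through $p$) that the proof of the lemma in fact establishes. Your first-bad-vertex argument sidesteps this by invoking the parent property directly on the arc $(z_{i-1},z_i)$ that first enters the subtree rooted at $v$, deducing $z_{i-1}=v$ and hence a collision with $z_k=v$. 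Both arguments are the same in spirit --- turn the hypothesized violation into a repeated vertex on the simple path --- but yours is slightly more self-contained and spells out a step the paper leaves compressed.
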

\begin{proof}
If $x$ were a non-descendant of $t(v)$, the part of $P$ from $x$ to $v$ would contain $t(v)$ by Lemma \ref{lemma:parent}, so $P$ would not be simple. If $x$ were a proper descendant of $v$, the part of $P$ from $t(v)$ to $x$ would contain $v$ by Lemma \ref{lemma:parent}, so $P$ would not be simple.
\end{proof}

\begin{theorem}
\label{theorem:parent-sibling-2}
A tree $T$ has the parent and sibling properties if and only if $T = D$.
\end{theorem}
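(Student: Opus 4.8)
The plan is to dispatch the backward direction at once and reduce the forward direction to a statement about parents. If $T = D$ then $T$ is the dominator tree, so Theorem~\ref{theorem:parent-sibling} already gives that $T$ has the parent and sibling properties; only the converse needs work. Assume $T$ has the parent and sibling properties. As observed after the definition of the parent property, this forces $T$ to be rooted at $s$ with vertex set exactly $V$; since $D$ is also rooted at $s$ with vertex set $V$, it suffices to prove that $t(w) = d(w)$ for every $w \neq s$, for then $T$ and $D$ share a root and a parent function and hence coincide.

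So fix $w \neq s$ and set $x = d(w)$. First I would record two easy facts. Since $t(w)$ is a proper ancestor of $w$ in $T$, Corollary~\ref{corollary:parent} shows $t(w)$ is a strict dominator of $w$; and since the immediate dominator $d(w)$ is dominated by every strict dominator of $w$, we get that $t(w)$ dominates $x$. It remains to rule out $t(w) \neq x$. Assume $t(w) \neq x$. Because $x \neq w$, in $T$ exactly one of the following holds: $x$ is a proper descendant of $w$, $x$ is a proper ancestor of $w$, or $x$ and $w$ are incomparable; I would treat these in turn.

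If $x$ is a proper descendant of $w$ in $T$, Corollary~\ref{corollary:parent} gives that $w$ dominates $x = d(w)$; together with $d(w)$ dominating $w$ and the antisymmetry of the dominance relation, this yields $w = d(w)$, a contradiction. If $x$ is a proper ancestor of $w$ in $T$, then $x$ lies on $T[s, t(w)]$; since $x \neq t(w)$, $x$ is a proper ancestor of $t(w)$ in $T$, so Corollary~\ref{corollary:parent} gives that $x$ dominates $t(w)$; with $t(w)$ dominating $x$ and antisymmetry, $x = t(w)$, a contradiction. The remaining case, $x$ and $w$ incomparable in $T$, is where the sibling property enters and is, I expect, the crux. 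Let $p$ be the nearest common ancestor of $x$ and $w$ in $T$, and let $x'$ and $w'$ be the children of $p$ in $T$ lying on the paths toward $x$ and $w$ respectively (allowing $x' = x$ or $w' = w$). Then $x'$ and $w'$ are distinct siblings of $T$. By Corollary~\ref{corollary:parent}, $x'$ dominates $x = d(w)$, which in turn dominates $w$, so $x'$ dominates $w$; symmetrically $w'$ dominates $w$. But the dominators of $w$ are exactly the ancestors of $w$ in $D$ and thus form a chain, so one of the distinct vertices $x', w'$ dominates the other, contradicting the sibling property. Hence $t(w) = d(w)$, and $T = D$.

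The main obstacle is this last case. The parent property alone (through Corollary~\ref{corollary:parent}) guarantees only that every $T$-ancestor relation is a genuine dominance relation; it does not prevent $T$ from attaching a vertex beneath some proper dominator of it instead of beneath its immediate dominator. The sibling property supplies precisely the missing constraint, and the argument above is exactly the mechanism for converting a misplacement of $d(w)$ in $T$ into a pair of siblings one of which dominates the other. In writing this up I would state explicitly, and if desired cite, the standard fact that the dominators of a vertex are linearly ordered by the ancestor relation in $D$, since that chain property is what makes the incomparable case close.
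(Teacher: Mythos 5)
Your proof is correct, and the crux of it --- the nearest-common-ancestor argument that converts a misplaced dominator into a pair of siblings one of which dominates the other --- is exactly the key step in the paper's proof. The decomposition around it is different, though: the paper proves the global statement that dominance implies $T$-ancestorship (the converse of Corollary~\ref{corollary:parent}), so that the ancestor relation of $T$ and of $D$ coincide and hence $T=D$; given $v$ dominates $w$ with $v$ not a $T$-ancestor of $w$, antisymmetry plus Corollary~\ref{corollary:parent} immediately forces $v,w$ unrelated, and the NCA argument closes it in one stroke. You instead argue pointwise that $t(w)=d(w)$ for each $w$, which requires you to separately dispose of the possibilities that $d(w)$ is a descendant of $w$ or a proper ancestor other than $t(w)$ before reaching the incomparable case. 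Both work; the paper's version is a bit shorter because proving the ancestor relations coincide avoids the two extra cases, while yours makes the comparison of parent functions more explicit and perhaps more elementary to check. One small stylistic point: the fact you flag at the end --- that the dominators of a vertex are linearly ordered --- is what the paper invokes implicitly when it says ``Since both $x$ and $y$ dominate $w$, one must dominate the other''; stating it explicitly as you propose is a reasonable addition.
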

\begin{proof}
Theorem \ref{theorem:parent-sibling} gives the ``if'' half of the theorem.  Suppose $T$ has the parent property.  By Corollary \ref{corollary:parent}, if $v$ is an ancestor of $w$ in $T$, then $v$ dominates $w$.  Suppose $v$ dominates $w$ but $v$ is not an ancestor of $w$.  Since $v \not= w$, $w$ does not dominate $v$.  By Corollary \ref{corollary:parent}, $w$ is not an ancestor of $v$ in $T$, so $v$ and $w$ are unrelated.  Let $u$ be the nearest common ancestor of $v$ and $w$ in $T$, and let $x$ and $y$ be the children of $u$ that are ancestors of $v$ and $w$.  By Corollary \ref{corollary:parent}, $x$ dominates $v$ and $y$ dominates $w$.  By the transitivity of the dominator relation, $x$ dominates $w$.  Since both $x$ and $y$ dominate $w$, one must dominate the other, which violates the sibling property.  This gives the ``only if'' half of the theorem.
\end{proof}

By Theorem \ref{theorem:parent-sibling-2}, to verify that a tree $T$ is the dominator tree, it suffices to show (1) $T$ is a rooted tree, (2) $T$ has the parent property, and (3) $T$ has the sibling property.  It is straightforward to verify (1) in $O(n)$ time, assuming that $T$ is given by its parent function: we just need to check that the parent function is defined for all vertices other than $s$ and that there are no cycles.  It is also easy to verify (2).  This takes $O(1)$ time per arc, for a total of $O(m)$ time, given an $O(1)$-time test of the ancestor-descendant relation.  There are several simple $O(1)$-time tests of this relation~\cite{dfs:t}. The most convenient one for us is to number the vertices from $1$ to $n$ in any preorder of $T$, and to compute the number of descendants of each vertex $v$, which we denote by $\mathit{size}(v)$.  If vertices are identified by number, $v$ is an ancestor of $w$ if and only if $v \le w < v + \mathit{size}(v)$.  If $T$ is given by its parent function, we can number the vertices and compute their sizes by first building a list of children for each vertex and then doing a depth-first traversal, all of which takes $O(n)$ time.

The hardest step in verification is to show (3).  We shall prove that a tree with the parent property has the sibling property if and only if it has a low-high order. We prove sufficiency in this section and necessity in Section \ref{sec:low-high-reducible} for reducible graphs, and in Section \ref{sec:low-high-headers} for arbitrary graphs.   Unlike the parent property, which is easy to test, it is not so easy to test for the existence of a low-high order, although it is easy to test if a given order is low-high.  Thus we place the burden of constructing such an order on the algorithm that computes the dominator tree, not on the verification algorithm: the order certifies the correctness of the tree.

\subsection{Construction of Two Strongly Independent Spanning Trees}
\label{sec:strongly-independent-spanning-trees}

To prove that a tree with the parent property and a low-high order has the sibling property, and because it is interesting in its own right, we show how to construct a pair of strongly independent spanning trees, given a tree $T$ with the parent property and a low-high order.

\begin{figure}[h]
\begin{center}
\fbox{
\begin{minipage}[h]{16cm}
\begin{center}
\textbf{Algorithm 1: Construction of Two Strongly Independent Spanning Trees $B$ and $R$}
\end{center}
Let $T$ be a tree with the parent property and a low-high order.
For each vertex $v \not= s$, apply exactly one of the following cases to choose arcs $(b(v), v)$ in $B$ and $(r(v), v)$ in $R$ (if Cases 1 and 2 both apply, apply either one):
\begin{description}\setlength{\leftmargin}{10pt}
\item[Case 1:] There are two arcs $(u, v)$ and $(w, v)$ such that $u < v < w$ in low-high order and $w$ is not a descendant of $v$ in $T$. Choose two such arcs, and set $b(v) = u$ and $r(v) = w$.
\item[Case 2:] $(t(v), v)$ is an arc and there is another arc $(u, v)$ such that $u < v$ in low-high order. Choose such an arc, and set $b(v) = u$ and $r(v) = t(v)$.
\item[Case 3:] $(t(v), v)$ is the only arc entering $v$ from a non-descendant of $v$. Set $b(v) = r(v) = t(v)$.
\end{description}
\end{minipage}
}
\end{center}
\end{figure}

Algorithm 1 applied to the graph in Figure \ref{fig:dominator-tree} with the low-high order in Figure \ref{fig:low-high} produces the trees shown in Figure \ref{fig:low-high}.

\begin{lemma}
Let $T$ be a tree with the parent property and a low-high order. For any vertex $v \not= s$, at least one of Cases 1, 2, and 3 applies. Thus Algorithm 1 defines $b(v)$ and $r(v)$ for all $v \not= s$.
\end{lemma}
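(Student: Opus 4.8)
\textit{Proof proposal.} The plan is to fix an arbitrary vertex $v \neq s$ and show by a short case analysis that one of Cases 1, 2, 3 applies to it. Two elementary properties of the given preorder of $T$ drive everything. First, since $v \neq s$, its parent $t(v)$ is a well-defined vertex different from $v$, hence a proper ancestor of $v$ in $T$; consequently $t(v)$ is not a descendant of $v$, and $t(v) < v$ in the order. Second, every descendant $x$ of $v$ satisfies $x \geq v$ in the order, so any vertex that precedes $v$ in the order is a non-descendant of $v$. (Neither fact uses the parent property; it appears in the hypothesis only because Algorithm 1 invokes it elsewhere.)

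Now split on whether $(t(v), v) \in A$. If $(t(v), v) \notin A$, then the definition of a low-high order forces arcs $(u, v), (w, v) \in A$ with $u < v < w$ and $w$ not a descendant of $v$ in $T$; this is exactly the hypothesis of Case 1, so Case 1 applies. If $(t(v), v) \in A$, further split on whether $(t(v), v)$ is the only arc entering $v$ from a non-descendant of $v$ in $T$. If it is, then Case 3 applies by definition.

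The remaining situation is $(t(v), v) \in A$ together with a second arc $(x, v) \in A$ such that $x \neq t(v)$ and $x$ is a non-descendant of $v$. Then $x \neq v$, so either $x < v$ or $x > v$ in the order. If $x < v$, then $(t(v), v)$ is an arc and $(x, v)$ is a distinct arc with $x < v$, so Case 2 applies. If $x > v$, then the two arcs $(t(v), v)$ and $(x, v)$ satisfy $t(v) < v < x$ with $x$ not a descendant of $v$, so Case 1 applies with $u = t(v)$ and $w = x$.

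These cases are exhaustive, so at least one of Cases 1, 2, 3 applies to every $v \neq s$, and Algorithm 1 therefore defines $b(v)$ and $r(v)$. I do not expect a genuine obstacle: the only step requiring any thought is the last one, where one observes that a second non-descendant source $x$ occurring after $v$ in the preorder lets the tree arc $(t(v), v)$ serve as the ``low'' arc that triggers Case 1; everything else is a direct reading of the definitions.
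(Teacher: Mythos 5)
Your proof is correct and takes essentially the same approach as the paper's: a direct case analysis reading off the definition of a low-high order. The paper's version is slightly more compact because it begins by assuming Case 1 does not apply (which immediately forces $(t(v),v)\in A$ and, later, $u<v$), whereas you split on $(t(v),v)\in A$ first and consequently carry one extra subcase (the $x>v$ branch that folds back into Case 1); the content is the same.
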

\begin{proof}
Let $v \not= s$ be a vertex to which Case 1 does not apply. The definition of a low-high order implies that $(t(v), v) \in A$.  If $(t(v), v)$ is the only arc entering $v$ from a non-descendant of $v$, then Case 3 applies.  Otherwise, there is an arc $(u, v)$ such that $u \not= t(v)$ and $u$ is not a descendant of $v$.  Since Case 1 does not apply, $u < v$ in low-high order, so Case 2 applies.
\end{proof}

\begin{lemma}
\label{lemma:R-high-path}
For any vertex $v \not= s$, there is a path in $R$ from $t(v)$ to $v$ containing only $t(v)$ and vertices no less than $v$ in low-high order.
\end{lemma}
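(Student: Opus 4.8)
The plan is to induct on the low-high rank of $v$ (or, equivalently, to prove the statement for all vertices by strong induction on their position in the low-high order), showing that when we construct the path we only ever step to vertices that are $\geq v$, and that the construction terminates at $t(v)$. The key observation is that in Algorithm~1 the vertex $r(w)$ for a vertex $w$ is chosen to be either $t(w)$ (Cases 2 and 3) or a vertex $>w$ in low-high order that is not a descendant of $w$ in $T$ (Case 1). I would start the path at $v$ itself and walk up through the $R$-parent pointers: $v, r(v), r(r(v)), \dots$, hoping to reach $t(v)$ while staying among vertices $\geq v$ (except the final vertex $t(v)$, which is the parent of $v$ and hence $<v$ in any preorder).

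First I would handle the two easy cases for $v$. If $v$ falls under Case~2 or Case~3 of Algorithm~1, then $r(v) = t(v)$ directly, so the trivial path $v, t(v)$ works and contains only $t(v)$ and the vertex $v$, which is trivially $\geq v$. The interesting case is Case~1, where $r(v) = w$ for some $w > v$ in low-high order with $w$ not a descendant of $v$ in $T$. Now I would argue that $w$ lies in the subtree $T(t(v), v]$ — i.e., $w$ is a descendant of $t(v)$ but not a proper descendant of $v$ — using the parent property: since $(w,v) \in A$, the parent property forces $t(v)$ to be an ancestor of $w$ in $T$, and we are told $w$ is not a descendant of $v$; since $w \neq t(v)$ (because $w > v > t(v)$ in preorder), $w$ is a proper descendant of $t(v)$ not equal to and not below $v$. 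The crucial consequence is that the entire $T$-path from $t(v)$ down to $w$ consists of vertices $\geq w > v$ in low-high order: this is exactly the preorder property — all vertices on $T(t(v), w]$ are descendants of the child $c$ of $t(v)$ that is an ancestor of $w$, and in a preorder the descendants of $c$ other than those below $v$... — hmm, wait, $v$ need not be comparable to $w$ in $T$, so more care is needed here. Let me restate: I want to recurse on $w$, and for the recursion to stay $\geq v$ I need every vertex the recursion visits to be $\geq v$.

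The main obstacle, and the heart of the argument, is controlling the recursion so it does not descend below $v$ in low-high order. Here is the intended mechanism: apply the induction hypothesis to $w$ to get an $R$-path from $t(w)$ to $w$ using only $t(w)$ and vertices $\geq w$; since $w > v$ and $t(w)$ is an ancestor of $w$ in $T$, I must show $t(w) \geq v$ in low-high order or else continue. Actually the clean way is to prove a slightly stronger statement by induction on the number of vertices $> v$: for every vertex $x$ with $t(v) <_{\text{lh}} x$ that is a descendant of $t(v)$ but not a proper descendant of $v$ in $T$, there is an $R$-path from $t(v)$ to $x$ using only $t(v)$ and vertices $\geq$ the minimum of the relevant ranks — but the cleanest formulation, which I would adopt, is to directly prove by induction on the low-high rank of $x$: \emph{for all $x \neq s$ there is an $R$-path from $t(x)$ to $x$ through vertices all $\geq x$ except $t(x)$}; then separately observe that whenever Case~1 picks $r(x) = y > x$, the vertex $y$ satisfies $t(x) = t(y)$ is impossible in general, so instead I follow $r$-pointers: $x \to r(x) = y$, then from $y$ recurse to get a path $y \rightsquigarrow t(y)$ through vertices $\geq y > x$, and then — and this is the key closing step — argue $t(y)$ is a descendant of $t(x)$ in $T$ (parent property applied to $(y,x) \in A$ gives $t(x)$ ancestor of $y$, hence $t(x)$ ancestor of or equal to $t(y)$), so continue from $t(y)$; since each step strictly decreases the $T$-depth within the subtree below $t(x)$ while all intermediate vertices are $> x$, the process reaches $t(x)$ after finitely many steps. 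Termination follows because $T$-depth strictly decreases, and staying $\geq x$ holds because every recursive sub-path lies among vertices $\geq y \geq x$ and every junction vertex $t(y)$ is a proper descendant of $t(x)$ in $T$, hence (being in the subtree rooted at the ancestor-child of $x$, or at least not below $x$ — to be verified using Corollary~\ref{corollary:parent-2} and the preorder property) no less than $x$ in low-high order. Pinning down precisely why each junction vertex $t(y)$ has low-high rank $\geq x$ — distinguishing the case where $t(y)$ is an ancestor of $x$ (then we should have already stopped, so this cannot happen strictly between) from the case where it is incomparable — is the delicate bookkeeping I expect to occupy most of the formal proof.
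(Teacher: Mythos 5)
Your proposal has the right shape (induction on low-high position, parent property to locate $r(v)$ in $T$, chaining of induction hypotheses), and you even touch the key observation — the child $c$ of $t(v)$ that is an ancestor of $w = r(v)$ — but you then abandon that line of reasoning, switch to a ``walk through $R$-parent pointers'' formulation, and end by explicitly admitting you cannot close the argument: ``Pinning down precisely why each junction vertex $t(y)$ has low-high rank $\geq x$ \dots\ is the delicate bookkeeping I expect to occupy most of the formal proof.'' That bookkeeping is precisely the missing step, so as written the proof is incomplete.

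The clean closure is the one you half-found and dropped. Take $v$ to be the \emph{largest} vertex in low-high order for which the statement fails; then $r(v) \neq t(v)$, so $r(v) > v$ and $r(v)$ is not a descendant of $v$. By the parent property, $r(v)$ is a descendant of $t(v)$, hence a descendant of some sibling $x$ of $v$. Because the order is a preorder and $r(v) > v$ with $r(v)$ in the (consecutive) block of $x$'s descendants, the entire block of $x$'s descendants lies above $v$; in particular $x > v$. Now apply the induction hypothesis to \emph{every} vertex on $T[x, r(v)]$ — all of them exceed $v$, so the hypothesis applies — and concatenate the resulting $R$-paths: the path for $x$ runs from $t(x) = t(v)$ to $x$, the path for $x$'s child on $T[x, r(v)]$ runs from $x$ onward, and so on down to $r(v)$. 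Every vertex on the concatenation except $t(v)$ is a descendant of $x$, hence $> v$. Appending the arc $(r(v), v)$ yields the required path for $v$, contradicting the choice of $v$. This replaces your open-ended walk through $R$-pointers by a controlled chain along a single root-to-$r(v)$ path inside $T$'s subtree at $x$; the ``junction vertices'' are exactly the vertices on $T[x, r(v)]$ and are automatically $> v$, so the worry you flagged dissolves. (Also, a small notational slip: by the paper's convention $T(t(v), v]$ denotes only the path from the child of $t(v)$ toward $v$, i.e.\ the single vertex $v$, not the subtree rooted at $t(v)$, and the vertices on $T(t(v), w]$ are $\geq c$, not $\geq w$ as you wrote.)
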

\begin{proof}
Suppose the lemma is false.  Let $v$ be the largest vertex in low-high order for which it fails.  Then $r(v) \not= t(v)$, so $r(v) > v$ and $r(v)$ is not a descendant of $v$.  Let $x$ be the sibling of $v$ that is an ancestor of $r(v)$.  Since $r(v) > v$ and the order is a preorder, all descendants of $x$ are greater than $v$.  The choice of $v$ implies that the lemma holds for $x$ and all its descendants.  It follows that there is a path in $R$ from $t(x)$ to $x$ to $r(v)$ that contains only $t(x)$ and vertices no less than $x$. Adding $(r(v), v)$ to this path gives a path from $t(v) = t(x)$ to $v$ satisfying the lemma, a contradiction.
\end{proof}

\begin{theorem}
\label{theorem:B-R-strong}
$B$ and $R$ are strongly independent spanning trees rooted at $s$.
\end{theorem}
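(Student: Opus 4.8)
The plan is to verify the three properties in the definition of strongly independent spanning trees, working from the structural facts established by Algorithm~1 and Lemma~\ref{lemma:R-high-path}. First I would check that $B$ and $R$ are indeed spanning trees rooted at $s$: every vertex $v \neq s$ is assigned exactly one arc $(b(v),v)$ entering it in $B$ and one arc $(r(v),v)$ in $R$ (by the preceding lemma that one of the three cases always applies), so $B$ and $R$ each have $n-1$ arcs. To see that each is a tree, note that in Cases~1 and~3 the chosen tail $b(v)$ (resp.\ $r(v)$) is a non-descendant of $v$ in $T$, and in Case~2 one tail is $t(v)$ and the other has low-high rank less than $v$, hence is a non-descendant of $v$ (since the order is a preorder of $T$). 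So in both $B$ and $R$ every vertex's chosen parent is a non-descendant of it in $T$; following parent pointers strictly decreases a quantity (e.g.\ low-high rank never increases, and eventually one reaches $s$), so there are no cycles and each of $B$, $R$ is a spanning tree rooted at $s$.

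Next I would identify, for each $v$, the paths from $s$ to $v$ in $B$ and in $R$. For $B$: by construction $b(v) < v$ in low-high order whenever $b(v) \neq t(v)$, and when $b(v) = t(v)$ (Case~3) we have $t(v) < v$ as well; so the $B$-path from $s$ to $v$ is strictly increasing in low-high order and, by the parent property together with Corollary~\ref{corollary:parent}, consists only of dominators of $v$ interleaved appropriately — more precisely, I would show by induction down the $B$-path that every vertex on $B[s,v]$ is an ancestor of $v$ in $T$, using the parent property: each arc $(b(x),x)$ on the path has $b(x)$ a non-descendant of $t(x)$ only if\ldots — actually the cleaner route is: since $b(v)$ is always $\le v$ in low-high order and is a non-descendant of $v$, and $T$ has the parent property, $(b(v),v)\in A$ forces $t(v)$ to be an ancestor of $b(v)$; iterating, $B[s,v]\subseteq T[s,v]$, so by Corollary~\ref{corollary:parent} every vertex on $B[s,v]$ dominates $v$. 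For $R$: Lemma~\ref{lemma:R-high-path} gives, for each $v$, a path in $R$ from $t(v)$ to $v$ through vertices $\ge v$; concatenating these up the tree $T$, the $R$-path from $s$ to $v$ passes through exactly $T[s,t(v)]$ (the ancestors of $v$) followed by a "high detour" inside the subtree of $v$ \ldots no: the concatenation of the segments $R[t(x),x]$ for $x$ ranging over $T[s,v]$ shows that the ancestors of $v$ in $T$ all lie on $R[s,v]$, and every other vertex on $R[s,v]$ is $\ge v$ in low-high order.

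Now for strong independence, take any two vertices $v$ and $w$; without loss of generality assume $v \le w$ in low-high order. I claim $B[s,v]$ and $R[s,w]$ share only the common dominators of $v$ and $w$. Every common dominator of $v$ and $w$ lies on $B[s,v]$ (shown above) and on $R[s,w]$ (it is an ancestor of $w$ in $T$, hence on $R[s,w]$), so the common dominators are contained in the intersection. Conversely, suppose $z$ lies on both $B[s,v]$ and $R[s,w]$. Then $z$ dominates $v$ (from the $B$-side), so $z$ is an ancestor of $v$ in $T$ by Corollary~\ref{corollary:parent}; in particular $z \le v \le w$ in low-high order. On the $R$-side, $z$ is either an ancestor of $w$ in $T$ or satisfies $z \ge w$; the latter combined with $z \le w$ forces $z = w$, and $w$ is a common dominator of $v$ and $w$ only if $w$ dominates $v$ — but $z=w$ being an ancestor of $v$ in $T$ gives exactly that. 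Otherwise $z$ is an ancestor of $w$ in $T$, hence dominates $w$; combined with $z$ dominating $v$, $z$ is a common dominator of $v$ and $w$, as required. This establishes strong independence, and taking $v = w$ gives independence as a special case.

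The main obstacle I anticipate is the precise bookkeeping in the two structural claims: that $B[s,v]$ consists only of dominators of $v$ (needs a careful induction invoking the parent property at each step, since an arc $(b(x),x)$ with $b(x)<x$ does not a priori put $b(x)$ on $T[s,x]$ — one must chase the parent property up from $x$), and that stitching the $R$-segments from Lemma~\ref{lemma:R-high-path} together along $T[s,v]$ yields a simple path whose only "extra" vertices are $\ge v$. Both are conceptually routine given the tools already developed, but the interplay between the low-high order (a preorder of $T$) and ancestry in $T$ has to be handled carefully, particularly the boundary case where a shared vertex equals $w$ itself.
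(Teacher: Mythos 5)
Your proof has two gaps; the second is substantive.

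First, the claim that ``low-high rank never increases'' along $R$-parent pointers is false: Case~1 sets $r(v)=w>v$, so the rank strictly \emph{increases} at that step. The observation that $r(v)$ is a non-descendant of $v$ in $T$ does not by itself exclude cycles in $R$. The paper instead gets that $R$ is a spanning tree from Lemma~\ref{lemma:R-high-path}: every vertex is reachable from $s$ in $R$, and $R$ has $n-1$ arcs.

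Second, the claim $B[s,v]\subseteq T[s,v]$ is false, and the inference you offer for it does not go through: the parent property applied to $(b(v),v)$ gives that $t(v)$ is an ancestor of $b(v)$, not that $b(v)$ is an ancestor of $v$; indeed $b(v)$ can be a sibling of $v$ or lie in the subtree of a sibling. Concretely, take $G$ on $\{s,a,b\}$ with arcs $(s,a),(s,b),(a,b),(b,a)$, flat dominator tree, low-high order $s<a<b$: Case~2 applies to $b$, giving $b(b)=a$, so $B[s,b]=(s,a,b)$ and $a$ does not dominate $b$. Your converse step (``$z$ on $B[s,v]$ dominates $v$'') therefore fails. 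The approach \emph{can} be repaired: the correct structural statement, which needs Corollary~\ref{corollary:parent-2} rather than Corollary~\ref{corollary:parent}, is that $B[s,v]$ consists of the ancestors of $v$ plus extra vertices all $<v$ (and symmetrically $R[s,w]$ consists of the ancestors of $w$ plus extras all $>w$); the one residual case, an extra $z<v$ of $B[s,v]$ that is an ancestor of $w$, is impossible because it forces $w<v$. With that repair your global path-characterization argument goes through, and even yields independence as the $v=w$ special case. The paper's proof is organized differently: it never characterizes the full paths $B[s,v]$ or $R[s,w]$, but takes a minimal counterexample, uses Corollary~\ref{corollary:parent-2} to push the offending shared vertex down into the segments below $t(v)$ (resp.\ below the nearest common ancestor of $v$ and $w$), and then contradicts the facts that $B[t(v),v]$ contains only $v$ and vertices $<v$ while, by Lemma~\ref{lemma:R-high-path}, $R[t(v),v]$ contains only $t(v)$ and vertices $\ge v$.
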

\begin{proof}
First we prove that $B$ and $R$ are spanning trees rooted at $s$, then that they are independent, and finally that they are strongly independent. Since each vertex $v \not= s$ has an entering arc $(b(v), v)$ with $b(v) < v$, $B$ is a spanning tree rooted at $s$. By Lemma \ref{lemma:R-high-path}, every vertex is reachable from $s$ in $R$, so $R$ is also a spanning tree rooted at $s$.

Suppose $B$ and $R$ are not independent. Let $v$ be the minimum vertex such that $B[s, v]$ and $R[s, v]$ share a vertex other than a dominator of $v$.  By Corollary \ref{corollary:parent}, $t(v)$ dominates $v$, so $B[s, v]$ and $R[s, v]$ both contain $t(v)$.  By the choice of $v$, $B[s, t(v)]$ and $R[s, t(v)]$ share only dominators of $t(v)$.  By Corollary \ref{corollary:parent-2}, no vertices on $B[s, t(v)]$ or $R[s, t(v)]$ are proper descendants of $t(v)$, but all vertices on $B[t(v), v]$ and $R[t(v), v]$ are descendants of $t(v)$, so $B[t(v), v]$ and $R[t(v), v]$ must share a vertex other than $t(v)$ and $v$.  But $B[t(v), v]$ contains only $v$ and vertices less than $v$, and by Lemma \ref{lemma:R-high-path}, $R[t(v), v]$ contains only $t(v)$ and vertices no less than $v$, so $B[t(v), v]$ and $R[t(v), v]$ share only $t(v)$ and $v$, a contradiction.  Thus $B$ and $R$ are independent.

Suppose that $B$ and $R$ are not strongly independent; let $v$ and $w$ be vertices such that $v < w$ and $B[s, v]$ and $R[s, w]$ share a vertex other than a common dominator of $v$ and $w$.  Let $u$ be the nearest common ancestor of $v$ and $w$ in $T$, which dominates both. Since $B$ and $R$ are independent, $B[s, u]$ and $R[s, u]$ share only dominators of $u$, which are common dominators of $v$ and $w$.
By Corollary \ref{corollary:parent-2}, $B[s,u]$ and $R[s,u]$ contain no proper descendants of $u$, but all vertices on $B[u,v]$ and $R[u,w]$ are descendants of $u$, so $B[s,u]$ and $R[u,w]$ contain only $u$ in common, as do $R[s,u]$ and $B[u,v]$.  If follows that $B[u, v]$ and $R[u, w]$ share a vertex other than $u$. This implies that $v$ and $w$ are unrelated.
Let $x$ and $y$ be the children of $u$ that are ancestors of $v$ and $w$, respectively.  All descendants of $x$ are less than all descendants of $y$.  Path $B[u, v]$ contains only vertices less than $y$, and $R[u, w]$ contains only $u$ and vertices no less than $y$, so $B[u, v]$ and $R[u, w]$ share only $u$, a contradiction.  Thus $B$ and $R$ are strongly independent.
\end{proof}

\begin{remark}
For two distinct vertices $v$ and $w$, the low-high order tells us which pair of paths, $B[s, v]$ and $R[s, w]$, or $R[s, v]$ and $B[s, w]$, share only the common dominators $v$ and $w$: the former if $v < w$, the latter if $v > w$.
\end{remark}
\\

Algorithm 1 runs in $O(m)$ time, given an $O(1)$-time test of the ancestor-descendant relation: for each vertex $v \not= s$, we examine the entering arcs until finding two that allow Case 1 or 2 to be applied, or running out of entering arcs, allowing Case 3 to be applied: $v$ must have at least one entering arc from a non-descendant; if there is exactly one such arc, it must be $(t(v), v)$.  To test the ancestor-descendant relation, we use the $O(1)$-time test described above, with the low-high order serving as the preorder.

Case 3 applies only when $(t(v), v)$ is a bridge (all paths from $s$ to $v$ contain $(t(v), v)$), so $B$ and $R$ are arc-disjoint except for the bridges. Tarjan~\cite{st:t} previously gave a near-linear-time algorithm, subsequently improved to linear time~\cite{dominators:bgkrtw}, to construct a pair of spanning trees that are arc-disjoint except for the bridges.  His construction need not produce independent spanning trees, however. If we are willing to allow $B$ and $R$ to share non-bridges, then we can simplify Algorithm 1 by combining Cases 2 and 3 into one case:
\begin{description}\setlength{\leftmargin}{10pt}
\item[Case 2':] $(t(v), v)$ is an arc. Set $b(v) = r(v) = t(v)$.
\end{description}

\begin{theorem}
\label{theorem:parent-low-high}
If $T$ has the parent property and has a low-high order, then $T$ has the sibling property, and hence $T=D$.
\end{theorem}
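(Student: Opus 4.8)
The plan is to derive a contradiction using the strongly independent spanning trees produced by Algorithm~1. Suppose $T$ has the parent property and a low-high order but fails the sibling property. Then there are distinct siblings $v$ and $w$ in $T$, with common parent $p=t(v)=t(w)$, such that $v$ dominates $w$. Apply Algorithm~1 to obtain $B$ and $R$; the algorithm is well defined, and by Theorem~\ref{theorem:B-R-strong} $B$ and $R$ are spanning trees of $G$ rooted at $s$, each of whose arcs is an arc of $G$. The idea is that $v$ is a \emph{proper} $T$-descendant of $p$ but a dominator of its sibling $w$, so $v$ must lie on both the $R$-path and the $B$-path from $p$ to $w$; yet the $R$-path from $p$ to $w$ consists of $p$ together with vertices no less than $w$ in low-high order, while the $B$-path from $p$ to $w$ consists of vertices no greater than $w$, which pins $v$ simultaneously above and below $w$.

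First I would collect the structural facts. Since $p$ is a $T$-ancestor of both $v$ and $w$, it is an ancestor of each of them in $R$ (iterate Lemma~\ref{lemma:R-high-path} up $T$: each $t(x)$ is an $R$-ancestor of $x$) and in $B$ (from the construction, $b(x)$ is a $T$-descendant of $t(x)$ and $b(x)<x$, so a straightforward induction in low-high order shows every $T$-ancestor of $x$ is a $B$-ancestor of $x$). Hence $R[s,p]$ is a common prefix, in the tree $R$, of $R[s,v]$ and $R[s,w]$, and $B[s,p]$ is a common prefix, in $B$, of $B[s,v]$ and $B[s,w]$. Because $p$ is a proper ancestor of $v$, it follows that $v\notin R[s,p]$ and $v\notin B[s,p]$ (otherwise $v$ would equal $p$). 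Finally, $B[s,x]$ is strictly increasing in low-high order for every $x$, since along it each vertex equals $b(\cdot)$ of the next and $b$ strictly decreases; in particular every vertex of $B[s,x]$ is at most $x$.

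Now the contradiction. Since $v$ dominates $w$, every path from $s$ to $w$ contains $v$, so $v\in R[s,w]=R[s,p]\cdot R[p,w]$; as $v\notin R[s,p]$ and $v\ne p$, we get $v\in R[p,w]\setminus\{p\}$, whence $v\ge w$ in low-high order by Lemma~\ref{lemma:R-high-path}, and $v>w$ because $v\ne w$. Symmetrically $v\in B[s,w]=B[s,p]\cdot B[p,w]$, and $v\notin B[s,p]$, $v\ne p$ give $v\in B[p,w]\setminus\{p\}$; since $B[s,w]$ is increasing, $v\le w$, hence $v<w$ because $v\ne w$. This contradicts $v>w$. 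Therefore $T$ has the sibling property, and Theorem~\ref{theorem:parent-sibling-2} yields $T=D$.

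The only step needing care is the claim that $R[s,p]$ and $B[s,p]$ are genuine prefixes of the paths to $v$ and $w$ — equivalently, that $T$-ancestors are ancestors in $R$ and in $B$. For $R$ this is Lemma~\ref{lemma:R-high-path} applied repeatedly, and for $B$ it is the easy induction above; both are already implicit in the proof of Theorem~\ref{theorem:B-R-strong}, where $B[u,v]$ and $R[u,w]$ are written for $T$-ancestors $u$, so it would be cleanest to record this once as a preliminary observation. Everything else is immediate from Lemma~\ref{lemma:R-high-path}, Corollary~\ref{corollary:parent}, and the inequality $b(x)<x$.
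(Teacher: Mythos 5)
Your proof is correct and relies on the same machinery as the paper's: Algorithm~1's trees $B$ and $R$, Lemma~\ref{lemma:R-high-path}, and the fact that $b(x)<x$ in low-high order. The paper argues directly, fixing siblings $v<w$ with common parent $u$ and exhibiting dominance-defeating paths ($B[s,v]$ avoids $w$; $B[s,u]$ followed by $R[u,w]$ avoids $v$), whereas you argue by contradiction, pinning the alleged dominator $v$ simultaneously strictly above $w$ via $R[s,w]$ and strictly below $w$ via $B[s,w]$. These are the same argument packaged differently. One small dividend of your formulation is that it stays within the pure tree paths $B[s,w]$ and $R[s,w]$ rather than gluing a $B$-prefix to an $R$-suffix; the price is the preliminary observation that $T$-ancestors are ancestors in both $B$ and $R$, which you state and justify correctly (for $R$ it is iterated Lemma~\ref{lemma:R-high-path}; for $B$ it is the induction on low-high order using $b(x)<x$ together with the fact that $b(x)$ is a $T$-descendant of $t(x)$).
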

\begin{proof}
Apply Algorithm 1 to construct two strongly independent spanning trees $B$ and $R$ rooted at $s$. Let $v$ and $w$ be siblings with common parent $u$ in $T$. Assume without loss of generality that $v<w$. The path $B[s,v]$ avoids $w$, so $w$ does not dominate $v$. The path $B[s,u]$ followed by $R[u,w]$ contains no vertices greater than $u$ and less than $w$ and hence avoids $v$, so $v$ does not dominate $w$. Thus $T$ has the sibling property. By Theorem \ref{theorem:parent-sibling-2} $T=D$.
\end{proof}

A question raised by Algorithm 1 is whether \emph{every} pair of strongly independent spanning trees is the result of applying Algorithm 1 to some low-high order of $D$. The answer is no. Consider the graph $G$ and its spanning trees $B$ and $R$ shown in Figure \ref{fig:low-high-strong}. The dominator tree $D$ of $G$ is flat ($v \not= s$ implies $d(v) = s$),  and $B$ and $R$ are strongly independent.  Deleting arcs in $D$ from $G$ and reversing arcs in $R$ results in the cycle $\Gamma$ shown in Figure \ref{fig:low-high-strong}.  If $B$ and $R$ were constructible by Algorithm 1 from some low-high order of $D$, $\Gamma$ would be acyclic.

\begin{figure}
\begin{center}
\scalebox{1}[1]{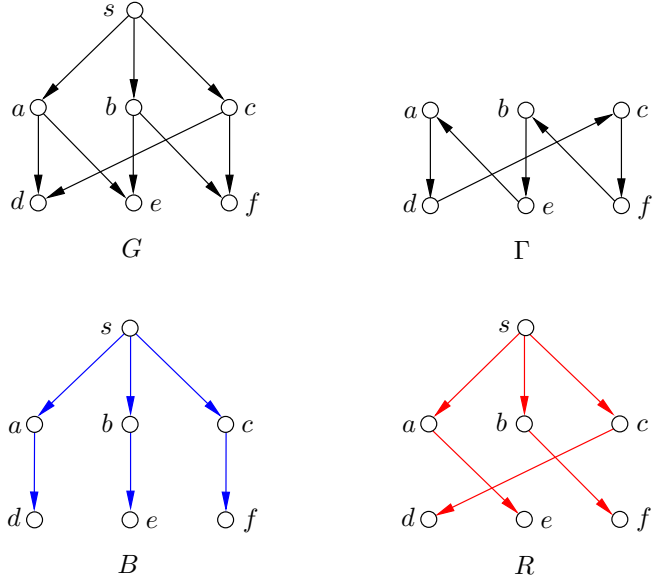}
\end{center}
\caption{\label{fig:low-high-strong} Two strongly independent spanning trees not constructable by Algorithm 1.}
\end{figure}

\subsection{Verification of a Dominator Tree with a Low-High Order}
\label{sec:verification-low-high}

Now we turn to the problem of verifying that a vertex order is low-high.
Let tree $T$ be specified by its parent function, with a vertex order given as a numbering from $1$ to $n$. We can verify that the ordering is low-high as follows. Construct lists of the children of each vertex in increasing order, by beginning with empty lists and adding each vertex $v \not= s$ in increasing order to the back of the list for $t(v)$.  Do a depth-first traversal of $T$ to verify that the corresponding preorder is the same as the given order. During the traversal, compute the size of each vertex. After the traversal, test that the order is low-high by examining the arcs entering each vertex and verifying the existence of the one or two arcs needed to make the order low-high, using the numbers and sizes to test the ancestor-descendant relation in $O(1)$ time.

Thus we obtain the following algorithm to verify a dominator tree with a low-high order:
\begin{figure}[h]
\begin{center}
\fbox{
\begin{minipage}[h]{16cm}
\begin{center}
\textbf{Algorithm 2: Verification of a Dominator Tree $D$ with a Low-High Order}
\end{center}
\begin{description}\setlength{\leftmargin}{10pt}
\item[Step 1:] Verify that the given ``tree'' $D$ is actually a tree.
\item[Step 2:] Construct lists of the children of each vertex in increasing order. Do a depth-first traversal of $D$ to verify that the preorder generated by the search is the same as the given order and to compute the size of each vertex.
\item[Step 3:] Check that the arcs satisfy the parent property and that each vertex has the one or two entering arcs needed to make the order low-high, using the numbers and sizes to test the ancestor-descendant relation in $O(1)$ time.
\end{description}
\end{minipage}
}
\end{center}
\end{figure}

Steps 1 and 2 take $O(n)$ time; Step 3 takes $O(m)$ time.

We conclude this section with a lemma about the structure of strongly connected subgraphs that we shall need in Section \ref{sec:low-high-headers}. A subgraph of $G$ is \emph{strongly connected} if there is a path from any vertex in the subgraph to any other vertex in the subgraph containing only vertices in the subgraph.

\begin{lemma}
\label{lemma:parent-strongly-connected}
Let $T$ be a tree with the parent property, and let $S$ be the set of vertices of a strongly connected subgraph of $G$. Then $S$ consists of a set of siblings in $T$ and possibly some of their descendants in $T$.
\end{lemma}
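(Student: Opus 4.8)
The plan is to isolate the ``top level'' of $S$ and show it is a set of siblings. Call a vertex $u\in S$ a \emph{root of $S$} if $u$ is not a proper descendant in $T$ of any vertex of $S$, and let $R_S$ be the set of roots of $S$. Two preliminary observations come first. If $|S|\le 1$ the claim holds trivially (a one‑element set is vacuously a set of siblings), so I may assume $|S|\ge 2$; then no vertex of $S$ equals $s$, because a path inside $S$ reaching $s$ from another vertex of $S$ would require an arc entering $s$, and $G$ has none. Second, for any $v\in S$ the ancestors of $v$ that lie in $S$ form a chain with a topmost element, and that element is a root of $S$ (anything in $S$ strictly above it would be a higher $S$‑ancestor of $v$); consequently every vertex of $S$ is a descendant of some vertex of $R_S$, and $S\setminus R_S$ consists of proper descendants of vertices of $R_S$. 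So it remains only to prove that the vertices of $R_S$ have a common parent in $T$.

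The key point is that if $u\in R_S$ then $t(u)\notin S$, since otherwise $u$ would be a proper descendant of $t(u)\in S$. Now take distinct $u_1,u_2\in R_S$. Since $S$ is strongly connected there is a path inside $S$ from $u_1$ to $u_2$; this path lies in $S$ and hence avoids $t(u_2)$, so by Lemma~\ref{lemma:parent} applied with the lemma's ``$u$'' equal to $u_1$ and ``$v$'' equal to $u_2$ (valid since $u_2\ne s$), $u_1$ must be a descendant of $t(u_2)$ — in fact a proper descendant, as $u_1\in S$ but $t(u_2)\notin S$. Equivalently, $t(u_2)$ is a proper ancestor of $u_1$, which is the same as saying $t(u_2)$ is a (weak) ancestor of $t(u_1)$. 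Running the symmetric argument along a path inside $S$ from $u_2$ to $u_1$ gives that $t(u_1)$ is an ancestor of $t(u_2)$. Two vertices each of which is an ancestor of the other must coincide, so $t(u_1)=t(u_2)$; thus all roots of $S$ share a parent, and the lemma follows.

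I do not expect a real obstacle here: the proof needs only Lemma~\ref{lemma:parent} plus elementary tree facts. The productive step is choosing the right framing, namely working with the roots of $S$ and exploiting that a root's parent cannot lie in $S$, which turns the question about two roots into a single application of Lemma~\ref{lemma:parent} that pins down the ancestor relation between their parents. The only care needed is to dispatch the degenerate cases $|S|\le 1$ and $s\in S$, for which the statement is vacuous.
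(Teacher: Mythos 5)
Your proof is correct. It takes a slightly different route from the paper's: the paper begins from the nearest common ancestor $x$ of $S$ in $T$ and splits into cases on whether $x \in S$, arguing that when $x \notin S$ each child of $x$ with an $S$-descendant must itself lie in $S$ because any path crossing into that child's subtree must pass through the child. You instead work from inside $S$, identifying its maximal ("root") elements and showing directly that any two of them share a parent via two symmetric applications of Lemma~\ref{lemma:parent}. The two arguments are dual views of the same fact --- a path inside $S$ can only enter a subtree through its root --- and both are short, but your framing avoids the case split and makes the appeal to Lemma~\ref{lemma:parent} explicit where the paper's phrasing is terse (and, as written, appears to say the path must contain $x$ when it really means the child). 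You also flag the degenerate cases $|S|\le 1$ and $s\in S$ explicitly, which the paper handles only implicitly. One small point worth noting in a final write-up: you should remark that the roots are pairwise $T$-unrelated (immediate from the definition of root), so that "share a parent" really does make them a set of siblings rather than a chain; your argument establishes this anyway since related roots would contradict rootness, but it is worth a sentence.
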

\begin{proof}
Let $x$ be the nearest common ancestor in $T$ of the vertices in $S$. If $x \in S$ the lemma holds: $S$ contains $x$ and possibly some of its descendants in $T$.  If $x \not \in S$, there are at least two children of $x$ in $T$ that have vertices in $S$ as descendants, and all vertices in $S$ are descendants of such children.  Any path from a non-descendant of such a child of $x$ to a descendant must contain $x$ by the parent property. Thus any such child must be in $S$.
\end{proof}

\section{The Derived Graph}
\label{sec:derived-graph}

Theorem \ref{theorem:parent-low-high} states that a tree with the parent property and a low-high order has the sibling property and hence is $D$.  It remains to prove the converse, namely that $D$ has a low-high order, and to develop fast algorithms to find such an order.  We do this in the next three sections.  In this section we introduce the \emph{derived graph}~\cite{path:tarjan81}, which makes our task easier by in effect reducing the problem of finding a low-high order to the case of a flat dominator tree.

Let $T$ be a tree with the parent property.  (Tree $T$ could be $D$ or a tree claimed to be $D$.)  By the definition of the parent property, if $(v, w)$ is an arc, the parent $t(w)$ of $w$ is an ancestor of $v$ in $T$. The \emph{derived arc} of $(v, w)$ is null if $w$ is an ancestor of $v$, $(v', w)$ otherwise, where $v' = v$ if $v = t(w)$, $v'$ is the sibling of $w$ that is an ancestor of $v$ if $v \not= t(w)$.   The \emph{derived graph} $G'$ is the graph with vertex set $V$ and arc set $A' = \{ (v', w) \ | \  (v', w) \mbox{ is the non-null derived arc of some arc in } A \}$.  See Figure \ref{fig:derived-graph}. This definition differs from the original \cite{path:tarjan81} in that it omits self-loops (arcs of the form $(v,v)$), which are irrelevant for our purposes.

\begin{figure}[t]
\begin{center}
\scalebox{0.8}[0.8]{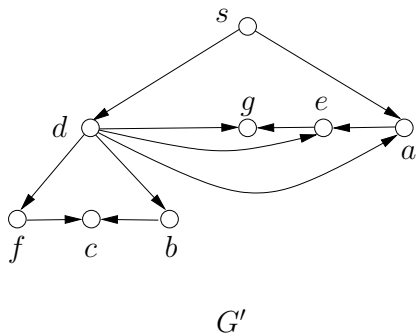}
\end{center}
\caption{\label{fig:derived-graph} The derived graph of the flow graph of Figure \ref{fig:dominator-tree}.}
\end{figure}

\begin{lemma}
\label{lemma:derived-parent}
Tree $T$ has the parent property in $G'$.
\end{lemma}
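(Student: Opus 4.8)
The plan is to verify the defining condition of the parent property directly: for every arc $(v', w) \in A'$ of the derived graph $G'$, we must show that $t(w)$ is an ancestor of $v'$ in $T$. Every such arc $(v', w)$ arises as the non-null derived arc of some arc $(v, w) \in A$, where by construction $w$ is not an ancestor of $v$ in $T$, and $v'$ is either $v$ itself (when $v = t(w)$) or the unique child of $t(w)$ that is an ancestor of $v$ (when $v \neq t(w)$).

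First I would dispose of the case $v' = v = t(w)$: here the requirement is that $t(w)$ be an ancestor of $t(w)$, which holds trivially since a vertex is its own ancestor. The remaining case is $v' \neq v$, so $v'$ is the sibling of $w$ (equivalently, the child of $t(w)$) that is an ancestor of $v$. Then $t(v') = t(w)$ by the choice of $v'$, so $t(w)$ is the parent — hence an ancestor — of $v'$ in $T$. That is exactly what is needed. So in fact the verification is essentially immediate from the definitions, with no real obstacle: the key point is simply that the derived-arc construction was rigged so that the source $v'$ of each derived arc is always a vertex whose parent in $T$ is $t(w)$, namely either $t(w)$ itself or a child of $t(w)$.

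The only subtlety worth spelling out is why $v'$ is well-defined and why $t(w)$ is genuinely an ancestor of $v$ in $G$ in the first place: this is guaranteed because $T$ has the parent property \emph{in $G$}, so for the original arc $(v,w)\in A$ we know $t(w)$ is an ancestor of $v$ in $T$; since $w$ is not an ancestor of $v$ (else the derived arc would be null), and $t(w)$ is an ancestor of $v$, the path in $T$ from $t(w)$ down to $v$ passes through a uniquely determined child of $t(w)$, which is $v'$ — and this child is a sibling of $w$. Once this is noted, the ancestor condition $t(w)$ is an ancestor of $v'$ follows because $v'$ is either $t(w)$ or a child of $t(w)$. Hence $T$ has the parent property in $G'$. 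I expect no step here to present a genuine obstacle; the lemma is a direct unwinding of the definition of the derived graph.
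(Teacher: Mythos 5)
Your proof is correct and takes essentially the same approach as the paper's one-line argument (``each derived arc leads from a vertex to one of its children or siblings''), merely spelling out the two cases and the well-definedness of $v'$ explicitly.
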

\begin{proof}
Each derived arc leads from a vertex to one of its children or siblings.
\end{proof}

\begin{lemma}
\label{lemma:derived-simple-path}
If $P$ is a simple path from $s$ to $v$ in $G$, there is a simple path from $s$ to $v$ in $G'$ containing only vertices on $P$.
\end{lemma}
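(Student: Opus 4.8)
The plan is to walk along the path $P$ from $s$ to $v$ and, at each arc, either keep the arc (if its derived arc survives and respects the current position) or skip forward along $P$ to the place where the derived arc lands. More precisely, write $P = (s = x_0, x_1, \ldots, x_k = v)$. I would construct the desired path $P'$ in $G'$ greedily: start at $s$, and having reached some $x_i$ on $P$, look at the arc $(x_i, x_{i+1})$ of $P$. Its derived arc is either null (when $x_{i+1}$ is an ancestor of $x_i$ in $T$) or $(x_i', x_{i+1})$ where $x_i'$ is $x_i$ itself, or the sibling of $x_{i+1}$ that is an ancestor of $x_i$. The key structural input is Corollary \ref{corollary:parent-2}: since $P$ is simple, $t(v)$ divides $P$ so that… actually, more to the point, for any arc $(x_i, x_{i+1})$ with $x_{i+1}$ \emph{not} an ancestor of $x_i$, the vertex $x_i'$ appears on $P$ — indeed $x_i'$ is an ancestor of $x_i$ in $T$ (either equal to $x_i$ or a sibling of $x_{i+1}$ lying above $x_i$), and I want to argue $x_i'$ occurs on $P$ at or before position $i$. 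For this I would use the parent property in the form of Lemma \ref{lemma:parent}: the prefix $P[s, x_i]$ is a path from $s$ to $x_i$ and must contain $t(x_i')$'s relevant ancestors; combined with simplicity of $P$, the vertex $x_i'$ itself lies on $P[s, x_i]$.

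The heart of the argument is then a careful induction. Process the arcs of $P$ in order. Maintain a current vertex $y_j$ on $P$ (initially $y_0 = s$), which is the last vertex of the partial path $P'$ built so far, together with the invariant that $y_j$ equals some $x_{p(j)}$ on $P$ with the indices $p(j)$ strictly increasing. At step $j$, consider the first arc of $P$ strictly after $x_{p(j)}$ whose head is not an ancestor of its tail in $T$ — call it $(x_i, x_{i+1})$ — and if its derived arc is $(x_i', x_{i+1})$, I claim $x_i'$ lies on $P$ between position $p(j)$ and position $i$ inclusive, so in particular $x_i' = x_{p(j)} = y_j$ or $x_i'$ appears later; I then want $x_i' = y_j$. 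To get exactly that, I would arrange things so the current vertex $y_j$ is always chosen as far along $P$ as possible, i.e., I skip over all arcs whose derived arc is null or whose derived tail is $y_j$ itself, appending $(y_j, x_{i+1})$ to $P'$ and setting $y_{j+1} = x_{i+1}$. One checks $p(j+1) = i+1 > p(j)$, so the indices strictly increase, $P'$ visits a subsequence of vertices of $P$ (hence is simple), and every arc of $P'$ is a genuine arc of $G'$. Since $P$ reaches $v = x_k$ and $v$ is never skipped (the last arc into $v$ has head $v$, not an ancestor of its tail, so its derived arc is non-null and lands on $v$), the construction terminates at $v$.

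I expect the main obstacle to be the bookkeeping that guarantees $x_i' = y_j$ at each step, i.e., that the derived tail of the next "surviving" arc is exactly the current endpoint of $P'$ and not some intermediate vertex of $P$ strictly between $y_j$ and $x_i$. The resolution is that intermediate vertices of $P$ strictly between $y_j$ and $x_i$ are, by the choice of $x_i$ as the \emph{first} surviving arc after $y_j$, all heads of arcs that are ancestors of their tails — so by Corollary \ref{corollary:parent-2} applied along the simple subpath, or directly by Lemma \ref{lemma:parent} and simplicity of $P$, the ancestor $x_i'$ of $x_i$ that lies on $P$ cannot be one of those strictly-intermediate vertices (they are $T$-descendants along a chain) and must therefore coincide with $y_j$. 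Everything else (simplicity, each edge lying in $A'$, termination at $v$) then follows routinely from the strictly increasing index invariant.
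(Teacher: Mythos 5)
Your forward greedy construction has a genuine gap: the derived tail $x_i'$ of the next surviving arc need not equal the current endpoint $y_j$ of $P'$, and in fact can lie strictly \emph{before} $y_j$ on $P$. You assert that ``$x_i'$ lies on $P$ between position $p(j)$ and position $i$ inclusive,'' but the reasoning you give only shows that $x_i'$ is an ancestor of $y_j$ in $T$ (since $x_i'$ is an ancestor of $x_i$, and after skipping null arcs $x_i$ is an ancestor of $y_j$). By Corollary \ref{corollary:parent} this means $x_i'$ dominates $y_j$ and lies on $P[s,y_j]$ --- at position \emph{at most} $p(j)$, not at least $p(j)$. Concretely, take $V=\{s,a,b,c\}$, $A=\{(s,a),(s,b),(a,c),(c,b)\}$, so $d(a)=d(b)=s$, $d(c)=a$, and $P=(s,a,c,b)$. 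The derived arcs of $(s,a)$, $(a,c)$, $(c,b)$ are $(s,a)$, $(a,c)$, $(a,b)$ respectively. Your construction appends $(s,a)$ then $(a,c)$, reaching $y_2=c$ with $p(2)=2$; but the derived tail of the remaining arc $(c,b)$ is $a\ne c$, sitting at position $1<2$. The desired path is $(s,a,b)$, which \emph{omits} $c$ --- something a monotone forward sweep committed to strictly increasing indices cannot produce.

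The paper's proof sidesteps this by inducting \emph{backward} on the number of arcs: given the last arc $(u,v)$ of $P$, its derived tail $u'$ is an ancestor of $u$ in $T$, hence dominates $u$ and lies on $P[s,u]$; the induction hypothesis then applies to the strictly shorter prefix $P[s,u']$, and the vertices of $P$ strictly between $u'$ and $u$ are silently discarded. A corrected forward version of your idea would need a backtracking rule (when $x_i'\ne y_j$, truncate $P'$ back to $x_i'$, which by the argument above is already on $P'$) together with a separate termination argument; the backward induction gets both for free.
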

\begin{proof}
By induction on the number of arcs on $P$.  The lemma is immediate if $P$ has no arcs.  Suppose $P$ has at least one arc.  Let $(u, v)$ be the last arc on $P$.  Since $P$ is simple, $u$ cannot be a descendant of $v$ by Corollary \ref{corollary:parent}. Thus $(u, v)$ has a non-null derived arc $(u', v)$.  Since $u'$ is an ancestor of $u$ in $T$, $u'$ is on $P$ by Corollary \ref{corollary:parent}.  By the induction hypothesis, there is a simple path $P'$ from $s$ to $u'$ in $G'$ containing only vertices on the part of $P$ from $s$ to $u'$.  Adding $(u', v)$ to this path gives the desired path $P'$.
\end{proof}

\begin{corollary}
\label{corollary:derived-flowgraph}
Graph $G'$ is a flow graph.
\end{corollary}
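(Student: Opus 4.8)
The plan is to unwind the definition of a flow graph and reduce reachability in $G'$ to reachability in $G$ via the already-established path-transfer lemma. Recall that a flow graph is a directed graph with a designated start vertex from which every vertex is reachable; by construction $G'$ has vertex set $V$ and we take $s$ as its start vertex, so the only thing to check is that every vertex of $V$ is reachable from $s$ in $G'$.

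First I would fix an arbitrary vertex $v \in V$. Since $G$ is a flow graph, $v$ is reachable from $s$ in $G$, and by taking a shortest such walk (or just extracting a simple subpath) we obtain a \emph{simple} path $P$ from $s$ to $v$ in $G$. Next I would apply Lemma \ref{lemma:derived-simple-path} to $P$: it yields a simple path from $s$ to $v$ in $G'$ (in fact one using only vertices of $P$, though reachability alone is all we need here). Since $v$ was arbitrary, every vertex of $V$ is reachable from $s$ in $G'$, so $G'$ is a flow graph.

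There is really no obstacle: the work has all been done in Lemma \ref{lemma:derived-simple-path} (which in turn leans on Corollary \ref{corollary:parent} to guarantee the last arc of a simple path has a non-null derived arc). The only minor point worth a sentence is the passage from ``reachable'' to ``reachable by a simple path,'' which is immediate. If one wanted, one could also remark that no arc enters $s$ in $G'$, since $s$ is the root of $T$ and hence an ancestor of every vertex, making every arc of $A$ into $s$ derive to null — but this is not required by the definition of flow graph and so can be omitted.
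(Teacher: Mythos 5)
Your proof is correct and follows exactly the paper's approach: invoke Lemma \ref{lemma:derived-simple-path} to transfer reachability from $s$ in $G$ to reachability from $s$ in $G'$. The paper states this in one sentence; your version just spells out the routine details.
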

\begin{proof}
By Lemma \ref{lemma:derived-simple-path}, in $G'$ all vertices are reachable from $s$.
\end{proof}

\begin{lemma}
\label{lemma:derived-arc}
If $(v', w)$ is the non-null derived arc of an arc $(v, w)$, there is a path in $G$ from $v'$ to $w$ containing only $w$ and descendants of $v'$ in $T$.
\end{lemma}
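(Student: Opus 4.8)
The plan is to exhibit the required path explicitly, as a path that stays inside the subtree rooted at $v'$ followed by the single arc $(v,w)$. Since the derived arc $(v',w)$ is non-null, $w$ is not an ancestor of $v$ in $T$, and by construction $v'$ is an ancestor of $v$ in $T$ ($v' = v$ when $v = t(w)$, and otherwise $v'$ is the sibling of $w$ that lies on the tree path $T[t(w),v]$). Hence, by Corollary~\ref{corollary:parent}, $v'$ dominates $v$.

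First I would take a simple path $Q$ from $s$ to $v$ in $G$, which exists because $G$ is a flow graph. Since $v'$ dominates $v$, $Q$ passes through $v'$; let $P'$ be the portion of $Q$ from the (unique) occurrence of $v'$ to $v$. Then $P'$ is a simple path from $v'$ to $v$ in $G$ whose only occurrence of $v'$ is as its first vertex.

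The heart of the argument is to show that every vertex of $P'$ is a descendant of $v'$ in $T$. I would do this by contradiction: if not, then since $P'$ starts and ends at descendants of $v'$ (namely $v'$ itself and $v$), there is an arc $(x,y)$ on $P'$ with $x$ not a descendant of $v'$ but $y$ a descendant of $v'$. Simplicity of $P'$ forces $y \neq v'$, so $y$ is a proper descendant of $v'$ and therefore $t(y)$ is also a descendant of $v'$; but the parent property makes $t(y)$ an ancestor of $x$, so $x$ would be a descendant of $t(y)$ and hence of $v'$, a contradiction. (This is essentially Lemma~\ref{lemma:parent} applied to the child of $v'$ on $T[v',v]$: a path cannot enter the subtree rooted at $v'$ from outside without passing through $v'$.)

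Finally, appending the arc $(v,w)$ to $P'$ gives a path from $v'$ to $w$ in $G$ in which every vertex other than $w$ lies on $P'$ and hence is a descendant of $v'$ in $T$; this is exactly the path required. I do not anticipate a real obstacle: the only delicate point is the contradiction step above, and the degenerate cases ($v' = v$, or $v = s$) are immediate since then $P'$ consists of the single vertex $v'$.
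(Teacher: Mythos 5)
Your proof is correct and follows essentially the same route as the paper's: observe $v'$ dominates $v$ via Corollary~\ref{corollary:parent}, take a simple $s$-to-$v$ path (which must pass through $v'$), argue the segment from $v'$ to $v$ stays among descendants of $v'$, and append $(v,w)$. The only cosmetic difference is that you re-derive the descendant claim by a direct contradiction rather than citing Corollary~\ref{corollary:parent-2}, which the paper uses for exactly this step.
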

\begin{proof}
By Corollary \ref{corollary:parent}, $v'$ dominates $v$.  Since $G$ is a flow graph, there is a path $P$ from $s$ to $v$.  Path $P$ contains $v'$.  By Corollary \ref{corollary:parent-2}, the part of $P$ from $v'$ to $v$ contains only descendants of $v'$ in $T$.  Adding $(v, w)$ to the end of this path gives a path satisfying the lemma.
\end{proof}

\begin{corollary}
\label{corollary:derived-acyclic}
If $G$ is acyclic, so is $G'$.
\end{corollary}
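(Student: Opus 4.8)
The plan is to prove the contrapositive: if $G'$ contains a directed cycle, then so does $G$. First I would take a directed cycle in $G'$, say $w_0 \to w_1 \to \cdots \to w_{k-1} \to w_k = w_0$, all of whose arcs lie in $A'$. By the definition of $G'$, each arc $(w_i, w_{i+1})$ is the non-null derived arc of some arc of $A$, so Lemma~\ref{lemma:derived-arc} supplies a path $P_i$ in $G$ from $w_i$ to $w_{i+1}$ (Lemma~\ref{lemma:derived-arc} actually says more, namely that $P_i$ uses only $w_{i+1}$ and descendants of $w_i$ in $T$, but for this corollary I only need its existence). Concatenating $P_0, P_1, \ldots, P_{k-1}$ then yields a closed walk in $G$ through $w_0$.

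Next I would observe that this closed walk has positive length and therefore contains a directed cycle of $G$. Since the derived graph omits self-loops, $w_0 \neq w_1$, so the first segment $P_0$ traverses at least one arc; hence the concatenated walk is not the trivial empty walk. A closed walk in a directed graph that traverses at least one arc always contains a directed cycle (repeatedly short-circuit at a repeated vertex until none remain). This directed cycle lies in $G$, contradicting the hypothesis that $G$ is acyclic, and the corollary follows.

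I do not expect any real obstacle here: the argument is essentially a one-line consequence of Lemma~\ref{lemma:derived-arc}. The only points requiring a word of care are that the concatenated walk need not be simple, so one must explicitly extract a cycle from it rather than assert that the walk itself is a cycle, and that one must invoke the absence of self-loops in $G'$ to rule out the degenerate case of a length-zero walk.
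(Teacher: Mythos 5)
Your proof is correct and takes essentially the same route as the paper, which simply observes in one line that a cycle in $G'$ would yield a cycle in $G$ via Lemma~\ref{lemma:derived-arc}. You have merely spelled out the (standard) step of extracting a genuine cycle from the concatenated closed walk, a detail the paper leaves implicit.
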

\begin{proof}
If $G'$ contained a cycle, so would $G$ by Lemma \ref{lemma:derived-arc}.
\end{proof}

\begin{lemma}
\label{lemma:derived-sibling}
Tree $T$ has the sibling property in $G$ if and only if it has the sibling property in $G'$.  That is, $T$ is the dominator tree of $G$ if and only if it is the dominator tree of $G'$.
\end{lemma}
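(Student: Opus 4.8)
The plan is to prove the two directions separately, using the key structural facts already established about the derived graph: $T$ has the parent property in $G'$ (Lemma~\ref{lemma:derived-parent}), $G'$ is a flow graph (Corollary~\ref{corollary:derived-flowgraph}), every simple path in $G$ has a ``shadow'' simple path in $G'$ on a subset of its vertices (Lemma~\ref{lemma:derived-simple-path}), and every derived arc $(v',w)$ is backed by a path in $G$ from $v'$ to $w$ through only $w$ and descendants of $v'$ (Lemma~\ref{lemma:derived-arc}). The ``dominator tree'' reformulation at the end follows immediately: since $T$ has the parent property in both $G$ and $G'$, Theorem~\ref{theorem:parent-sibling-2} tells us that $T = D$ for $G$ iff $T$ has the sibling property in $G$, and likewise for $G'$; so the equivalence of the sibling properties gives the equivalence of ``$T$ is the dominator tree.''

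For the forward direction, suppose $T$ has the sibling property in $G$ but not in $G'$. Then there are siblings $v,w$ in $T$ with $v$ dominating $w$ in $G'$ — equivalently, every path from $s$ to $w$ in $G'$ passes through $v$. On the other hand, since $v$ does not dominate $w$ in $G$, there is a simple path $P$ from $s$ to $w$ in $G$ avoiding $v$. By Lemma~\ref{lemma:derived-simple-path} there is a simple path in $G'$ from $s$ to $w$ using only vertices of $P$, hence also avoiding $v$ — contradiction.

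For the reverse direction, suppose $T$ has the sibling property in $G'$ but not in $G$: there are siblings $v,w$ with $v$ dominating $w$ in $G$ but not in $G'$. Take a path $P'$ from $s$ to $w$ in $G'$ avoiding $v$; I will lift it to a path from $s$ to $w$ in $G$ avoiding $v$, contradicting that $v$ dominates $w$ in $G$. Each arc $(x,y)$ of $P'$ is the derived arc of some arc of $G$, so by Lemma~\ref{lemma:derived-arc} there is a path in $G$ from $x$ to $y$ through only $y$ and descendants of $x$ in $T$. Concatenating these segments along $P'$ yields a walk from $s$ to $w$ in $G$. The point is that this walk avoids $v$: the vertices of $P'$ itself avoid $v$, and each interpolated segment for an arc $(x,y)$ uses only $y$ and proper descendants of $x$; since $v$ is a sibling of $w$, if any such descendant of $x$ were $v$ then $x$ would be a proper ancestor of $v$, forcing $x$ to be a proper ancestor of $w$ as well (as $v,w$ have the same parent), which contradicts $x$ lying on a simple $s$-to-$w$ path in $G'$ unless $x$ is an ancestor of $w$ — and we can in fact take $P'$ to be simple, so the only ancestors of $w$ on $P'$ are proper ancestors of $w$, none of which is the sibling $v$.

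I expect the reverse direction to be the main obstacle: the forward direction is a one-line application of Lemma~\ref{lemma:derived-simple-path}, but the reverse direction requires carefully arguing that the lifted walk avoids $v$. The cleanest way to organize this may be to extract a small separate claim — something like ``if $(v',w)$ is a non-null derived arc with $v' \neq t(w)$, then the path from Lemma~\ref{lemma:derived-arc} avoids every sibling of $v'$'' — and then observe that on a simple $G'$-path from $s$ to $w$ avoiding $v$, no vertex is $v$ and the interpolated segments introduce only descendants of path-vertices, which, by the parent-property bookkeeping above, cannot be $v$. An alternative that sidesteps the descendant bookkeeping is to apply the forward/reverse dichotomy at the level of Theorem~\ref{theorem:parent-sibling-2} directly: show first that $G$ and $G'$ have the same dominator relation restricted to sibling pairs of $T$, which is exactly what is needed. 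Either way, the substantive content is the path-lifting argument and its avoidance property.
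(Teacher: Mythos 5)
Your forward direction (sibling property in $G$ implies sibling property in $G'$) is exactly the paper's. Your reverse direction uses the paper's broad idea of lifting arcs of $P'$ via Lemma~\ref{lemma:derived-arc}, but lifting the \emph{whole} path at once, instead of splitting it at $u = t(w)$ as the paper does, leaves a genuine gap. You correctly reduce the question to whether the interpolated segment for an arc $(x,y)$ of $P'$ can contain $v$, and correctly note that this forces $x$ to be a proper ancestor of $v$ and hence of $w$. But the clause ``which contradicts $x$ lying on a simple $s$-to-$w$ path in $G'$ unless $x$ is an ancestor of $w$'' never delivers a contradiction: $x$ \emph{is} an ancestor of $w$, and $P'$ genuinely passes through such vertices (every ancestor of $w$ in $T$ dominates $w$ in $G'$ by Corollary~\ref{corollary:parent}, hence is on $P'$). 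Your final sentence that the ancestors of $w$ on $P'$ are not $v$ only re-asserts $v \notin P'$; it says nothing about the interpolated segments, which is where $v$ could appear. So the case $x \in \{u, t(u), \dots, s\}$ is exactly the one left open, and for those $x$ Lemma~\ref{lemma:derived-arc} gives a segment inside the descendants of $x$, which \emph{do} include $v$.

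The missing ingredient is structural: on a simple $G'$-path to $w$, the outgoing arc of an ancestor $a$ of $w$ must be the child arc $(t(y),y)$ with $t(y)=a$. Otherwise the arc goes to a sibling of $a$, and the simple path could never re-enter the subtree of $a$ (by the parent property in $G'$ the first re-entry arc would have to come from $a$, already visited), so it could not reach $w$. Since a child arc of $G'$ is the derived arc of itself and thus an arc of $G$, such arcs require no interpolation, closing the gap. The paper avoids this entire structural detour by splitting $P'$ at $u$: the prefix $P'[s,u]$ is discarded and replaced by any $v$-avoiding simple path in $G$ from $s$ to $u$ (one exists because the proper descendant $v$ of $u$ cannot lie on a simple $s$-to-$u$ path), and the suffix $P'[u,w]$ lies entirely within $\{u\}$ together with children of $u$ other than $v$, so each lift from Lemma~\ref{lemma:derived-arc} stays in a subtree disjoint from $v$. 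You should either add the child-arc observation or adopt the paper's split. Also, your proposed auxiliary claim should say ``avoids every sibling of $v'$ \emph{other than} $w$,'' since $w$ is itself a sibling of $v'$ and is the endpoint of the lifted segment.
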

\begin{proof}
Let $v$ and $w$ be siblings in $T$.  Suppose there is a simple path $P$ from $s$ to $w$ in $G$ that avoids $v$.  Then the path $P'$ from $s$ to $w$ in $G'$ given by Lemma \ref{lemma:derived-simple-path} also avoids $v$.  Suppose there is a simple path $P'$ from $s$ to $w$ in $G'$ that avoids $v$.  Let $u$ be the common parent of $v$ and $w$.  By Corollary \ref{corollary:parent-2}, there is a path in $G$ from $s$ to $u$ that contains no proper descendants of $u$ and hence avoids $v$.  Thus all we need to show is that there is a path from $u$ to $w$ in $G$ that avoids $v$.  The part of $P'$ from $u$ to $w$ consists of an arc from $u$ to a child of $u$, followed by a sequence of arcs from one child of $u$ to another, no such child being $v$.  The first such arc is also an arc of $G$.  Consider one of the derived arcs $(x', y)$ on $P'$ from one child of $u$ to another.  By Lemma \ref{lemma:derived-arc}  there is a path in $G$ from $x'$ to $y$ that contains only $y$ and descendants of $x'$ in $T$, and hence avoids $v$.  Thus we can replace each arc of the part of $P'$ from $u$ to $w$ by a $v$-avoiding path in $G$.  Therefore there is a path in $G$ from $s$ to $w$ that avoids $v$.
\end{proof}

\begin{lemma}
\label{lemma:derived-preorder}
A preorder of $T$ is low-high on $G$ if and only if it is low-high on $G'$.
\end{lemma}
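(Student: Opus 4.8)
The plan is to fix a preorder of $T$ and show, vertex by vertex, that the low-high condition at each $v\ne s$ holds in $G$ if and only if it holds in $G'$. Since $T$ is literally the same tree in both graphs (Lemma~\ref{lemma:derived-parent}), the ancestor--descendant relation and the preorder itself do not change; only the arc set changes from $A$ to $A'$, so the whole task is to compare which arcs enter $v$ and on which side of $v$ (in the preorder) their tails lie. I would first dispatch the easy clause: $(t(v),v)\in A$ iff $(t(v),v)\in A'$. Indeed, the derived arc of $(t(v),v)$ is $(t(v),v)$ itself, since $v$ is not an ancestor of $t(v)$ and the tail already equals $t(v)$; conversely, if $(t(v),v)\in A'$ it is the derived arc of some $(x,v)\in A$, and the tail of a derived arc entering $v$ is either $x$ (when $x=t(v)$) or a proper sibling of $v$, which cannot be the parent $t(v)$, so $x=t(v)$.

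The core step is a correspondence between the ``non-parent'' sources of arcs into $v$ in the two graphs. Consider $(x,v)\in A$ with $x\ne t(v)$ and $x$ not a descendant of $v$. By the parent property $x$ is a descendant of $t(v)$, so $x$ lies in the subtree of a unique sibling $z$ of $v$, and the derived arc of $(x,v)$ is $(z,v)$. The one genuinely structural point --- the only place the word \emph{preorder} is used --- is that in a preorder of $T$ the subtrees of the children of $t(v)$ form consecutive blocks, so the whole subtree of $z$ lies entirely before $v$ or entirely after $v$; hence $x<v$ iff $z<v$, and both $x$ and $z$ are non-descendants of $v$. Conversely, every arc of $A'$ entering $v$ whose tail is not $t(v)$ has the form $(z,v)$ with $z$ a sibling of $v$, and it arises as the derived arc of some $(x,v)\in A$ with $x$ in the subtree of $z$; again $x<v$ iff $z<v$, and both are non-descendants of $v$.

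It then remains to assemble these facts. Restating the definition: the low-high condition at $v$ holds iff $(t(v),v)\in A$, or there is an arc into $v$ with tail less than $v$ together with an arc into $v$ with tail greater than $v$ and not a descendant of $v$ (an arc whose tail is less than $v$ automatically has a non-descendant tail, descendants of $v$ being $\ge v$ in a preorder). The first disjunct transfers by the first paragraph. For the second, given witnessing arcs $(u,v)$ and $(w,v)$ in $G$, note $w>v>t(v)$ forces $w\ne t(v)$, so $(w,v)$ derives to $(z_w,v)$ with $z_w>v$ a non-descendant of $v$, while $(u,v)$ derives to an arc with tail less than $v$ (the tail $t(v)$ if $u=t(v)$, otherwise a sibling less than $v$); these two derived arcs witness the condition in $G'$. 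The reverse implication is symmetric: a witnessing arc of $A'$ lifts, via the correspondence above, to an arc of $A$ on the same side of $v$.

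As for the main obstacle: there is essentially none, once the earlier lemmas are in hand --- this is a ``glue'' lemma whose role is to let the low-high results proved for $G'$ be read back on $G$. The only thing requiring care is the bookkeeping in the correspondence: keeping straight the three possibilities for the tail of an arc into $v$ (the parent $t(v)$; a descendant of a sibling of $v$; or a descendant of $v$, which yields a null derived arc and is irrelevant to low-highness), and invoking the preorder structure so that passing to the derived arc never moves a tail from one side of $v$ to the other.
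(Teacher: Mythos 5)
Your proposal is correct and takes essentially the same route as the paper: both proofs reduce to the observations that $(t(v),v)\in A$ iff $(t(v),v)\in A'$, and that passing from an arc $(x,v)$ to its derived arc $(x',v)$ (with $x'$ a sibling of $v$) never moves the tail across $v$ in the preorder because the subtree of each child of $t(v)$ occupies a contiguous block. The only cosmetic difference is that the paper's forward direction splits on $(t(v),v)\in A$ versus $(t(v),v)\notin A$, so that the two witnessing arcs have non-parent tails, whereas you allow $u=t(v)$ in the witnessing pair and note that its derived arc still lands on the low side; both handlings are sound.
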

\begin{proof}
Let $v \not= s$.  Suppose the order is low-high on $G$.  By Theorem \ref{theorem:parent-low-high}, $T = D$.  Let $v \not= s$.  If $(t(v), v) \in A$, then $(t(v), v) \in A'$.  If $(t(v), v) \not \in A$, there are arcs $(u, v)$ and $(w, v)$ such that $u < v < w$ in the given preorder and $w$ is not a descendant of $v$ in $T$.  Vertex $u$ is not a descendant of $v$ since it is numbered less than $v$.  Thus $(u, v)$ and $(w, v)$ have non-null derived arcs $(u', v)$ and $(w', v)$, $u'$ and $w'$ are siblings of $v$ in $T$, and $u' < v < w'$ in the given preorder.  Thus the order is low-high on $G'$.

Suppose the order is low-high on $G'$.  Let $v \not= s$.  If $(t(v), v) \in A'$, then $(t(v), v) \in A$.  If $(t(v), v) \not \in A'$, there are arcs $(u, v)$ and $(w, v)$ in $G$ with non-null derived arcs $(u', v)$ and $(w', v)$ such that $u' < v < w'$ in the given preorder and $u'$ and $v'$ are siblings of $v$ in $T$.  Since the order is a preorder, $u < v < w$ in the preorder.  Furthermore $w$ is not a descendant of $v$.  Thus the order is low-high on $G$.
\end{proof}

We can find the derived arcs in $O(m)$ time with Algorithm 3.\ignore{the following algorithm:}
\vspace{2ex}
\begin{figure}[h]
\begin{center}
\fbox{
\begin{minipage}[h]{16cm}
\begin{center}
\textbf{Algorithm 3: Construction of Derived Arcs}
\end{center}
\begin{description}\setlength{\leftmargin}{10pt}
\item[Step 1:] Let $T$ be a tree with the parent property.  Number the vertices of $T$ from $1$ to $n$ in preorder and compute the size of each vertex. Identify vertices by number.
\item[Step 2:] Delete each arc $(v, w)$ such that $v$ is a descendant of $w$ in $T$.
\item[Step 3:] For each arc $(v, w)$ such that $v = t(w)$, let $(v, w)$ be its derived arc.
\item[Step 4:] For each arc $(v, w)$ such that $v$ and $w$ are unrelated, construct a triple $(t(w), v, w)$.  For each vertex $u > 1$, construct a triple $(t(u), u, 0)$.
\item[Step 5:] Sort the triples in increasing lexicographic order by doing a three-pass radix sort.
\item[Step 6:] Process the triples in increasing order.  To process a triple of the form $(t(u), u, 0)$, set $x = u$, where $x$ is a global variable.  To process a triple of the form $(t(w), v, w)$, let $(x, w)$ be the derived arc of $(v, w)$.
\end{description}
\end{minipage}
}
\end{center}
\end{figure}

\begin{theorem}
\label{theorem:derived-graph-construction}
Algorithm 3 is correct.
\end{theorem}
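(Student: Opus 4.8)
The plan is to check Algorithm 3 against the definition of the derived arc one case at a time, and then to bound its running time. First I would dispose of the cases other than ``$v$ and $w$ unrelated.'' An arc $(v,w)$ has null derived arc exactly when $w$ is an ancestor of $v$ in $T$, equivalently when $v$ is a descendant of $w$; these are precisely the arcs deleted in Step 2. An arc $(v,w)$ with $v=t(w)$ is its own derived arc, which is what Step 3 records. For any remaining arc $(v,w)$, the parent property makes $t(w)$ an ancestor of $v$, so $v$ is a proper descendant of $t(w)$; since $v$ is neither $w$ nor a descendant of $w$, it lies in the subtree of a child of $t(w)$ other than $w$, so $v$ and $w$ are unrelated and the derived arc is $(v',w)$, where $v'$ is the unique child of $t(w)$ that is an ancestor of $v$. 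Thus Steps 4--6 only need to identify this $v'$.

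The combinatorial heart of the argument is that, with the vertices numbered in preorder as in Step 1, $v'$ equals the largest child of $t(w)$ whose number is at most $v$. Indeed, the preorder splits the proper descendants of $t(w)$ into the consecutive intervals $[c,\,c+\mathit{size}(c))$ over the children $c$ of $t(w)$; the vertex $v$, a proper descendant of $t(w)$, falls in exactly one such interval, and its owner is simultaneously the child of $t(w)$ that is an ancestor of $v$ and the largest child with number $\le v$.

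Next I would show that Step 6 returns exactly this largest child. Group the triples by first coordinate. In the group with first coordinate $p$, the triples of the form $(t(u),u,0)$ contribute a ``marker'' $(p,u,0)$ for every child $u$ of $p$, and Step 4 contributes a ``query'' $(p,v,w)$ for every arc $(v,w)$ with $t(w)=p$ and $v,w$ unrelated. The radix sort of Step 5 lists the triples in increasing lexicographic order, so Step 6 sweeps each group in increasing order of the second coordinate, breaking ties by the third; because $0$ precedes every vertex number, a marker $(p,u,0)$ is processed before a query $(p,v,w)$ whenever $u\le v$. Hence when $(p,v,w)$ is processed, the global variable $x$ holds the largest child $u$ of $p$ with $u\le v$, namely $v'$, and $(x,w)$ is the derived arc of $(v,w)$. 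Two loose ends remain, and both are handled by the sort order: the smallest second coordinate occurring in group $p$ is $p+1$, the first child of $p$ (every query point $v$ satisfies $v\ge v'\ge p+1$), so on entering a group $x$ is reset by a marker before any query is seen; and since groups are swept in increasing order of $p$, no value of $x$ from a previous group is ever used.

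For the running time: Step 1 builds child lists and runs a depth-first traversal in $O(n)$ time; Steps 2--4 do $O(1)$ work per arc and per vertex, hence $O(m)$; there are $O(m+n)=O(m)$ triples with coordinates in $\{0,\dots,n\}$, so the three-pass radix sort of Step 5 takes $O(m+n)=O(m)$ time; and Step 6 is a single pass, $O(m)$. So the total is $O(m)$. I expect the only delicate part to be the bookkeeping of the global variable $x$---confirming that it is reset correctly at each group boundary and that the third-coordinate tie-break correctly covers the case $v'=v$---rather than anything conceptually hard.
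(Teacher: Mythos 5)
Your proof is correct and follows essentially the same route as the paper's: both hinge on the observation that, in a preorder numbering, the child $v'$ of $t(w)$ that is an ancestor of $v$ is precisely the largest child of $t(w)$ numbered at most $v$, so that after the lexicographic sort the global variable $x$ holds $v'$ when the query triple $(t(w),v,w)$ is reached. You are somewhat more explicit about the edge cases (the tie-break when $v'=v$ and the fact that $x$ is overwritten by a marker at the start of each group) and you add the $O(m)$ running-time accounting, but the core argument is the same.
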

\begin{proof}
Steps 2 and 3 correctly handle the arcs $(v, w)$ such that $v$ and $w$ are related in $T$.  Consider an arc $(v, w)$ such that $v$ and $w$ are unrelated in $T$.  Let $(v', w)$ be the derived arc of $(v, w)$.  Then $t(v') = t(w)$ and $v' \le v$, so triple $(t(v'), v', 0)$ precedes $(t(w), v, w)$ once the triples are sorted lexicographically.  Suppose there is a triple $(t(u), u, 0)$ following $(t(v'), v', 0)$ but preceding $(t(w), v, w)$.  Then $t(u) = t(v')$ and $v' < u \le v$.  But vertices are numbered in preorder, so $u$ must be a descendant of $v'$, a contradiction.  Hence there is no such triple $(t(u), u, 0)$, which implies $x = v'$ when $(t(w), v, w)$ is processed, so Step 6 correctly computes the derived arc of $(v, w)$.
\end{proof}

\section{Reducible Flow Graphs}
\label{sec:low-high-reducible}

A \emph{reducible} flow graph~\cite{rdflow:hu74,reducibility:jcss:tarjan} is one in which every strongly connected subgraph $S$ has a single \emph{entry} vertex $v$ such every path from $s$ to a vertex in $S$ contains $v$.  There are many equivalent characterizations of reducible flow graphs~\cite{reducibility:jcss:tarjan}, and there are algorithms to test reducibility in near-linear~\cite{reducibility:jcss:tarjan} and truly linear~\cite{dominators:bgkrtw} time.  One notion of a ``structured'' program is that its flow graph is reducible.  Reducibility simplifies many computations, although not the computation of dominators, as far as we can tell.  A flow graph is reducible if and only if it becomes acyclic when every arc $(v, w)$ such that $w$ dominates $v$ is deleted~\cite{reducibility:jcss:tarjan}.  Deletion of such arcs does not change the dominator tree, since no such arc can be on a simple path from $s$.  Deleting such arcs thus reduces the problem of finding a low-high order on a reducible flow graph to the same problem on an acyclic graph.  Such a graph has a topological order (a total order such that if $(x, y)$ is an arc, $x$ is ordered before $y$) \cite{fundamental-algorithms:Knuth}.

The following lemma provides a way to find the arcs needed to satisfy the low-high property on a reducible graph:

\begin{lemma}
\label{lemma:derived-sibling-arcs}
Let $T$ be a tree with the parent property, and let $G'$ be the corresponding derived graph. Suppose $T$ has the sibling property in $G$ (or in $G'$).  If $v \not= s$, then $(t(v), v) \in A$ (and in $A'$) or $v$ has in-degree at least two in $G'$.
\end{lemma}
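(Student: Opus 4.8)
The plan is to argue by contradiction: suppose $v \neq s$, $(t(v),v) \notin A'$, and $v$ has in-degree at most one in $G'$. By Lemma \ref{lemma:derived-parent}, $T$ has the parent property in $G'$, and since $s$ has no entering arcs while every other vertex must be reachable from $s$, vertex $v$ has at least one entering arc in $G'$; hence $v$ has in-degree exactly one in $G'$, say its unique entering arc is $(u,v)$ with $u \neq t(v)$. Because $T$ has the parent property in $G'$ (Lemma \ref{lemma:derived-parent}) and every derived arc goes from a vertex to a child or sibling, $u$ must be a sibling of $v$ in $T$, with common parent $t(v)$. The goal is then to produce a simple path from $s$ to $v$ that avoids some sibling of $v$ in a way that contradicts the sibling property, or more directly, to show that $t(v)$ itself fails to dominate $v$ in a way that is incompatible with $T$ having the sibling property.

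First I would observe that, since $(t(v),v) \notin A'$, the only arc entering $v$ in $G'$ is $(u,v)$, so every path from $s$ to $v$ in $G'$ must pass through $u$; that is, $u$ dominates $v$ in $G'$. But $u$ is a sibling of $v$ in $T$, so $u \neq t(v)$ and $u$ is not an ancestor of $v$ in $T$. Now I invoke the connection between $G$ and $G'$: by Lemma \ref{lemma:derived-simple-path}, any simple path from $s$ to $v$ in $G$ yields a simple path from $s$ to $v$ in $G'$ on a subset of its vertices, which therefore must also pass through $u$. Hence $u$ dominates $v$ in $G$ as well. So we have two siblings $u$ and $v$ in $T$ with $u$ dominating $v$ — and since $T$ has the sibling property in $G$ (equivalently in $G'$, by Lemma \ref{lemma:derived-sibling}), this is exactly the forbidden configuration. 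Contradiction.

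I then need to handle the parenthetical equivalences cleanly. The statement claims that $(t(v),v) \in A$ if and only if $(t(v),v) \in A'$: the forward direction is immediate from the definition of derived arc (when $v = t(w)$ the derived arc of $(v,w)$ is $(v,w)$ itself), and the reverse direction holds because a derived arc of the form $(t(v), v)$ can only arise from the arc $(t(v), v) \in A$ (any other arc $(x, v)$ with $x$ a proper descendant of $t(v)$ produces a derived arc $(x', v)$ with $x'$ a sibling of $v$, not equal to $t(v)$). Similarly, "in-degree at least two in $G'$" is the correct target because $G'$ could in principle have $\Omega(1)$ parallel derived arcs collapsed, but the derived-arc construction (Algorithm 3) keeps at most one arc between any ordered pair, so in-degree two in $G'$ genuinely means two distinct sibling-or-parent predecessors. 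The main obstacle, such as it is, is just being careful that "in-degree" in $G'$ counts distinct arcs and that the unique-predecessor case really forces that predecessor to be a sibling rather than $t(v)$ — once that bookkeeping is pinned down, the sibling-property contradiction via Lemma \ref{lemma:derived-simple-path} is the crux and it is short.
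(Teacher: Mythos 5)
Your proof is correct and takes essentially the same approach as the paper: in both cases the key observation is that in-degree one in $G'$ forces a unique derived arc $(u',v)$ from a sibling $u'$, and that $u'$ must then dominate $v$, contradicting the sibling property. The only cosmetic difference is that you make the step explicit by first noting $u'$ dominates $v$ in $G'$ and then transferring to $G$ via Lemma \ref{lemma:derived-simple-path}, whereas the paper argues directly that every simple $s$-to-$v$ path in $G$ ends with an arc whose derived arc is $(u',v)$ and hence contains $u'$.
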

\begin{proof}
Since $T$ has the sibling property, $T = D$.  Suppose the lemma is false for some $v \not= s$.  Since $v$ is reachable from $s$ by a simple path, there is an arc $(u, v)$ such that $v$ does not dominate $u$.  Let $(u', v)$ be the derived arc of $(u, v)$.  Since the lemma is false for $v$, $u \not= d(v)$, and there is no other arc $(w, v)$ with a derived arc $(w', v)$ such that $w' \not= u'$.  But then $u'$ dominates $v$, contradicting the sibling property.
\end{proof}

Lemma \ref{lemma:derived-sibling-arcs} holds for arbitrary graphs.  For reducible graphs, the condition in Lemma \ref{lemma:derived-sibling-arcs} is not only necessary but sufficient for a tree $T$ with the parent property to have the sibling property:

\begin{lemma}
\label{lemma:derived-reducible-sufficient}
Suppose $G$ is reducible.  Let $T$ be a tree with the parent property, and let $G'$ be the corresponding derived graph.  Then $T$ has the sibling property if, for all vertices $v \not= s$, $(t(v), v) \in A$ or $v$ has in-degree at least two in $G'$.
\end{lemma}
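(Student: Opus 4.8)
The plan is to reduce to an acyclic flow graph and then argue by an extremal argument over a topological order.

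\emph{Step 1: reduction to the acyclic case.} Since $G$ is reducible, deleting from $G$ every arc $(x,y)$ with $y$ dominating $x$ yields an acyclic flow graph $G_1$. No such arc lies on a simple path from $s$, so $G_1$ has the same reachability and the same dominator relation as $G$; deleting arcs preserves the parent property, so $T$ still has the parent property in $G_1$; and the arc $(t(v),v)$, when present, is never deleted, since $t(v)$ dominates $v$ by Corollary~\ref{corollary:parent} and $v$ dominating $t(v)$ would force $v=t(v)$. The delicate point is that the in-degree hypothesis survives: for $v\ne s$ with $(t(v),v)\notin A$, deleting back-arcs must not destroy any of the derived in-arcs of $v$ in $G'$. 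I would prove this by contradiction (most cleanly, by taking $(G,T)$ to be a counterexample to the lemma with $|A|$ minimum, so that no back-arc may be deleted without violating the hypotheses): if every arc of $G$ entering $v$ from the $T$-subtree of some $T$-sibling $x'$ of $v$ were a back-arc, choose one such arc $(x_0,v)$, so $v$ dominates $x_0$ and, by Corollary~\ref{corollary:parent}, so does $x'$. Take a simple path $\pi$ from $s$ to $x_0$; it passes through $v$, and the portion of $\pi$ from $v$ to $x_0$ together with $(x_0,v)$ is a cycle. Because $G$ is reducible this cycle has a vertex dominating all of it, and that vertex cannot be anything other than $v$ itself, since any vertex of the cycle lying strictly after $v$ on $\pi$ is avoided by the prefix of $\pi$ ending at $v$ and so cannot dominate $v$; hence $v$ dominates every vertex on that portion of $\pi$. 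Locating $x'$ on $\pi$ and routing a path from $x'$ to $v$ inside the subtree of $x'$ via Lemma~\ref{lemma:derived-arc}, one concludes that $v$ dominates $x'$ — but $v$ and $x'$ are $T$-siblings, contradicting minimality of the counterexample. This is the technical heart of the argument.

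\emph{Step 2: pass to the derived graph.} By Step 1 we may assume $G$ is acyclic. Form its derived graph $G'$. By Corollary~\ref{corollary:derived-acyclic}, $G'$ is acyclic; by Lemma~\ref{lemma:derived-parent}, $T$ has the parent property in $G'$; the in-degree hypothesis is literally a statement about $G'$, and $(t(v),v)\in A\iff(t(v),v)\in A'$; and by Lemma~\ref{lemma:derived-sibling}, $T$ has the sibling property in $G$ iff it does in $G'$. Moreover, by construction every arc of $G'$ entering a vertex $w$ originates at $t(w)$ or at a $T$-sibling of $w$. So it suffices to prove: if $H$ is an acyclic flow graph in which $T$ has the parent property, every arc into a vertex comes from its $T$-parent or a $T$-sibling, and every $v\ne s$ has $(t(v),v)\in H$ or in-degree at least $2$ in $H$, then $T$ has the sibling property in $H$.

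\emph{Step 3: extremal argument.} Fix a topological order $<$ of $H$. Suppose $T$ lacks the sibling property, and among all pairs of $T$-siblings $(a,b)$ with $a$ dominating $b$, choose one with $b$ smallest in $<$. If $(t(b),b)$ is an arc of $H$, then since $t(b)$ dominates $a$ (Corollary~\ref{corollary:parent}) and $a\ne t(b)$, some simple path from $s$ to $t(b)$ avoids $a$; following it with $(t(b),b)$ gives a walk from $s$ to $b$ avoiding $a$, contradicting that $a$ dominates $b$. Hence $(t(b),b)$ is not an arc of $H$, so $b$ has two entering arcs $(u_1,b),(u_2,b)$ with $u_1\ne u_2$, and both $u_1,u_2$ are $T$-siblings of $b$, hence children of $t(b)$, hence $T$-siblings of $a$. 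For $i=1,2$, every path from $s$ to $u_i$ followed by $(u_i,b)$ is a walk from $s$ to $b$ and so contains $a$; since $a\ne b$, this forces $a$ to dominate $u_i$ unless $u_i=a$. At most one of $u_1,u_2$ equals $a$, so some $u_j\ne a$ has $a$ dominating $u_j$; thus $(a,u_j)$ is a pair of $T$-siblings with $a$ a strict dominator of $u_j$, and the arc $(u_j,b)$ forces $u_j<b$, contradicting the minimality of $b$. Therefore $T$ has the sibling property, and by Step 2 and Lemma~\ref{lemma:derived-sibling} so does $T$ in the original $G$.

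The main obstacle is Step 1: verifying that deleting back-arcs from a reducible flow graph preserves the in-degree hypothesis, which requires the somewhat intricate cycle/loop-header analysis sketched above together with Lemma~\ref{lemma:derived-arc}. Once the problem is reduced to an acyclic flow graph, the derived-graph machinery of Section~\ref{sec:derived-graph} and the short extremal argument of Step 3 finish the proof.
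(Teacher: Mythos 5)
Your Steps~2 and~3 reproduce the paper's argument almost verbatim: delete the arcs $(x,y)$ with $y$ dominating $x$, pass to the (derived) acyclic graph, pick a violating sibling pair $(u,v)$ with $v$ minimal in a topological order, and argue from the in-degree hypothesis that $u$ cannot in fact dominate $v$. Where you differ is your Step~1, which has no counterpart in the paper's proof: the paper simply deletes the back-arcs and then asserts the existence of an arc $(x,v)$ whose derived arc $(x',v)$ has $x'\neq u$, without verifying that such an arc survives the deletion (equivalently, that $x'$ precedes $v$ in the topological order). You are right that this is a delicate point, and right to try to close it; the paper glosses over it.

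However, your Step~1 as written does not close the gap. Your cycle argument shows only that the entry vertex of the cycle (the part of $\pi$ from $v$ to $x_0$ plus the arc $(x_0,v)$) is $v$, so $v$ dominates every vertex lying strictly \emph{after} $v$ on $\pi$. To conclude ``$v$ dominates $x'$'' you need $x'$ to lie after $v$ on $\pi$, but since $v$ and $x'$ both dominate $x_0$ they are comparable in $D$, and $x'$ lies after $v$ on $\pi$ precisely when $v$ already dominates $x'$ --- so the cycle argument is circular in that sub-case. In the opposite sub-case ($x'$ dominates $v$) your argument does not apply at all; there a contradiction does follow, but by a different route (both $u$ and $x'$ are $T$-siblings of $v$ dominating $v$, hence comparable, giving a smaller bad pair). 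Finally, even granting ``$v$ dominates $x'$,'' the punch line ``contradicting minimality of the counterexample'' does not follow: under $|A|$-minimality this statement is just another instance of the sibling property failing, which is what a counterexample is. You would need an explicit argument ruling out the configuration $v$ dom $x'$ with all arcs from $x'$'s $T$-subtree into $v$ being back-arcs (for example by pushing to the topologically minimal $T$-sibling inside the loop with header $v$ and examining its derived in-arcs) to complete Step~1; as written, the step is an incomplete sketch, not a proof.
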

\begin{proof}
Delete all arcs $(x, y)$ such that $y$ dominates $x$.  This does not change the dominators, and it makes the graph acyclic.  Suppose the lemma is false.  Let $u$, $v$ be a pair of siblings such that $u$ dominates $v$, with $v$ minimum in some topological order (an order such that if $(x, y)$ is an arc, $x$ is ordered before $y$).  There is a path from $s$ to $t(v)$ that avoids $u$ by Corollary \ref{corollary:parent-2}.  Thus if $(t(v), v) \in A$, $u$ does not dominate $v$, a contradiction.  If $(t(v), v) \not \in A$, there is an arc $(x, v)$ in $G$ with a derived arc $(x', v)$ such that $x'$ is a sibling of $u$ and $v$ but $x' \not= u$.  By the choice of $v$, $u$ does not dominate $x'$, so it cannot dominate $v$, a contradiction.
\end{proof}

\subsection{Low-high orders on reducible flow graphs}
\label{sec:low-high-orders-reducible}

To find a low-high order on a reducible graph, we delete arcs $(x, y)$ such that $y$ dominates $x$, making the graph acyclic.  We then construct for each vertex an ordered list of its children in $D$ by processing the vertices other than $s$ in topological order and inserting each vertex into its set of siblings in a position determined by the arc or arcs whose existence is guaranteed by Lemma \ref{lemma:derived-sibling-arcs}. After building the ordered lists of children, we obtain a low-high order by doing a depth-first traversal of $D$.

\ignore{This approach gives us the following algorithm:}
\begin{figure}[h]
\begin{center}
\fbox{
\begin{minipage}[h]{16cm}
\begin{center}
\textbf{Algorithm 4: Construction of a Low-High Order on a Reducible Flow Graph}
\end{center}
\begin{description}\setlength{\leftmargin}{10pt}
\item[Step 1:] Delete every arc $(v, w)$ such that $w$ is an ancestor of $v$ in $D$, and find the derived arc $(v', w)$ of each remaining arc $(v, w)$ with respect to $D$.
\item[Step 2:] For each vertex $v$, initialize its list of children $C(v)$ to be empty.  Apply the following step to each vertex $v \not= s$ in a topological order on $G$ (or on $G'$): If $(d(v), v) \in A$, insert $v$ anywhere in $C(d(v))$.  Otherwise, find derived arcs $(u', v)$ and $(w', v)$ such that $u' \not= w'$.  Insert $v$ just before $u'$ in $C(d(v))$ if $w'$ is before $u'$ in $C(d(v))$, just after $u'$ otherwise.
\item[Step 3:] Do a depth-first traversal of $D$, visiting the children of each vertex $v$ in their order in $C(v)$.  Number the vertices from $1$ to $n$ as they are visited.  The resulting order is low-high on $G$.
\end{description}
\end{minipage}
}
\end{center}
\end{figure}

This approach gives us Algorithm 4.
Figure \ref{fig:reducible-low-high} shows how this algorithm works.

\begin{figure}[t]
\begin{center}
\scalebox{0.7}[0.7]{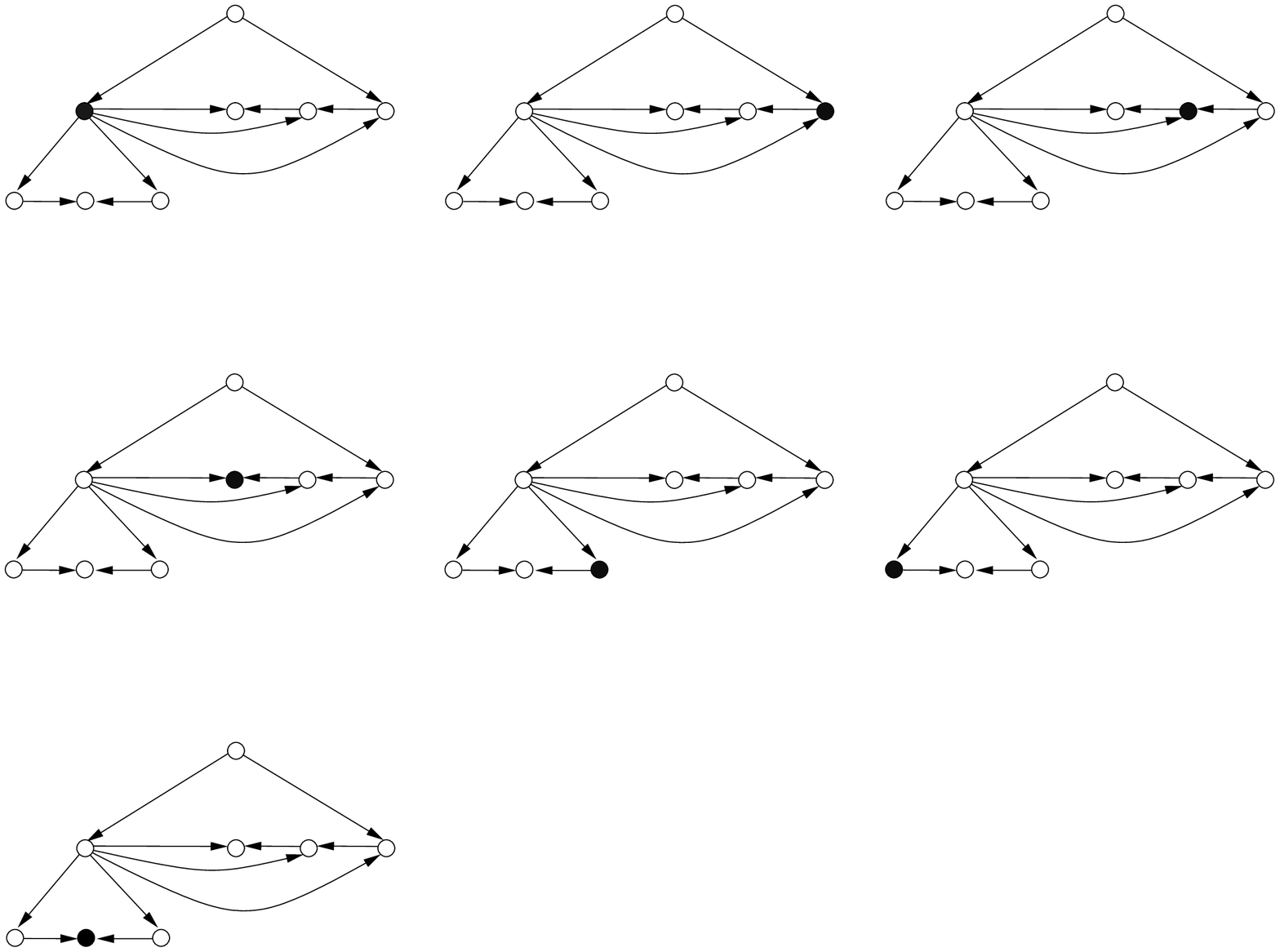}
\end{center}
\caption{\label{fig:reducible-low-high} Computation of a low-high order starting with an acyclic derived graph. Vertices are processed in the topological order $(s,d,a,e,g,b,f,c)$; the vertex processed at each application of Step 2 is shown filled.}
\end{figure}

\begin{theorem}
\label{theorem:low-high-reducible}
Algorithm 4 is correct.
\end{theorem}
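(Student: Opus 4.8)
The goal is to show both that Algorithm~4 never gets stuck in Step~2 and that the order it outputs in Step~3 is low-high on $G$. The plan is as follows. Since the input tree is the dominator tree, $D$ has the parent and sibling properties (Theorem~\ref{theorem:parent-sibling}). Deleting in Step~1 the arcs whose head dominates their tail leaves a reducible graph acyclic without changing $D$, so after Step~1 the graph $G$ is acyclic and, by Corollary~\ref{corollary:derived-acyclic}, so is the derived graph $G'$; moreover, by Lemma~\ref{lemma:derived-arc} every arc of $G'$ is spanned by a path of $G$, so any topological order of $G$ is also a topological order of $G'$, and it is harmless to assume Step~2 processes the vertices in topological order of $G'$. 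The order produced in Step~3 is, by construction, a preorder of $D$, so by Lemma~\ref{lemma:derived-preorder} it is enough to prove that it is low-high on $G'$; that is, for each $v\neq s$, either $(d(v),v)\in A'$, or there are two distinct arcs $(u',v),(w',v)\in A'$ with $u'$ less than $v$ and $v$ less than $w'$ in the output order. (In the latter case $u'$ and $w'$ are automatically siblings of $v$ in $D$, hence non-descendants of $v$.)

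The easy case is $(d(v),v)\in A$ when Step~2 processes $v$; this arc is never deleted in Step~1 because its head $v$ is not an ancestor of its tail $d(v)=t(v)$, and it also lies in $A'$, so the first disjunct of the low-high property holds at $v$ and Step~2 inserts $v$ arbitrarily into $C(d(v))$. The substantive case is $(d(v),v)\notin A$, equivalently $(d(v),v)\notin A'$. By Lemma~\ref{lemma:derived-sibling-arcs}, $v$ then has in-degree at least two in $G'$, so a distinct pair of derived arcs $(u',v),(w',v)\in A'$ exists and can be chosen by Step~2; by the definition of derived arc and because $(d(v),v)\notin A$, both $u'$ and $w'$ are siblings of $v$ in $D$ and differ from $d(v)$.

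The step I expect to be the main obstacle is showing that $u'$ and $w'$ are already present in the list $C(d(v))$ at the moment Step~2 processes $v$ --- this is exactly what makes the conditional insertion rule well-defined and correct. This follows from processing vertices in topological order of $G'$: since $(u',v)$ and $(w',v)$ are arcs of $G'$, both $u'$ and $w'$ are processed before $v$, and being children of $d(v)$ they have by then been inserted into $C(d(v))$. The insertion rule now places $v$ strictly between $u'$ and $w'$ in $C(d(v))$: if $w'$ precedes $u'$ there at that moment, $v$ is inserted just before $u'$, so that $w'$ precedes $v$ and $v$ precedes $u'$; otherwise $u'$ precedes $w'$ and $v$ is inserted just after $u'$, so that $u'$ precedes $v$ and $v$ precedes $w'$. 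Either way, one of $u',w'$ ends up before $v$ and the other after $v$ in $C(d(v))$, and two elementary facts then finish the case: inserting an element ``just before'' or ``just after'' an existing element never reorders elements already in the list, so the relative order of $u',v,w'$ is frozen once all three are present; and the Step~3 depth-first traversal lists the entire subtree of one child of $d(v)$ before that of another exactly in the order those children appear in $C(d(v))$. Hence in the output order one of $u',w'$ is less than $v$ and the other greater, and the second disjunct of the low-high property holds at $v$ in $G'$.

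Putting the two cases together, the output order is low-high on $G'$, hence low-high on $G$ by Lemma~\ref{lemma:derived-preorder}; and since restoring the arcs deleted in Step~1 only enlarges $A$ while leaving $D$ and all the low-high witnesses intact, the order is low-high on the original flow graph as well. I would handle the $O(m)$ running time separately --- Steps~1 and~3 are routine, and Step~2 computes a topological order and performs $n-1$ ordered-list insertions, each at a position found in $O(1)$ time from a constant number of derived arcs --- but the only real conceptual content is the ``already present in $C(d(v))$'' argument above, which rests solely on the acyclicity of $G'$ and Lemmas~\ref{lemma:derived-arc}, \ref{lemma:derived-sibling-arcs}, and~\ref{lemma:derived-preorder}.
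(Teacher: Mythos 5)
Your proof is correct and takes essentially the same approach as the paper: process vertices in a topological order so that when $v$ is reached its witness derived arcs' tails (siblings of $v$ in $D$, by Lemma~\ref{lemma:derived-sibling-arcs}) are already in $C(d(v))$, and observe that the conditional insertion then places $v$ strictly between them, which gives the low-high property once the lists are linearized by the depth-first traversal. The only cosmetic differences are that you establish low-high on $G'$ and invoke Lemma~\ref{lemma:derived-preorder} to transfer to $G$, where the paper argues directly that $v$ falls between the original tails $u$ and $w$ because the order is a preorder, and you invoke Lemma~\ref{lemma:derived-sibling-arcs} where the paper loosely points to the condition appearing in Lemma~\ref{lemma:derived-reducible-sufficient}.
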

\begin{proof}
After Step 1, $G$ is acyclic and every arc has a derived arc. If $(v, w)$ is an arc with derived arc $(v', w)$, $v'$ is an ancestor of $v$, so there is a path from $v'$ to $v$.  It follows that when a vertex $v$ is processed in Step 2, $x'$ is in $C(d(v))$ for every derived arc $(x', v)$ such that $x' \not= d(v)$.  By the condition in Lemma \ref{lemma:derived-reducible-sufficient}, either $(d(v), v)$ is an arc or there are derived arcs $(u', v)$ and $(w', v)$ such that $u' \not= w'$.  Thus the insertion of $v$ into $C(v)$ will succeed.  Step 4 produces a preorder of $T$.  Let $v$ be any vertex other than $s$.  If $(d(v), v) \not \in A$, there are arcs $(u, v)$ and $(w, v)$ in $A$ with derived arcs $(u', v)$ and $(w', v)$, respectively, such that $v$ is ordered between $u'$ and $w'$, and $u'$ and $w'$ are siblings in $T$.  Since the order is a preorder, $v$ is ordered between $u$ and $w$.  Thus the order is low-high.
\end{proof}

When using Algorithm 4 in combination with the dominator certification algorithm (Algorithm 2), we can omit Step 3 in Algorithm 4, since Step 2 in Algorithm 2 does the same tree traversal and numbering. That is, it suffices to represent the low-high order by the ordered lists of children produced by Step 2.\\

\begin{remark}
If we want to verify the dominator tree of a reducible graph but are not interested in constructing a low-high order, we can do the verification using Lemma \ref{lemma:derived-reducible-sufficient}. Given $D$, we verify that $D$ is a tree with the parent property as in Section \ref{sec:properties}, compute the derived graph as in Section \ref{sec:derived-graph} (verifying the correctness of the derived graph using ancestor-descendant tests), and verify the condition in Lemma \ref{lemma:derived-reducible-sufficient}.  The total time for verification by this method is $O(m)$.
\end{remark}\\

It is easy to implement most of Algorithm 4 to run in $O(m)$ time.  In Step 1 we find the arcs to be deleted using an $O(1)$-time ancestor-descendant test as discussed previously, and we find the derived arcs using Algorithm 3.  In Step 2, finding a topological order takes $O(m)$ time using either successive deletion of vertices of in-degree zero~\cite{fundamental-algorithms:Knuth,all-topological-orders:KS} or depth-first search~\cite{dfs:t}.  Finding the needed arcs in Step 2 takes $O(1)$ time per arc, for a total of $O(m)$ time.  Step 3 takes $O(m)$ time.  The only hard part is constructing the lists of children in Step 2.

\subsection{Off-line dynamic list maintenance}
\label{sec:list-order}

For constructing the lists of children, we need a data structure that maintains a list subject to insertions and \emph{order queries}: given $x$ and $y$ in the list, which occurs first? This is the \emph{dynamic list maintenance} problem.  There are solutions to this problem that support insertion and order tests in $O(1)$ time, either amortized or worst-case~\cite{list_order:bender_et_al,list_order:ds87}.  Unfortunately, these solutions are rather complicated, especially those with an $O(1)$ worst-case time bound.  Fortunately, we only need a solution to a special case of dynamic list maintenance, in which there are no deletions and, more importantly, the sequence of operations is given off-line in an appropriate sense.  For this version of the problem there is a simple solution, which we now describe.

Given an initial list with no items, we want to perform off-line an intermixed sequence of the following three kinds of instructions, and then number the items in the final list consecutively from 1.

\begin{list}{}{
\setlength{\leftmargin}{1.7cm} \setlength{\labelsep}{.2cm} \setlength{\itemsep}{0cm}\setlength{\labelwidth}{1.5cm}
}
\item[\emph{after}$(x,y)$:] Return \textbf{true} if $x$ is after $y$ in the current list, \textbf{false} otherwise.  Items $x$ and $y$ must be in the current list.
\item[\emph{insert}$(x)$:] Insert $x$ anywhere in the list.
\item[\emph{insert}$(x,y,\mathit{test})$:] If \emph{test} is true, insert $x$ just after $y$; otherwise, insert $x$ just before $y$.  Item $y$ but not $x$ must be in the current list; \emph{test} is a Boolean combination of \textbf{true} and a fixed number of $\mathit{after}$ queries.
\end{list}

Algorithm 5 gives a simple and efficient solution to this problem.

\begin{figure}[h]
\begin{center}
\fbox{
\begin{minipage}[h]{16cm}
\begin{center}
\textbf{Algorithm 5: Off-Line Execution of a Sequence of Insert and After Instructions}
\end{center}
\begin{description}\setlength{\leftmargin}{10pt}
\item[Step 1 (off-line):] Construct a rooted tree with one non-root vertex for each item ever in the list, and a root.  The parent of item $x$ is the root if there is an instruction $\mathit{insert}(x)$, or $y$ if there is an instruction $\mathit{insert}(x, y, \mathit{test})$.  For each node $x$, let $\mathit{size}(x)$ be the number of nodes in its subtree.
\item[Step 2 (on-line):] Execute the instructions in sequence while maintaining an interval $[i, j]$ for the root and for each item currently in the list.  Here $i$ and $j$ are integers such that $i \le j$.  These intervals will be disjoint.  Initially the root has interval $[0, \mathit{size}(\mathit{root})]$.  Given a query $\mathit{after}(x, y)$, answer \textbf{true} if the interval for $x$ follows that of $y$, \textbf{false} otherwise.  Given an instruction $\mathit{insert}(x)$ such that the root has interval $[i, j]$, replace the interval for the root by $[i, j - \mathit{size}(x)]$ and give $x$ the interval $[j - \mathit{size}(x) + 1, j]$.  (This corresponds to inserting $x$ first in the list.)  Given an instruction $\mathit{insert}(x, y, \mathit{test})$ such that $y$ has interval $[i, j]$, if test is true replace the interval for $y$ by $[i, j - \mathit{size}(x)]$ and give $x$ the interval $[j - \mathit{size}(x) + 1, j]$; if test is false give $x$ the interval $[i, i + \mathit{size}(x) - 1]$ and replace the interval for $y$ by $[i + \mathit{size}(x), j]$.  (The former corresponds to inserting $x$ after $y$, the latter to inserting $x$ before $y$.)
\item[Step 3:]  After Step 2, the root has interval $[0, 0]$, and each item $x$ has an interval $[i, i]$ for some $i > 0$, $i$ distinct for each $x$. If item $x$ has interval $[i, i]$, assign number $i$ to $x$. This numbers the items consecutively from $1$ in list order.
\end{description}
\end{minipage}
}
\end{center}
\end{figure}

\begin{theorem}
\label{theorem:low-high-list}
Algorithm 5 is correct.
\end{theorem}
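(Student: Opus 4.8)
The plan is to isolate an invariant on the intervals maintained in Step 2 and prove it by induction on the number of instructions processed. Let $\mathcal{T}$ be the tree built in Step 1, let $N$ be the number of items ever inserted, and, at any point in the execution, let $L$ denote the set of items already inserted. Since each item occurs in exactly one insert instruction and, in an instruction $\mathit{insert}(x,y,\mathit{test})$, $y$ must already be present, $L\cup\{\mathit{root}\}$ is at all times an ancestor-closed subset of $\mathcal{T}$; moreover the parent of $x$ in $\mathcal{T}$ is exactly the second argument $y$ of the instruction that inserts $x$ (or the root, if $x$ is inserted by $\mathit{insert}(x)$, which the algorithm realizes by placing $x$ first). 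The invariant I would carry is: (a) the intervals currently held by the root and by the items of $L$ are pairwise disjoint and together cover $\{0,1,\dots,\mathit{size}(\mathit{root})\}$; (b) listing these intervals from left to right along the integer line yields the root followed by the items of $L$ in their current list order; and (c) for every $v\in L\cup\{\mathit{root}\}$, the number of integers in $v$'s interval is $1+\sum_c\mathit{size}(c)$, summed over the children $c$ of $v$ in $\mathcal{T}$ that are not yet inserted. Part (c) is the essential bookkeeping: $v$'s interval holds one integer destined to be $v$'s final number, plus a disjoint block of $\mathit{size}(c)$ integers reserved for the eventual subtree of each not-yet-inserted child $c$ of $v$; in particular, at the moment $v$ is inserted its interval has exactly $\mathit{size}(v)$ integers.

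The base case is trivial: before any instruction $L=\varnothing$, the root owns $[0,\mathit{size}(\mathit{root})]$, and (a)--(c) hold. For the inductive step I would case on the next instruction. An $\mathit{after}(x,y)$ query alters nothing, and by (b) it returns \textbf{true} precisely when $x$ is after $y$ in the current list; hence every $\mathit{test}$, being a Boolean combination of such queries, is evaluated correctly, so each $\mathit{insert}(x,y,\mathit{test})$ really does act on the side of $y$ that the list semantics demand. For $\mathit{insert}(x)$: the parent of $x$ in $\mathcal{T}$ is the root, so by (c) the root's interval has more than $\mathit{size}(x)$ integers, the root cedes a block of $\mathit{size}(x)$ integers from its right end, and $x$ takes them; (a) is preserved, (b) holds because $x$ is now first and its block abuts the right of the root's shortened interval, and (c) holds for $x$ (none of its children inserted yet) and for the root ($x$ has migrated out of its not-yet-inserted-children sum). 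For $\mathit{insert}(x,y,\mathit{test})$: $x$ is a not-yet-inserted child of $y$ in $\mathcal{T}$, so by (c) the interval of $y$ contains strictly more than $\mathit{size}(x)$ integers and the carving is well defined; we take a block of $\mathit{size}(x)$ integers off the right end of $y$'s interval if $\mathit{test}$ is true and off the left end if it is false, and give it to $x$. This is exactly what keeps (b) in force: when $x$ is placed just after $y$ its block lies to the right of $y$'s remaining interval, when just before, to the left; and because a block carved off for $x$ (and, recursively, any block later carved out of it for a descendant of $x$) stays entirely between the integer-neighbours of $y$'s interval, (b) continues to mirror the list order even for descendants inserted arbitrarily later. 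Parts (a) and (c) are checked as in the previous case.

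After the last instruction $L$ is the set of all $N$ items, since there are no deletions. By (c) each item $v$ now has $1+\sum_c\mathit{size}(c)=1$ integers in its interval, so $v$'s interval is a single integer $[i_v,i_v]$, and likewise the root's interval is a single integer, which is $0$ because the root only ever cedes from its right end. By (a) the value $0$ together with $\{i_v:v\text{ an item}\}$ exhausts $\{0,1,\dots,\mathit{size}(\mathit{root})\}$, so the $i_v$ run through $1,2,\dots,N$, and by (b) they increase along the final list order; hence assigning $i_v$ to $v$, as Step 3 does, numbers the items consecutively from $1$ in list order, and Algorithm 5 is correct. (Running time is not an issue: Step 1 is $O(N)$ offline and each instruction in Step 2 is handled in $O(1)$ time.) The part that takes genuine care is maintaining (c) together with (b): one must pin down how a parent's interval is split among the future subtrees of its children and verify that every split lands on the side dictated by the list position, so that the left-to-right order of intervals keeps matching the list order no matter how far in the future a descendant is inserted.
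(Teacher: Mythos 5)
Your proof follows the same strategy as the paper's: view Step~2 as progressively cutting arcs off the Step-1 tree, and maintain the invariant that each list item's interval has size equal to the number of vertices in its current cut subtree. The paper states this invariant tersely and declares that correctness follows; your parts (a) and (b) make the disjoint-cover and order-mirroring claims explicit, and your detailed case analysis of the inductive step is a welcome elaboration of what the paper leaves implicit.

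One bookkeeping slip: your invariant (c) assigns the root an interval of $1+\sum_c \mathit{size}(c)$ integers, but the root is initialized with $[0,\mathit{size}(\mathit{root})]$, which has $\mathit{size}(\mathit{root})+1 = 2+\sum_c \mathit{size}(c)$ integers, since $\mathit{size}(\mathit{root})$ counts the root node itself in addition to all the items. The paper's statement of this invariant adds the qualifier ``plus one if it is the initial root,'' and your (c) omits it, so your base case for the root does not actually hold as written. This is an isolated off-by-one (the paper's own Step~3 also asserts a final root interval of $[0,0]$, which is itself inconsistent with the initialization by the same amount); fixing it --- either by adding the paper's ``plus one'' to (c), or by noting the root's interval is one longer throughout --- does not change the structure of your argument.
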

\begin{proof}
We can view Step 2 as starting with the tree constructed in Step 1, cutting the arc $(\mathit{root},x)$ when $\mathit{insert}(x)$ occurs, and cutting the arc $(y, x)$ when $\mathit{insert}(x, y, \mathit{test})$ occurs.  The items currently in the list are exactly the roots of the trees into which the initial tree has been cut, excluding the initial root.  Each arc cut connects a root with a child.  The interval of a root contains exactly as many integers as the number of vertices in its current tree, plus one if it is the initial root.  It follows that each interval $[i, j]$ has $i \le j$, which guarantees that the implementation is correct.
\end{proof}

We use Algorithm 5 to implement Step 2 of Algorithm 4 as follows.  For each $v$, we construct a sequence of the operations that add vertices to $C(v)$, as follows:  For each addition of a vertex $x$ to $C(v)$ in an arbitrary position, we construct an operation $\mathit{insert}(x)$. For each addition of a vertex $x$ before or after another vertex $y$ depending on whether $\mathit{test}$ is \textbf{true}, we construct an operation $\mathit{insert}(x, y, \mathit{test})$.
Each such test is a single $\mathit{after}$ query; there are no other queries. We do the list operations using Algorithm 5.
At the end of Algorithm 5, the items in $C(v)$ will be numbered consecutively in list order.
Use of Algorithm 5 eliminates the need to use a complicated on-line dynamic list order algorithm. Algorithm 5 may have other applications as well.

\section{Low-High Orders from Loop Nesting Information}
\label{sec:low-high-headers}

In this section we extend the low-high ordering algorithm of Section \ref{sec:low-high-reducible} to arbitrary flow graphs. To do so we must overcome the circularity caused by cycles. Ramalingan~\cite{loops:Ramalingam} has shown how to reduce the problem of computing dominators on an arbitrary flow graph to computing dominators on an acyclic graph.  His reduction runs in $O(m \alpha_{m/n}(n))$ time and can be improved to run in $O(m)$ time: it is an extension of the computation of a loop nesting forest that we discuss below.  Ramalingan's reduction changes the graph and hence affects the low-high order. For this reason we do not use his reduction but base our construction directly on the loop nesting forest.
The rough idea is to repeatedly contract strongly connected subgraphs to single vertices until no cycles exist. Then we apply the algorithm of Section \ref{sec:low-high-reducible} to the resulting acyclic graph, but when a vertex corresponding to a contracted subgraph is to be processed, we expand the subgraph and process its vertices, recursively expanding each such vertex that itself corresponds to a contracted subgraph.

To obtain a sequence of subgraphs to construct, we use the notion of a \emph{loop nesting forest}.  There are several ways to define such a forest \cite{loops:Havlak,loops:Maurer,loops:Ramalingam,loops:SGL,loops:Steensgaard,st:t}.
The one we use was first presented by Tarjan~\cite{st:t} and later rediscovered by Havlak~\cite{loops:Havlak}. It is a particular case of more general definitions \cite{loops:Maurer,loops:Ramalingam}. Let $F$ be the spanning tree generated by a depth-first search of $G$ starting from $s$, with $f(v)$ the parent of vertex $v$ in $F$.  Every cycle in $G$ contains an arc from a descendant to an ancestor in $F$~\cite{dfs:t}. Such an arc is called a \emph{back arc}.  Number the vertices from $1$ to $n$ in reverse postorder with respect to the search, and identify vertices by number. (Postorder, also known as \emph{finishing order}, is the order in which the search finishes its vertex visits.  See \cite{dfs:t}.)  An arc is a back arc if and only if it leads from a larger to a smaller vertex \cite{dfs:t}.  Deleting the back arcs makes the graph acyclic, and makes the vertex order given by the numbering topological \cite{dfs:t}.

The \emph{head} $h(v)$ of a vertex $v$ is the maximum proper ancestor $u$ of $v$ in $F$ such that there is a path from $v$ to $u$ containing only descendants of $u$ in $F$; if there is no such $u$, $h(v) = \mathit{null}$. The heads define a \emph{loop nesting forest} $H$: $h(v)$ is the parent of $v$ in $H$.  Graph $G$ is acyclic if and only if all heads are null.  For each vertex $v$ that is not a leaf in $H$, the descendants of $v$ in $H$ induce a non-trivial strongly connected subgraph of $G$, called a loop.  Because $H$ is a tree, any two loops are either disjoint or one contains the other.  Figure \ref{fig:headers} illustrates these concepts.  In general $H$ depends on $F$.  Graph $G$ is reducible if and only if every $F$ defines the same $H$.

\begin{figure}
\begin{center}
\scalebox{.8}[.8]{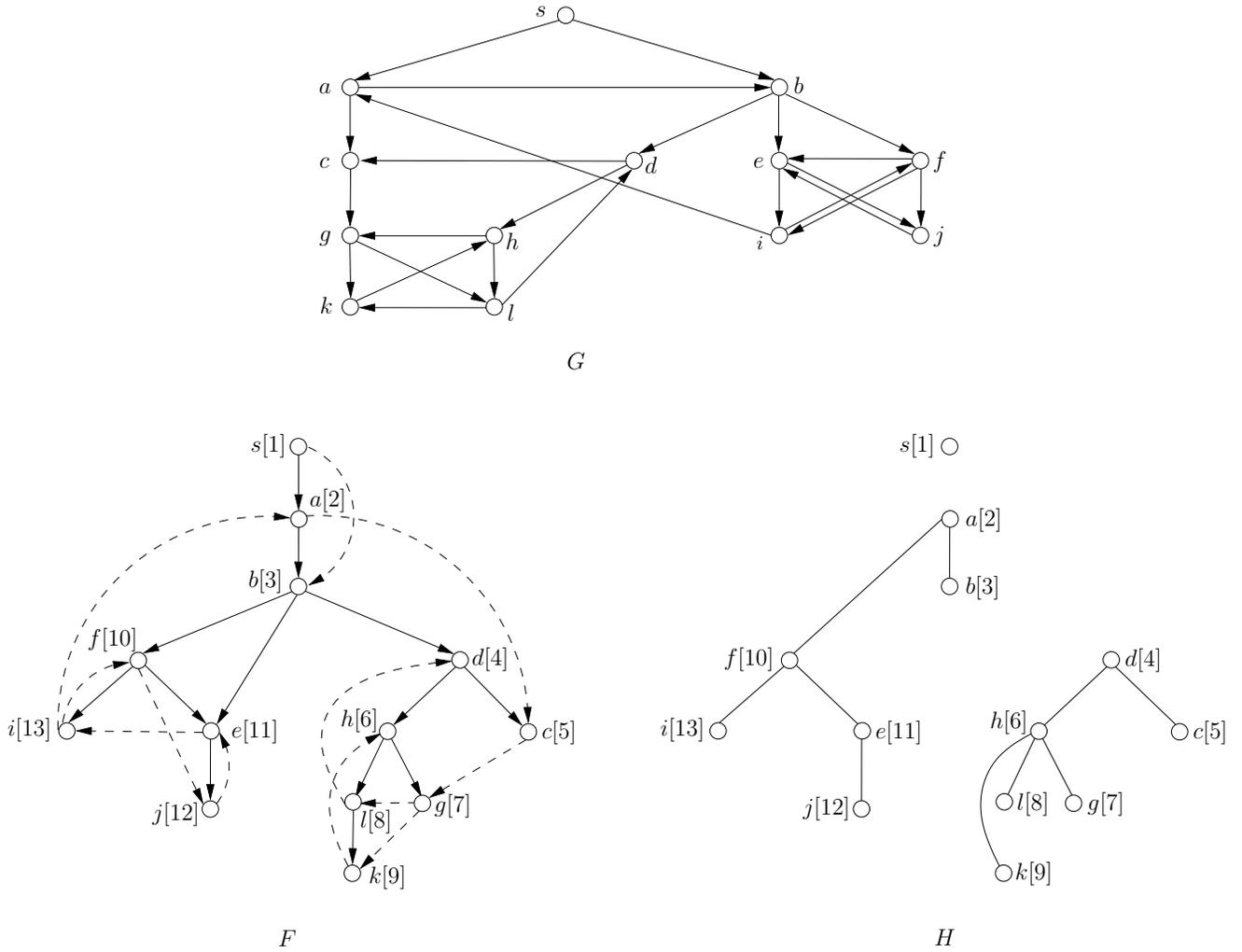}
\end{center}
\caption{A graph, a depth-first spanning tree (tree arcs are solid, non-tree arcs are dashed) with vertices numbered in reverse postorder (in brackets), and the corresponding loop nesting forest. There are five loops, with heads $e$, $f$, $a$, $h$, $d$.\label{fig:headers}}
\end{figure}

Forest $H$ defines a contraction sequence as follows. For each vertex $v$ in decreasing order, if $v$ is not a leaf in $H$, contract the subgraph induced by $v$ and all its children into a single vertex $v$. This subgraph is the \emph{interval} of $v$. The interval of $v$ is strongly connected (just before it is contracted), and all its vertices are descendants of $v$ in $F$. If all arcs leaving $v$ are deleted from its interval, the interval becomes acyclic, and $v$ is the unique vertex with no outgoing arc. Thus, if one starts from any vertex in the interval other than $v$ and follows any path in the interval, one eventually reaches $v$ without repeating a vertex. See Figure \ref{fig:headers-2}. Note that intervals are defined in the contracted graph, whereas loops are defined in the original graph.

\begin{figure}
\begin{center}
\scalebox{.8}[.8]{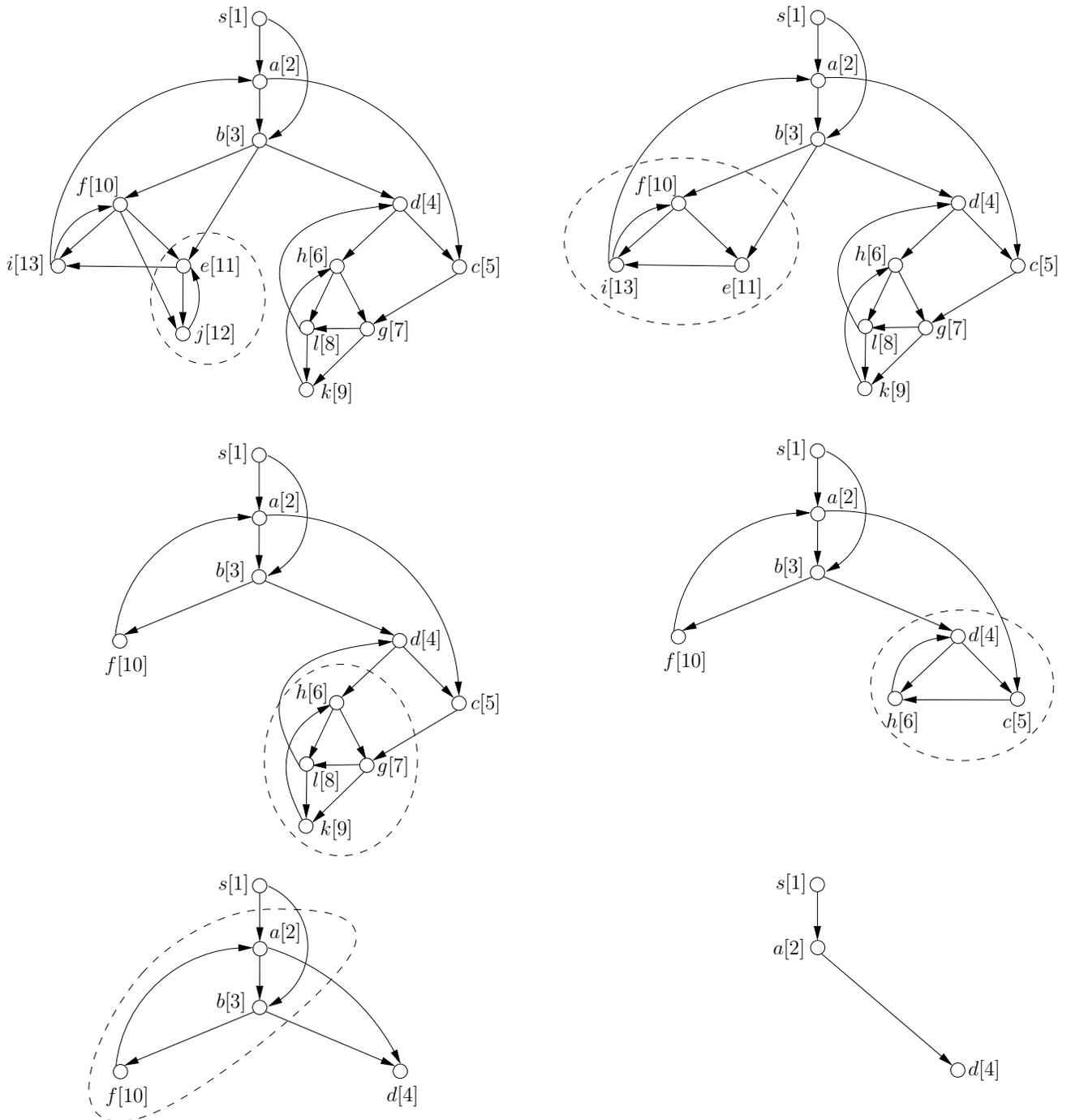}
\end{center}
\caption{Contraction sequence and intervals (strongly connected subgraphs) corresponding to the loop nesting forest in Figure \ref{fig:headers}.\label{fig:headers-2}}
\end{figure}

Many applications of dominators need a loop nesting forest as well \cite{dominators:bgkrtw,fdom:G,Italiano2012,loops:Ramalingam,st:t}. Such a forest can be computed in $O(m \alpha_{m/n}(n))$ time (\cite{st:t}, later rediscovered \cite{loops:Ramalingam1999}) or by a more-complicated algorithm in $O(m)$ time \cite{dominators:bgkrtw}.  These algorithms require less machinery than the fast algorithms for finding dominators (see \cite{dominators:bgkrtw}), but they are not entirely straightforward.  It is easy to extend these algorithms to find, for each vertex $v$ that is not a root of $H$, an outgoing arc from $v$ in the interval of $h(v)$, along with the corresponding original arc; indeed, the forest-construction algorithms proceed by traversing such outgoing arcs backward, doing so by examining the corresponding original arcs. We shall assume that a loop nesting forest, associated outgoing arcs, and the corresponding original arcs are available; if not, they can be computed by one of the cited algorithms.

\begin{lemma}
\label{lemma:headers}
Let $u$ be a vertex, and let $(v, w)$ be an arc such that $w$ is a descendant of $u$ in $H$.  Then $v$ is a descendant of $u$ in $H$ if and only if $v \ge u$.
\end{lemma}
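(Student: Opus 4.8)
The plan is to reduce the statement to a concrete description of a loop $L_u$ (the set of descendants of $u$ in $H$, with $u$ included) in terms of the depth-first search, and then verify each direction. Two facts about the reverse-postorder numbering that defines $H$ will be used repeatedly. First, the descendants of a vertex $x$ in $F$ occupy a contiguous block of numbers whose smallest element is $x$; equivalently, $y$ is a descendant of $x$ in $F$ if and only if $x \le y < x + s_F(x)$, where $s_F(x)$ counts the descendants of $x$ in $F$. Second, $h(\cdot)$ always points to a proper ancestor in $F$, so iterating shows that every ancestor in $H$ is also an ancestor in $F$; consequently $L_u$ consists of descendants of $u$ in $F$. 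The second fact immediately gives the ``only if'' direction: if $v$ is a descendant of $u$ in $H$ then $v$ is a descendant of $u$ in $F$, hence $v \ge u$.

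The core of the proof is the claim that \emph{if $x$ is a descendant of $u$ in $F$ and $G$ contains a path from $x$ to $u$ using only descendants of $u$ in $F$, then $x$ is a descendant of $u$ in $H$}. I would prove this by strong induction on the number of $x$. If $x = u$ there is nothing to show. Otherwise $u$ is a proper ancestor of $x$ in $F$, and the given path witnesses that $u$ lies in the set of proper ancestors over which the maximum defining $h(x)$ is taken; hence $h(x)$ is non-null with $h(x) \ge u$. The proper ancestors of $x$ in $F$ with number at least $u$ are exactly the vertices of the $F$-path from $u$ to $x$ other than $x$, all of which are descendants of $u$ in $F$ and strictly smaller than $x$; so $h(x)$ is a descendant of $u$ in $F$ and $h(x) < x$. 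Concatenating the $F$-tree path from $h(x)$ down to $x$ (which stays inside the $F$-subtree of $h(x)$, hence inside that of $u$) with the given path from $x$ to $u$ produces a path from $h(x)$ to $u$ using only descendants of $u$ in $F$; by the induction hypothesis $h(x)$ is a descendant of $u$ in $H$, and since $h(x)$ is the parent of $x$ in $H$, so is $x$.

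For the ``if'' direction, assume $v \ge u$. I first check that $v$ is a descendant of $u$ in $F$. Since $w$ is a descendant of $u$ in $H$ it is one in $F$, so $u \le w < u + s_F(u)$. If $(v, w)$ is a back arc then $w$ is an ancestor of $v$ in $F$, so $v$ is a descendant of $w$, hence of $u$, in $F$; if $(v, w)$ is not a back arc then $v \le w$, and together with $u \le v$ this gives $u \le v \le w < u + s_F(u)$, again placing $v$ in the $F$-subtree of $u$. I then exhibit a path in $G$ from $v$ to $u$ through descendants of $u$ in $F$: the arc $(v, w)$ followed by a path from $w$ to $u$ inside $L_u$, which exists because $L_u$ is strongly connected when $u$ is not a leaf of $H$ and equals $\{u\}$ (with $w = u$) otherwise, and $L_u$ is contained in the $F$-descendants of $u$. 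By the claim, $v$ is a descendant of $u$ in $H$, completing the proof. The main obstacle is the inductive claim about loops; granting it, both directions reduce to bookkeeping with the reverse-postorder numbering, the characterization of back arcs, and the strong connectivity of loops already recorded in the paper.
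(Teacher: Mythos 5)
Your argument is correct and, at the top level, mirrors the paper's case analysis: split on whether $(v,w)$ is a back arc to conclude that $v \ge u$ forces $v$ to be a descendant of $u$ in $F$, pass from $F$-descendancy to $H$-descendancy, and handle $v < u$ by the contrapositive. The genuine difference is that you isolate and prove, by induction on the reverse-postorder number, the characterization of the loop $L_u$ that the paper's proof uses tacitly: a vertex $x$ is a descendant of $u$ in $H$ if and only if $x$ is a descendant of $u$ in $F$ and $G$ has a path from $x$ to $u$ using only $F$-descendants of $u$. The paper's one-line inferences (``$(v,w)$ is a back arc and $v$ is a descendant of $w$ in $H$''; ``$v$ is a descendant of $u$ in $F$ and hence a descendant of $u$ in $H$'') are instances of that characterization rather than consequences of the definitions alone, so as written they lean on the reader already knowing the fact from Tarjan's treatment of loop nesting. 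Your extracted inductive claim closes that gap and is the right level of abstraction, at the cost of a longer argument; otherwise the two proofs run on the same tracks, including the use of the contiguity of $F$-subtrees in the numbering and the strong connectivity of $L_u$ to supply the path from $w$ back to $u$.
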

\begin{proof}
Since all descendants of $u$ in $H$ are also descendants of $u$ in $F$, $w > u$.  If $v > w$, then $(v, w)$ is a back arc and $v$ is a descendant of $w$ in $H$ and hence a descendant of $u$ in $H$.  Suppose $v < w$.  If $v \ge u$, $v$ is a descendant of $u$ in $F$ and hence a descendant of $u$ in $H$.  Suppose $v < u$.  Then $v$ is not a descendant of $u$ in $F$, and hence cannot be a descendant of $u$ in $H$.
\end{proof}

\begin{lemma}
\label{lemma:headers-parent}
Let $T$ be a tree with the parent property. Let $u \not= s$, and let $(v, w)$ be an arc such that $w$ but not $v$ is a descendant of $u$ in $H$. Then $w$ is $u$ or a sibling of $u$ in $T$.
\end{lemma}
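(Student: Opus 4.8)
The plan is to derive everything from the structure theorem for strongly connected subgraphs (Lemma~\ref{lemma:parent-strongly-connected}) together with the relationship between heads and the depth-first tree $F$ (Lemma~\ref{lemma:headers}). First I would dispose of the trivial case: if $u$ is a leaf in $H$ then its only descendant in $H$ is $u$ itself, so $w=u$ and there is nothing to prove. Otherwise, let $S$ be the set of descendants of $u$ in $H$; since $u$ is not a leaf, the vertices of $S$ induce a strongly connected subgraph of $G$ (a loop), so by Lemma~\ref{lemma:parent-strongly-connected} there is a set $\Sigma\subseteq S$ whose members are siblings in $T$ (a single vertex in the degenerate case) such that every vertex of $S$ is a descendant in $T$ of some member of $\Sigma$. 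I would also record two facts used repeatedly. (a) If $x$ is a proper ancestor of a vertex $y$ in $T$, then by Corollary~\ref{corollary:parent} $x$ dominates $y$, so $x$ lies on the path in $F$ from $s$ to $y$; since that path consists of $y$ and its proper $F$-ancestors, $x$ is a proper $F$-ancestor of $y$ and hence receives a strictly smaller reverse-postorder number than $y$. (b) Every vertex of $S$ is a descendant of $u$ in $F$, hence has number at least that of $u$.

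The core of the argument is to show that both $u$ and $w$ lie in $\Sigma$; once that is done, $u$ and $w$ are equal or are siblings in $T$, which is exactly the conclusion. For $u$: let $\sigma\in\Sigma$ be a member of which $u$ is a descendant in $T$. If $\sigma\neq u$ then $\sigma$ is a proper ancestor of $u$ in $T$, so by fact~(a) $\sigma$ has a strictly smaller number than $u$; but $\sigma\in S$, so by fact~(b) $\sigma$ has number at least that of $u$ --- a contradiction. Hence $\sigma=u$, so $u\in\Sigma$.

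For $w$: suppose $w\notin\Sigma$. Then $w$ is a proper descendant in $T$ of some $\sigma\in\Sigma$, and $t(w)$, lying on the path in $T$ from $\sigma$ to $w$, is a descendant of $\sigma$ in $T$. Since $w$ has an entering arc, $w\neq s$, and the single-arc path $(v,w)$ does not pass through $t(w)$ (which is neither $v$ nor $w$), so Lemma~\ref{lemma:parent} forces $v$ to be a descendant of $t(w)$ in $T$, hence of $\sigma$; moreover $v\neq\sigma$ because $v\notin S$. Thus $\sigma$ is a proper ancestor of $v$ in $T$, so by fact~(a) $\sigma$ has a strictly smaller number than $v$. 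On the other hand $\sigma\in S$ gives, by fact~(b), that $\sigma$ has number at least that of $u$, while Lemma~\ref{lemma:headers} applied to the arc $(v,w)$ --- with $w$ a descendant of $u$ in $H$ and $v$ not --- gives that $v$ has number strictly smaller than that of $u$. Chaining these, the number of $v$ is strictly less than that of $u$, which is at most that of $\sigma$, which is strictly less than that of $v$ --- a contradiction. Hence $w\in\Sigma$, completing the proof.

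I expect the only delicate point to be keeping the reverse-postorder inequalities straight: the whole argument turns on the observation that a $T$-ancestor of a vertex of $S$ that itself lies outside $S$ must be an $F$-ancestor of that vertex (and so strictly smaller in the numbering), which collides with the fact that everything in $S$ is at least $u$ in the numbering while the external arc tail $v$ is strictly below $u$ by Lemma~\ref{lemma:headers}. The degenerate case of Lemma~\ref{lemma:parent-strongly-connected}, where $\Sigma$ is a single vertex, needs to be noted but causes no trouble, since then $u=w$ outright.
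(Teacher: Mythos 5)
Your proof is correct and takes essentially the same approach as the paper's, which establishes the same chain of facts (members of $S$ have reverse-postorder number $\ge u$, $T$-ancestors are $F$-ancestors and hence smaller in that numbering, $v<u$ by Lemma~\ref{lemma:headers}, so $u\in\Sigma$ and $w\in\Sigma$) but in a much terser form. One small detour you can drop: to get that $v$ is a descendant of $t(w)$ in $T$ you invoke Lemma~\ref{lemma:parent} and assert $t(w)\neq v$ without justification, whereas the parent property itself states directly that $t(w)$ is an ancestor of $v$ in $T$ for the arc $(v,w)$, with no case on whether $t(w)=v$.
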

\begin{proof}
By the parent property, if $x$ is an ancestor of $y$ in $T$, then $x$ dominates $y$. This implies $x$ is an ancestor of $y$ in $F$, so $x < y$.  The set $S$ of descendants of $u$ in $H$ induces a strongly connected subgraph of $G$. By Lemma \ref{lemma:parent-strongly-connected}, $S$ consists of a set of siblings and possibly some of their descendants in $T$. Since $v$ is not such a descendant, the parent property implies that $w$ is one of these siblings. Since $u$ is minimum in $S$, $u$ is also one of these siblings.
\end{proof}

To guarantee the existence of arcs to satisfy the low-high property, we need an analogue of Lemma \ref{lemma:derived-sibling-arcs}. For each vertex $u \not= s$, we denote by $(f'(u), u)$ the derived arc of $(f(u), u)$, which is non-null because the path in $F$ from $s$ to $f(u)$ avoids $u$.

\begin{lemma}
\label{lemma:headers-sibling}
Let $T$ be a tree with the parent property. Then $T$ has the sibling property if and only if, for each $u \not= s$, $f(u) = t(u)$ or there is an arc $(y, w)$ with derived arc $(y',w)$ such that $w$ is a descendant of $u$ in $H$, $y < u$, and $y' \not= f'(u)$.
\end{lemma}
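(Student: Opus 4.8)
The plan is to prove both directions. The forward (``only if'') direction is a direct path-splicing argument; the converse is a minimal-counterexample argument modeled on the proof of Lemma~\ref{lemma:derived-reducible-sufficient}, with the loop nesting forest $H$ playing the role that a topological order plays there.

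\emph{Only if.} Suppose $T$ has the sibling property, so $T=D$ by Theorem~\ref{theorem:parent-sibling-2}, and fix $u\neq s$ with $f(u)\neq t(u)$. First, $f(u)$ is not a descendant of $u$ in $T$: otherwise $u$ would dominate $f(u)$ by Corollary~\ref{corollary:parent}, hence $u$ would be a proper ancestor of $f(u)$ in $F$, contradicting that $f(u)$ is the parent of $u$ in $F$. With the parent property this makes $f'(u)$ a genuine sibling of $u$ in $T$ that is an ancestor of $f(u)$, so $f'(u)$ dominates $f(u)$ but, by the sibling property, does not dominate $u$. Choose a simple path $P$ from $s$ to $u$ avoiding $f'(u)$; since every path to $f(u)$ meets $f'(u)$, $P$ also avoids $f(u)$. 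Let $S$ be the set of descendants of $u$ in $H$. Since $s\notin S$ while $u\in S$, $P$ has a first arc $(y,w)$ entering $S$; then $y<u$ by Lemma~\ref{lemma:headers}, and $w\ge u>y$, so $(y,w)$ has a non-null derived arc $(y',w)$. As $y'$ is an ancestor of $y$ in $T$ it dominates $y$ (Corollary~\ref{corollary:parent}) and therefore lies on $P$, whence $y'\neq f'(u)$. This $(y,w)$ is the required arc.

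\emph{If.} Assume the stated condition holds for every $u\neq s$, and suppose for contradiction that $T$ lacks the sibling property. Then there are siblings $u,v$ in $T$ with $u$ dominating $v$; among all such pairs choose one with $v$ smallest in reverse postorder (it may be necessary to refine this to the order in which vertices are expanded in the contraction sequence). Since $u$ dominates $v$, $u<v$. Let $z$ be the largest of $v,h(v),h(h(v)),\dots$ whose reverse-postorder number exceeds that of $u$, and let $S$ be the set of descendants of $z$ in $H$ (so $S=\{z\}$ with $z=v$ if $z$ is a leaf of $H$, and otherwise $S$ induces a strongly connected loop). Then $v\in S$, every vertex of $S$ exceeds $u$ so $S$ avoids $u$, and there is a path inside $S$ from any vertex of $S$ to $v$. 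One checks $t(z)\neq u$ using that $u$ and $v$ are $T$-siblings. Now apply the hypothesis to $z$. If $f(z)=t(z)$, then $(t(z),z)\in A$, and (granting that $u$ does not dominate $t(z)$) splicing a path from $s$ to $t(z)$ avoiding $u$ with this arc and a path inside $S$ from $z$ to $v$ gives a path from $s$ to $v$ avoiding $u$, a contradiction. Otherwise there is an arc $(y,w)$ with $w\in S$, $y<z$, and derived arc $(y',w)\neq(f'(z),z)$; here $y'\le y<z\le v$, Lemma~\ref{lemma:headers} says $y$ lies outside $S$, and Lemmas~\ref{lemma:parent-strongly-connected} and~\ref{lemma:headers-parent} pin down the positions of $w$ and of its $T$-sibling $y'$. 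Using Lemma~\ref{lemma:derived-arc} there is a path from $y'$ to $w$ through descendants of $y'$ in $T$; concatenating a path $s\to y'$ avoiding $u$, this path, and a path $w\to v$ inside $S$ again produces a path $s\to v$ avoiding $u$, the desired contradiction.

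The main obstacle is the single ingredient I flagged in each case of the converse: showing that the vertex at which we re-enter the loop $S$ --- namely $t(z)$ in the first case and $y'$ in the second --- is not dominated by $u$. In Lemma~\ref{lemma:derived-reducible-sufficient} the analogous fact was immediate because that vertex precedes $v$ in the one topological order under which the minimal counterexample was chosen. Here one must instead show that this re-entry vertex precedes $v$ in the order under which $(u,v)$ was chosen minimal, and moreover that $y'$ lies at the same level of $T$ as $u$ (a child of the common parent $t(v)$), so that the minimality hypothesis genuinely applies to the pair $(u,y')$ and forbids $u$ from dominating $y'$. Verifying this requires a careful joint analysis of how reverse postorder, the loop nesting forest $H$, and the tree $T$ interact --- in particular that arcs entering a loop come from vertices already ``expanded'' relative to $v$ --- and getting the minimality order right so that the argument closes uniformly is where the real work lies.
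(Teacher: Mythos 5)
Your ``only if'' direction matches the paper's proof and is correct. The ``if'' direction has the same overall shape as the paper's (minimal counterexample plus path-splicing), but there is a needless complication in the setup and, more importantly, the gap you flag is real and you do not close it.

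The complication: the detour through $z$ accomplishes nothing. The sequence $v, h(v), h(h(v)), \ldots$ is strictly decreasing in reverse postorder, so ``the largest member exceeding $u$'' is $v$ itself, giving $z=v$. The paper simply applies the hypothesis to the dominated sibling (your $v$) directly; Lemma~\ref{lemma:headers-parent} applied to that vertex then places $w$ as $v$ or a $T$-sibling of $v$, hence $y'$ as $t(v)$ or a $T$-sibling of $v$, with no intermediate $z$. The gap: you need that $u$ does not dominate the re-entry vertex. For $t(v)$ this is free, since $t(v)$ dominates $u$ by Corollary~\ref{corollary:parent} and is distinct from $u$, so antisymmetry of dominance rules out $u$ dominating $t(v)$. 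For $y'$ a sibling of $v$, minimality of $v$ only forbids $u$ dominating $y'$ if $y'\neq u$, and that is exactly what the condition $y'\neq f'(v)$ buys you \emph{once you observe that $f'(v)=u$}. That observation requires an argument you omit: since $u$ dominates $v$, $u$ must also dominate $f(v)$ (else a $u$-avoiding path to $f(v)$ followed by the arc $(f(v),v)$ would avoid $u$ and reach $v$), so $u$ and $f'(v)$ both dominate $f(v)$ and are comparable in $D$; whichever dominates the other is a dominated sibling smaller than $v$ unless they coincide, so minimality of $v$ forces $u=f'(v)$. Without this identification, the minimality appeal breaks exactly when $y'=u$ and the proof does not go through. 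A secondary slip: in the subcase $y'=t(v)$, Lemma~\ref{lemma:derived-arc} gives a path through descendants of $t(v)$, which include $u$; there you must instead note that $y'=t(v)$ forces $y=t(v)$, so $(t(v),w)$ is an actual arc of $G$ to be spliced directly onto the $u$-avoiding $s$-to-$t(v)$ path.
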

\begin{proof}
If $x$ dominates $z$, $x$ must be an ancestor of $z$ in $F$, so $x \le z$.  Thus if arc $(x, v)$ has a derived arc $(x', v)$, $x' \le x$.  Suppose $T$ has the sibling property.  Let $u \not= s$ be such that $f(u) \not= t(u)$.  Then $f'(u)$ is a sibling of $u$ in $T$.  By the sibling property, $f'(u)$ does not dominate $u$, so there is a path from $s$ to $u$ avoiding $f'(u)$.  Let $(y, w)$ be the first arc on this path with $w$ a descendant of $u$ in $H$.  By Lemma \ref{lemma:headers}, $y < u \le w$, so $w$ does not dominate $y$, which means that $(y, w)$ has a derived arc $(y', w)$.  Since $y'$ is on the path from $s$ to $w$, $y' \not= f'(u)$. Thus $T$ satisfies the condition in the lemma.

Conversely, suppose $T$ satisfies the condition in the lemma.  Suppose there are siblings $v$, $u$ such that $v$ dominates $u$.  Choose such a pair with $u$ minimum.  By Corollary \ref{corollary:parent-2} there is a path from $s$ to $t(u)$ that avoids $v$.  If $f(u) = t(u)$ then $v$ does not dominate $u$.  If $f(u) \not= t(u)$, by the condition in the lemma there is an arc $(y, w)$ with derived arc $(y', w)$ such that $w$ is a descendant of $u$ in $H$, $y < u$, and $y' \not= f'(u)$.  By Lemma \ref{lemma:headers-parent}, $w$ is $u$ or a sibling of $u$ in $T$. Thus $y'$ is $t(u)$ or a sibling of $u$ and $v$ in $T$.  If $y' = t(u)$, $v$ does not dominate $u$.
If $y'$ is a sibling of $u$, then $v$ does not dominate $y'$ by the choice of $u$, which implies that $v$ does not dominate $u$.
Thus in any case $v$ does not dominate $u$, so the sibling property holds.
\end{proof}

Algorithm 6, our low-high ordering algorithm using a loop nesting forest, is like Algorithm 4; it builds for each vertex an ordered list of its children in the dominator tree $D$ and then does a depth-first traversal of $D$ to find a low-high order. It inserts the vertices in increasing order into the lists of children in a way that gives the low-high property. As part of the insertion process, it defines, for each vertex $u \not= s$, a derived arc called the \emph{pivot arc} of $u$.  If the pivot arc of $u$ is not $(d(u), u)$, it also defines another derived arc called the \emph{test arc} of $u$.  Of the pivot and test arcs for $u$, one is $(f'(u), u)$, and the other is the derived arc $(y', w)$ of an arc $(y, w)$ satisfying Lemma \ref{lemma:headers-sibling}.  The test arc for $u$ is defined just before $u$ is inserted into a list of children.  The pivot arc for $u$ is defined either just before $u$ is inserted or earlier, when a cycle containing $u$ is (implicitly) expanded.  The pivot and test arcs determine where to insert $u$ in its list of siblings. An example is shown in Figure \ref{fig:headers-3}.

\begin{figure}
\begin{center}
\fbox{
\begin{minipage}[h]{16cm}
\begin{center}
\textbf{Algorithm 6: Construction of a Low-High Order using a Loop Nesting Forest}
\end{center}
Let $F$ be a depth-first spanning tree of $G$, with vertices numbered in reverse postorder and identified by number. Let $H$ be the loop nesting forest defined by $F$, with associated outgoing arcs and corresponding original arcs.
\begin{description}\setlength{\leftmargin}{10pt}
\item[Step 1:] Construct the derived graph $G'$ with respect to the dominator tree $D$.
\item[Step 2:] For each vertex $u$, initialize its list of children $C(u)$ to be empty. Apply the following steps to each vertex $u \not= s$ in increasing order:
    \begin{description}\setlength{\leftmargin}{10pt}
    \item[Step 2a:] Let $(x', v)$ be the pivot arc of $u$; if $u$ does not yet have a pivot arc, let $(x', v) = (f'(u), u)$.
    \item[Step 2b:] If $(x', v) = (d(u), u)$, insert $u$ first on $C(d(u))$, completing  Step 2. Otherwise, proceed to Step 2c.
    \item[Step 2c:] If $(x', v) \not= (f'(u), u)$, let the test arc $(y', w)$ of $u$ be $(f'(u), u)$; otherwise, let the test arc of $u$ be  the derived arc $(y', w)$ of an arc $(y, w)$ with $y < u$, $w$ a descendant of $u$ in $H$, and $y' \not= f'(u)$.
    \item[Step 2d:] If $x' = d(u)$, insert $u$ first in $C(d(u))$.  Otherwise, insert $u$ just before $x'$ in $C(d(u))$ if $y' = d(u)$ or $y'$ precedes $x'$ in $C(d(u))$, just after $x'$ otherwise.  Swap $(x', v)$ and $(y', w)$ if necessary so that $(y', w) = (f'(u), u)$.  If $v = u$, this completes Step 2; otherwise, proceed to Step 2e.
    \item[Step 2e:] Let $z$ be the child of $u$ in $H$ that is an ancestor of $v$.  Make $(x', v)$ be the pivot arc of $z$. Find a path $P$ of vertices $z=x_0,x_1,\ldots,x_k=u$ in the interval of $u$ by starting at $z$ and following outgoing arcs in the interval until reaching $u$. For each vertex $x_i$ other than $z$ and $u$ on $P$, find the original arc $(p, q)$ corresponding to the arc entering $x_i$ on $P$, and let the pivot arc of $x_i$ be the derived arc $(p', q)$ of $(p, q)$.
    \end{description}
\item[Step 3:] Do a depth-first traversal of $D$, visiting the children of each vertex $v$ in their order in $C(v)$. Number the vertices from $1$ to $n$ as they are visited. The resulting order is low-high on $G$.
\end{description}
\end{minipage}
}
\end{center}
\end{figure}

\begin{figure}
\begin{center}
\scalebox{.65}[.65]{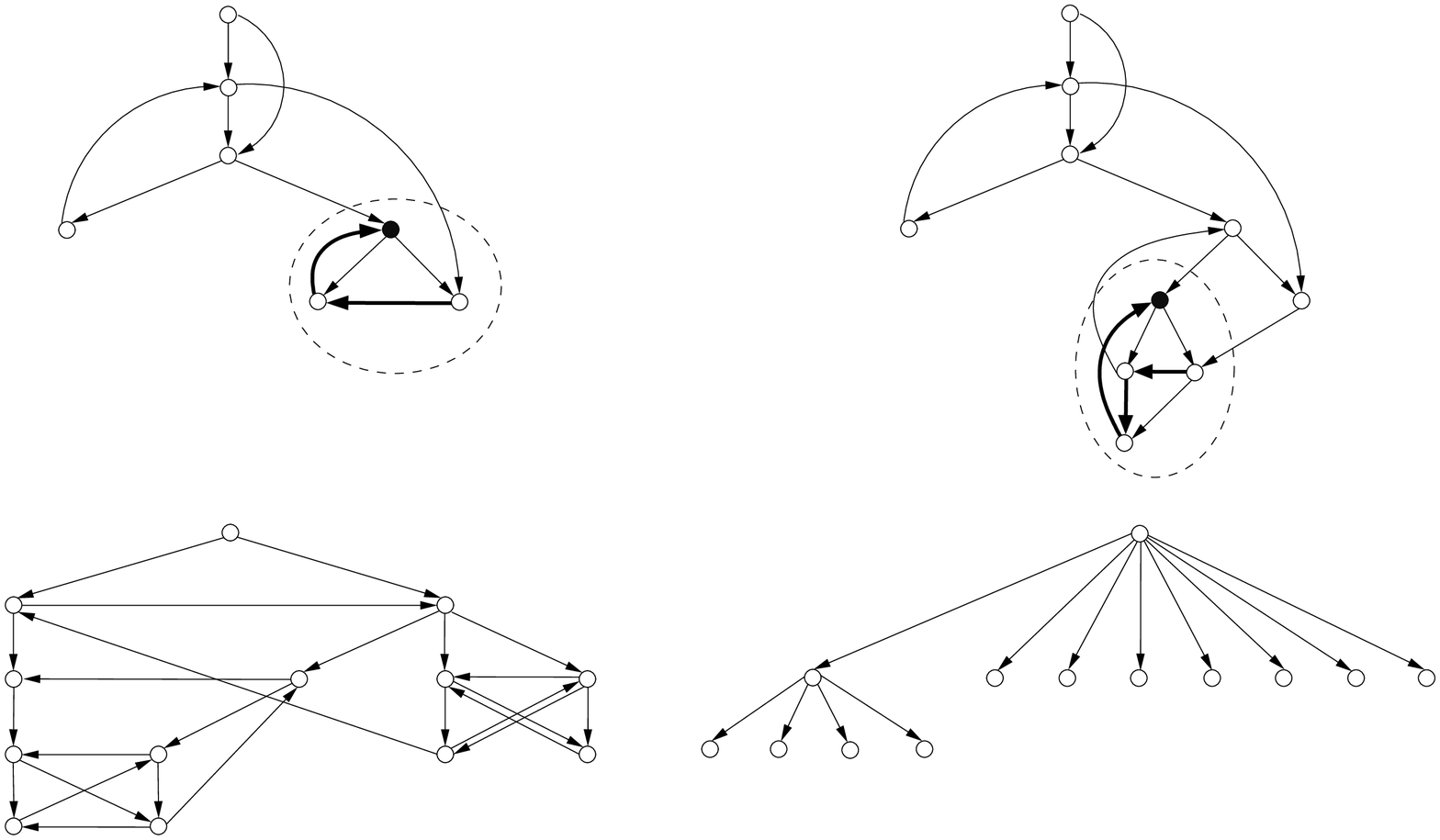}
\end{center}
\caption{Computation of a low-high order (inside the brackets) of the flow graph of Figure \ref{fig:headers}, using the intervals of Figure \ref{fig:headers-2}. When vertex $d$ is processed, Algorithm 6 sets $(b,d)$ as the pivot arc in Step 2a and $(a,c)$ as the test arc in Step 2c. In Step 2e it finds the path $P=(c,h,d)$ (shown with bold arcs) inside the interval of $d$ and sets $(c,g)$ as the pivot arc of $h$, which is the original arc corresponding to $(c,h)$. When vertex $h$ is processed, Algorithm 6 sets $(d,h)$ as the test arc in Step 2c. In Step 2e it finds the path $P=(g,l,k,h)$ (shown with bold arcs) inside the interval of $h$ and sets $(g,l)$ and $(l,k)$ as the pivot arcs of $l$ and $k$ respectively.\label{fig:headers-3}}
\end{figure}

\begin{theorem}
\label{theorem:low-high-headers}
Algorithm 6 is correct.
\end{theorem}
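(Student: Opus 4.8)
The plan is to reduce, by Lemma \ref{lemma:derived-preorder}, to showing that Algorithm 6 terminates, outputs a preorder of $D$, and that this preorder is low-high on the derived graph $G'$. Since the output is produced by a depth-first traversal of $D$ that respects the child lists $C(\cdot)$, it is automatically a preorder of $D$ once we know every vertex $u \ne s$ is inserted into $C(d(u))$; and since the input tree is $D$, which has the sibling property (Theorem \ref{theorem:parent-sibling-2}), Lemma \ref{lemma:headers-sibling} supplies the arc $(y,w)$ that Step 2c needs whenever the pivot arc is $(f'(u),u)$. So the first task is a well-definedness induction on the processing order $1,2,\dots,n$ (the reverse-postorder numbering): (a) each vertex's pivot arc is set at most once before it is processed, since the only place a pivot arc for $q$ can be set is Step 2e while processing $h(q)$, which occurs once; (b) whenever Step 2e is reached the current pivot arc $(x',v)$ has $v$ a proper descendant of $u$ in $H$, so the child $z$ of $u$ in $H$ that is an ancestor of $v$ exists, and the path $P$ inside the interval of $u$ exists because that interval is strongly connected with $u$ its unique sink once $u$'s outgoing arcs are removed; and (c) whenever Step 2d inserts $u$ relative to $x'$ or consults $y'$, that vertex is $d(u)$ (handled specially) or a sibling of $u$ in $D$ already present in $C(d(u))$. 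Part (c) uses Lemma \ref{lemma:headers} to see that the tails of the relevant derived arcs are numbered below $u$, and Lemma \ref{lemma:headers-parent} to see that the heads are $u$ or siblings of $u$ in $D$, so the tails are $d(u)$ or siblings of $u$ in $D$ and, being numbered below $u$, have already been processed and inserted.

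The main step is to show the preorder is low-high on $G'$: for each $u \ne s$, either $(d(u),u)\in A$, or there are derived arcs $(a',u),(b',u)\in A'$ with $a'$ before $u$ and $b'$ after $u$ in the output order (these are automatically $d(u)$ or siblings of $u$ in $D$, so neither is a descendant of $u$). The idea is to track, for each $u$, the two derived arcs the algorithm attaches to it. After Step 2d (including the swap) the test arc of $u$ is $(f'(u),u)\in A'$ — an arc into $u$ whose tail is an ancestor of $u$ in $F$, hence processed earlier — and the pivot arc is a derived arc $(x',v)\in A'$ with $v$ a descendant of $u$ in $H$; Step 2d inserts $u$ into $C(d(u))$ next to the tail $x'$, on the side away from the tail $y'$, so $u$ ends up strictly between $x'$ and $y'$ there. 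When $v=u$ the pivot arc is an arc into $u$, and, once one checks that the insertion side is correct relative to the final depth-first order, the two arcs directly witness the low-high condition for $u$. When $v \ne u$ the vertex $u$ is a loop head; the pivot arc is propagated down the interval of $u$ in Step 2e, eventually becoming the pivot arc of $v$ and serving as a witness there, while the witness for $u$ on the side currently occupied by the pivot tail has to be recovered from the loop — the last edge of $P$ supplies an original arc $(p,u)\in A$ with $p$ a proper descendant of $u$ in $H$, whose derived arc has tail a sibling of $u$, and one argues by a subsidiary induction along $P$ that this tail lies on the required side of $u$. Once low-high on $G'$ is established, the output is a preorder of $D$ that is low-high on $G$ by Lemma \ref{lemma:derived-preorder}, which is exactly what Algorithm 6 claims (and, with Theorem \ref{theorem:parent-low-high}, completes the proof that a tree with the parent property has the sibling property if and only if it has a low-high order).

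The hard part is this last point. The processing order (reverse postorder, refined by interval expansion) and the final low-high order (depth-first traversal of $D$) are different, and Step 2 inserts vertices at arbitrary positions, so ``$a'$ is processed before $u$'' does not by itself imply ``$a'$ precedes $u$ in the output.'' Making the witness argument work requires formulating and maintaining, as an invariant of Step 2, a precise statement to the effect that when $u$ is processed the two derived arcs currently attached to $u$ have tails that straddle $u$'s insertion position, and that the pivot arc, once passed down by Step 2e, will straddle $v$'s eventual position the same way; the delicate cases are the swap in Step 2d and the reassignment of pivot arcs along $P$ in Step 2e, where one must verify that the straddling stays consistent with how $z,x_1,\dots,x_{k-1}$ are placed when their turns come (the worked example of Figures \ref{fig:headers-2} and \ref{fig:headers-3} is a useful sanity check). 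The linear time bound is comparatively routine: Step 1 uses Algorithm 3, and Steps 2 and 3 use the $O(1)$-time ancestor--descendant test together with the off-line list-maintenance structure of Algorithm 5 for the insertions, since each ``$y'$ precedes $x'$'' decision is a single \emph{after} query.
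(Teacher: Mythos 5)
Your outline is accurate in its early and middle portions. The reduction to low-high on $G'$ via Lemma~\ref{lemma:derived-preorder} is a legitimate framing (the paper works with derived arcs throughout, so the content is the same). Your well-definedness observations (a)--(c) correspond to the paper's preliminary claim that every pivot arc $(x',v)$ is the derived arc of an original arc $(x,v)$ with $x<u$ and $v$ an $H$-descendant of $u$, and you correctly note that Lemma~\ref{lemma:headers-sibling} supplies the test arc in Step~2c when the pivot arc is $(f'(u),u)$. You also correctly locate the crux: for a loop head $u$ (the case $v\neq u$), the witness arc on the ``pivot side'' must be recovered from the last arc of $P$, and one must show it lands on the correct side of $u$ in $C(d(u))$.

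But this is exactly where you stop proving and start describing. Your ``subsidiary induction along $P$'' and the ``straddling invariant'' you say must be ``formulated and maintained'' are the entire substance of the theorem, and you supply neither. The paper's argument here has three interlocking claims, none of which your sketch states: \textbf{(i)} after all insertions, $u$ lies between $f'(u)$ and $z=x_0$ in $C(d(u))$ (proved by a three-way case analysis on the pivot tail $x'$ and which of pivot/test arc equals $(f'(u),u)$); \textbf{(ii)} the set $S$ of all $H$-descendants of the vertices $x_0,\dots,x_k$ on $P$ consists of $D$-siblings of $u$ plus some of their $D$-descendants (proved by taking the minimum $x_i$ whose $H$-descendants escape this form and deriving a contradiction from Lemma~\ref{lemma:parent-strongly-connected} and the parent property); and \textbf{(iii)} every $D$-sibling of $u$ lying in $S$ ends up on the same side of $u$ in $C(d(u))$ as $z$ (proved by induction on the minimum offending sibling $b$, using the dichotomy that $b$'s pivot arc is either $(x',v)$ itself or has both endpoints in $S$). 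Claim (iii) is what makes $(p',u)$, where $(p,u)$ is the original arc corresponding to the last arc of $P$, land on the side opposite $f'(u)$, completing the low-high witness.

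Beyond the omission, the shape of your proposed induction is off. You induct ``along $P$'', but the siblings of $u$ that must be controlled are not restricted to the $x_i$ on $P$: they include arbitrary $H$-descendants of the $x_i$ that happen to be $D$-siblings of $u$, and their pivot arcs were set at other times, during Step~2e for other loop heads. That is why the paper inducts on the minimum bad sibling in $S$ rather than walking $P$, and why claim (ii) about the structure of $S$ is needed first. Without (ii) and (iii) formulated and proved, what you have is a correct plan with a genuine gap at the theorem's hardest point.
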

\begin{proof}
Consider the choice of a pivot arc $(x', v)$ for a vertex $u \not= s$.  We claim that $(x', v)$ is the derived arc of an original arc $(x, v)$ such that $x < u$ and $v$ is a descendant of $u$ in $H$.  This is immediate if $(x', v)$ is chosen in Step 2a. In  Step 2e, $v$ is a descendant of $z$ in $H$ and $x < u < z$, so the choice of $(x', v)$ as the pivot arc for $z$ satisfies the claim. Also in Step 2e, each original arc $(p, q)$ corresponding to an arc on $P$ entering a vertex $x_i$ is such that $q$ is a descendant of $x_i$ in $H$ but $p$ is not. Lemma \ref{lemma:headers} implies $p<x_i$, so the choice of $(p', q)$ as the pivot arc for $x_i$ satisfies the claim.

If $(x', v) = (d(u), u)$ in Step 2b, $u$ will satisfy the low-high property. Consider the processing of a vertex $u$ such that its pivot arc $(x', v) \not= (d(u), u)$.  If $(x', v) = (f'(u), u)$, $f(u) \not= d(u)$, so there is an arc $(y, w)$ satisfying Lemma \ref{lemma:headers-sibling}, and the algorithm will successfully choose the derived arc $(y', w)$ of such an arc as the test arc for $u$.  One of $v$ and $w$ is $u$; the other is $u$ or a sibling of $u$ in $D$ by Lemma \ref{lemma:headers-parent}.  Thus each of $x'$ and $y'$ is $d(u)$ or a sibling of $u$ in $D$.  Furthermore both are less than $u$, so they have already been inserted into lists of children when $u$ is about to be inserted.  Hence the insertion of $u$ is well defined.  If $v = w = u$, the insertion position of $u$ guarantees that $u$ satisfies the low-high property.

It remains to show that $u$ satisfies the low-high property even if one of $v$ and $w$ is not $u$. The choice of additional pivot arcs in Step 2e is what guarantees this. If $f(u) = d(u)$, $u$ will satisfy the low-high property, since then it does not matter where $u$ is inserted into $C(d(u))$.  Thus suppose $f(u) \not= d(u)$.  Then $f'(u)$ is a sibling of $u$ in $D$. Consider the arc $(x', v)$ in Step 2e (after the conditional swap in Step 2d makes $(x', v) \not= (f'(u), u)$).  Since $v \not= u$, $v$ is a sibling of $u$ in $D$, and $z$ is $v$ or a sibling of $v$ in $D$, so $z$ is a sibling of $u$ in $D$. Arc $(x', v)$ becomes the pivot arc of $z$.

We claim that after all insertions into $C(d(u))$, $u$ is between $f'(u)$ and $z$.  To prove the claim, we consider three cases.  If $x' = d(u)$ and $(x', v)$ is the pivot arc of $u$, then $u$ will be inserted first in $C(d(u))$, in front of $f'(u)$, and later $z$ will be inserted first, so $u$ is between $f'(u)$ and $z$.  If $x' = d(u)$ and $(x', v)$ is the test arc of $u$, then $(f'(u), u)$ is the pivot arc of $u$, so $u$ will be inserted in front of $f'(u)$ and later $z$ will be inserted first, so again $u$ is between $f'(u)$ and $z$.  The third case is $x' \not= d(u)$.  Then $u$ will be inserted between $f'(u)$ and $x'$.  Since $(x', v)$ is the pivot arc of $z$, $z$ will eventually be inserted next to $x'$, leaving $u$ between $f'(u)$ and $z$ in this case as well.

Now consider the vertices $z = x_0, x_1, \ldots, x_k = u$ on $P$. Let $S_i$ be the set of descendants of $x_i$ in $H$, and let $S$ be the union of the $S_i$. Let $(x, v)$ be the original arc whose derived arc is $(x', v)$.  There are $u$-avoiding paths from $s$ to all vertices in $S$, via the path in $F$ from $s$ to $x'$ followed by $(x', v)$ followed by a path among vertices in $S$. Thus $u$ dominates no vertex in $S$.  But $d(u)$ must dominate all vertices in $S$, since otherwise there would be a path from $s$ to $u$ avoiding $d(u)$.  Thus all vertices in $S$ are descendants of $d(u)$ in $D$.

We claim that $S$ consists of a set of siblings of $u$ in $D$ and possibly some of their descendants. This is true of $S_0$, since by Lemma \ref{lemma:parent-strongly-connected} $S_0$ consists of a set of siblings in $D$ and possibly some of their descendants, $x_0$ must be one of these siblings since it is minimum in $S_0$, and $z = x_0$ is a sibling of $u$.  Suppose the claim is false.  Let $x_i$ be minimum such that some vertex in $S_i$ is a descendant of a sibling of $u$ in $D$ but that sibling is not in $S$.  By Lemma \ref{lemma:parent-strongly-connected}, $S_i$ consists of a set of siblings in $D$ and possibly some of their descendants.  Let $(p, q)$ be the original arc corresponding to the arc entering $x_i$ on $P$. Then $p$ is in $S_{i-1}$.  By the choice of $x_i$, $p$ is a descendant in $D$ of a sibling $b$ of $u$ that is in $S$. Either $q$ is a descendant of $b$ in $D$, in which case all vertices in $S_i$ are descendants of $b$ in $D$, or $q$ is not a descendant of $b$.  In the latter case, the parent property implies that $q$ is a sibling of $b$ and hence of $u$, and $S_i$ consists of a set of siblings of $b$ and hence of $u$, and possibly some of their descendants. We conclude that there is no such $x_i$, verifying the claim.

Finally we claim that all of the siblings of $u$ in $S$ are on the same side of $u$ in $C(d(u))$ as $z = x_0$.  This implies that $u$ has the low-high property, because $u$ will be between $f'(u)$ and $p'$, where $(p', u)$ is the derived arc of the original arc $(p, u)$ corresponding to the arc entering $u$ on $P$. Suppose the claim is false for some sibling of $u$ in $S$. Let $b$ be the minimum such sibling. Let the pivot arc of $b$ be the derived arc $(a', c)$ of original arc $(a, c)$. Either $(a', c) = (x', v)$, or $a'$  and $c$ are in $S$.  In the former case, if $x' = d(u)$, $b$ will be inserted at the front of $C(d(u))$ sometime after $z$ was inserted at the front, so $b$ will be on the same side of $u$ as $z$; if $x'$ is a sibling of $u$, $b$ will be inserted next to $x'$ sometime after $z$ was inserted next to $x'$, so again $b$ will be on the same side of $u$ as $z$. In the latter case, by Lemma \ref{lemma:headers-parent} $c$ is a sibling of $b$ and hence of $u$, and since $a'$ is in $S$, $a'$ is also a sibling of $u$.  Furthermore $a'<b$, so $a'$ is on the same side of $u$ as $z$ by the choice of $b$.  The insertion of $b$ next to $a'$ puts it on the same side as well.
\end{proof}

\begin{theorem}
\label{theorem:low-high-headers-2}
A tree with the parent property has the sibling property, and hence is the dominator tree, if and only if it has a low-high order with respect to $G$.
\end{theorem}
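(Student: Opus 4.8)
The plan is to treat this theorem as an assembly of results already in hand, proving one implication each from earlier statements.

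For the ``if'' direction, suppose $T$ has the parent property and a low-high order with respect to $G$. Then Theorem~\ref{theorem:parent-low-high} already asserts that $T$ has the sibling property (and hence, by Theorem~\ref{theorem:parent-sibling-2}, that $T=D$), so nothing new is needed here.

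For the ``only if'' direction, suppose $T$ has both the parent and sibling properties. By Theorem~\ref{theorem:parent-sibling-2}, $T=D$, so it suffices to exhibit \emph{some} low-high order of $D$. The plan is to invoke Algorithm~6: fix any depth-first spanning tree $F$ of $G$, let $H$ be the loop nesting forest it induces, and compute the associated outgoing arcs in each interval together with their corresponding original arcs (possible as described in Section~\ref{sec:low-high-headers}). Running Algorithm~6 on this input produces a numbering of the vertices, and by Theorem~\ref{theorem:low-high-headers} that numbering is low-high on $G$. Since it is an order of the vertices of $D=T$, it is the required low-high order of $T$. This establishes the equivalence.

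The substantive work---overcoming the circularity caused by cycles via the loop-nesting contraction sequence, and tracking pivot and test arcs so that each vertex ends up positioned between the two arcs that witness the low-high condition---is entirely internal to Theorem~\ref{theorem:low-high-headers}, which we are assuming. Consequently there is no real obstacle in the present proof; the only point to verify is that the hypotheses of Algorithm~6 are met, namely that we are handed a genuine dominator tree (guaranteed by the sibling property through Theorem~\ref{theorem:parent-sibling-2}) together with a loop nesting forest, which every flow graph admits. One could alternatively cite Algorithm~4 and Theorem~\ref{theorem:low-high-reducible} for the special case of reducible $G$, but Algorithm~6 disposes of all flow graphs simultaneously, so that is the cleaner route.
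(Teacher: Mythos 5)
Your proof is correct and follows essentially the same route as the paper, which simply states that the theorem is immediate from Theorems~\ref{theorem:parent-low-high} and \ref{theorem:low-high-headers}. You have just spelled out in more detail what ``immediate'' means: the ``if'' direction is Theorem~\ref{theorem:parent-low-high}, and the ``only if'' direction follows from $T=D$ (Theorem~\ref{theorem:parent-sibling-2}) plus the existence of a low-high order for $D$ guaranteed by the correctness of Algorithm~6 (Theorem~\ref{theorem:low-high-headers}).
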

\begin{proof}
Immediate from Theorems \ref{theorem:parent-low-high} and \ref{theorem:low-high-headers}.
\end{proof}

Excluding the computation of the loop nesting forest and associated information, it is straightforward to implement Algorithm 6 to run in $O(m)$ time.  To find test arcs in Step 2c, we compute, for each vertex $u \not= s$, two derived  arcs $(x', v)$ and $(y', w)$ with $v$ and $w$ descendants of $u$ in $H$, $x' \not= y'$, and $x'$ and $y'$ minimum, if two such arcs exist; otherwise, $f(u) = d(u)$.  We can do this by processing the vertices of $H$ bottom-up (from leaves to roots). This takes $O(m)$ time. We do the insertions into the lists of children using the off-line method developed in Section \ref{sec:list-order}.

We can if we wish run Algorithm 6 on the derived graph instead of the original graph. This is appealing if the loop nesting forest is not given but must be computed, since it simplifies the structure of the strongly connected subgraphs and simplifies Step 2. In the derived graph, every arc leads from a vertex to a child or sibling in $D$, so the vertex set of every strongly connected subgraph consists of a set of siblings in $D$. Furthermore every such subgraph with at least two vertices has at least two entry vertices from $s$. To run Algorithm 6 on the derived graph, we begin by computing the derived graph $G'$. Then we do a depth-first search of $G'$ to generate a depth-first spanning tree $F$ and to number the vertices in reverse preorder. Next we construct the loop nesting forest of $G'$ defined by $F$, along with associated in-trees and corresponding original arcs. Finally, we run Steps 2 and 3, ignoring the distinction between original arcs and derived arcs: in $G'$, every arc is its own derived arc.
\\

\begin{remark}
If we want to verify the dominator tree of an arbitrary graph but are not interested in constructing a low-high order, we can do the verification using Lemma \ref{lemma:headers-sibling}. Given $D$, we verify that $D$ is a tree with the parent property as in Section \ref{sec:properties}, compute the derived graph as in Section \ref{sec:derived-graph}, (verifying the correctness of the derived graph using ancestor-descendant tests), construct a loop nesting forest $H$ (for either the original or derived graph) and verify the condition in Lemma \ref{lemma:headers-sibling}, which can be done by processing the vertices of $H$ from leaves to roots.  The total time for verification by this method is $O(m)$.  This method assumes that the loop nesting forest is correct.  To verify that it is, we verify for each $u \not= s$ that all its children in $H$ are descendants of $u$ in $F$, that the interval of $u$ is strongly connected, and that it becomes acyclic when $u$ is deleted. This also takes $O(m)$ time.
\end{remark}

\section{Low-High Orders from Semi-Dominators}
\label{sec:semidominators}

\subsection{Low-high orders from independent spanning trees}
\label{sec:low-high-from-ist}

In this section we develop an alternative algorithm for finding a low-high order in an arbitrary graph.  Instead of using loop nesting information, the algorithm uses semi-dominator information.  This information is computed by the fast algorithms for finding dominators~\cite{domin:ahlt,dominators:bgkrtw,domin:lt}, making it easy to extend these algorithms to find not only the dominator tree, but a low-high order as well.
The algorithm has two steps.  The first step uses the semi-dominator information to build two independent spanning trees.  The second step uses two independent spanning trees to find a low-high order.  Since the two steps are independent and the second step requires no new ideas and is much easier to prove correct, we begin with it.

Let $B$ and $R$ be two independent spanning trees: for each vertex $v$, the paths in $B$ and $R$ from $s$ to $v$ have only the dominators of $v$ in common.  We denote by $b(v)$ and $r(v)$ the parent of $v$ in $B$ and $R$, respectively.  We begin by slightly modifying $B$ and $R$: for each vertex $v \not= s$, if $(d(v), v) \in A$ we replace $b(v)$ and $r(v)$ by $d(v)$.  Then we find the derived arcs of the arcs in $B$ and $R$.  Let $G'$ be the subgraph whose vertex set is $V$ and whose arcs are the derived arcs of those in $B$ and $R$.  Let $B'$ and $R'$ be the subgraphs of $G'$ defined by the derived arcs of $B$ and $R$, respectively.
See Figure \ref{fig:low-high-ist}.

\begin{figure}
\begin{center}
\scalebox{0.7}[0.7]{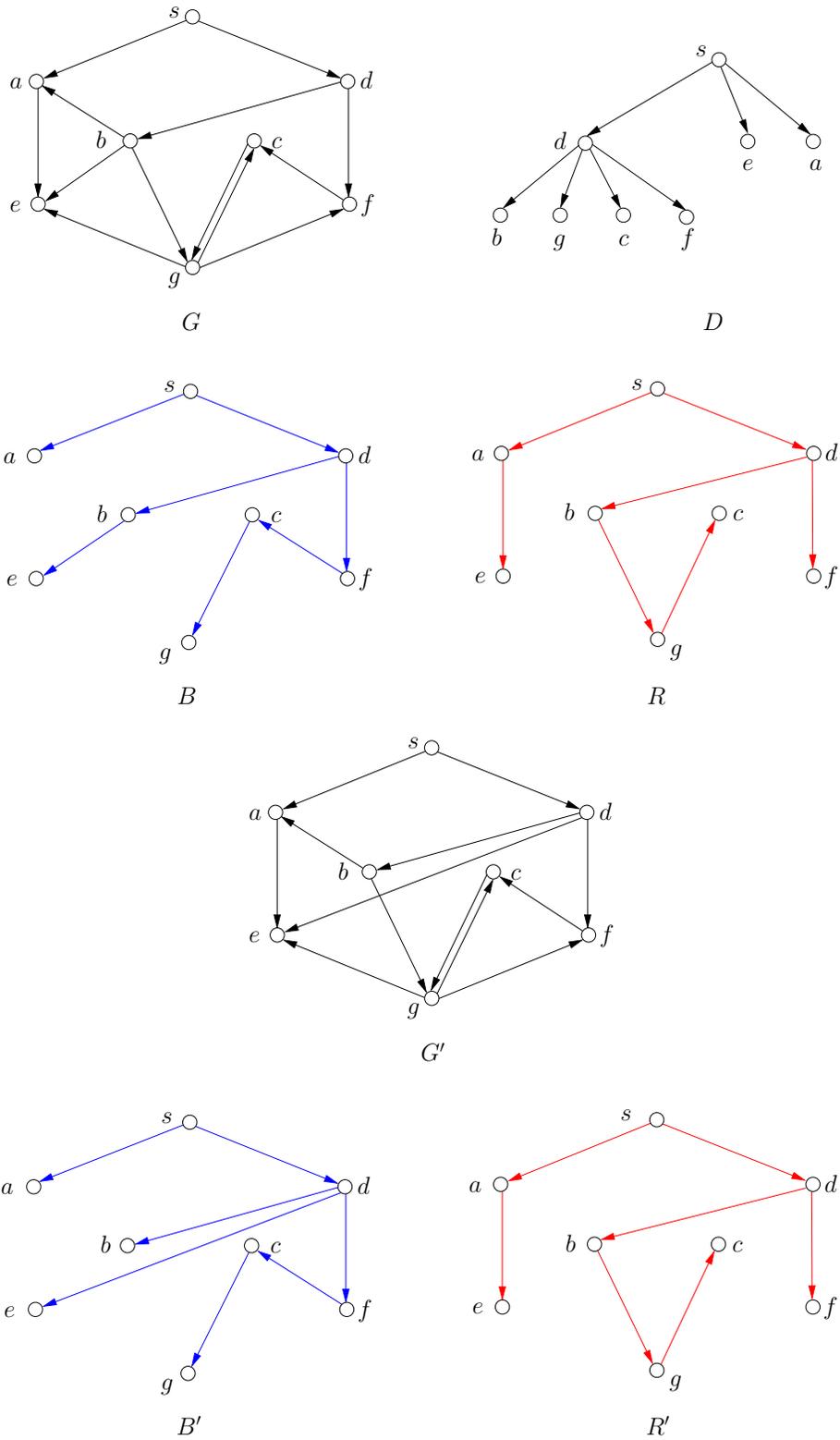}
\end{center}
\caption{\label{fig:low-high-ist} A flow graph, its dominator tree (with vertices arranged in a low-high order), and two independent spanning trees; the derived graph and two independent spanning trees.}
\end{figure}

\begin{lemma}
\label{lemma:ist-derived}
Subgraphs $B'$ and $R'$ are independent spanning trees in $G'$.
\end{lemma}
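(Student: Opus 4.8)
The plan is to establish, in order: (i) each of $B'$ and $R'$ is a spanning tree of $G'$ rooted at $s$; (ii) for every vertex $v$, the vertex set of the path $B'[s,v]$ is contained in that of $B[s,v]$, and likewise for $R'$ and $R$; and (iii) combining (i), (ii), and the independence of $B$ and $R$, the intersection $B'[s,v]\cap R'[s,v]$ equals exactly the set of dominators of $v$ in $G'$. Before starting I would note that the modified $B$ and $R$ are still independent spanning trees of $G$ rooted at $s$: the new parent $b(v)$ is either the old one or $d(v)$, in both cases a proper ancestor of $v$ in the original $B$, so re-parenting creates no cycle; and since the new path from $s$ to $v$ stays inside the old one while still containing all dominators of $v$, it introduces no new vertex shared with $R$. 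From here on $B$ and $R$ denote the modified trees.

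For (i), I would argue by induction on the depth of $v$ in $B$ that $v$ is reachable from $s$ in $B'$. If $v\neq s$, the unique arc of $B$ into $v$ is $(b(v),v)$ with $b(v)$ a proper $B$-ancestor of $v$; hence $v$ does not dominate $b(v)$ (otherwise $v$ would lie on $B[s,b(v)]$, a proper prefix of $B[s,v]$), so by Corollary \ref{corollary:parent} (applied to $D$, which has the parent property by Theorem \ref{theorem:parent-sibling}) $v$ is not a $D$-ancestor of $b(v)$ and the derived arc $(b'(v),v)$ of $(b(v),v)$ is non-null and lies in $B'$. By construction $b'(v)$ is a $D$-ancestor of $b(v)$, hence a dominator of $b(v)$ by Corollary \ref{corollary:parent}, hence a vertex of $B[s,b(v)]$, so its $B$-depth is strictly less than that of $v$; the induction hypothesis then makes $b'(v)$, and therefore $v$, reachable from $s$ in $B'$. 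Since $B'$ has exactly one arc entering each non-root vertex and none entering $s$, this reachability forces $B'$ to be a spanning tree of $G'$ rooted at $s$; the same argument handles $R'$, and in particular $G'$ is a flow graph. Pushing the same induction one step further gives (ii): the $B'$-parent of $v$ is $b'(v)$, which lies on the prefix $B[s,b(v)]$ of $B[s,v]$, so $B[s,b'(v)]\subseteq B[s,v]$, and applying the inductive hypothesis at $b'(v)$ and adjoining $v$ yields $B'[s,v]\subseteq B[s,v]$; symmetrically $R'[s,v]\subseteq R[s,v]$.

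For (iii) I would close the loop on dominators. Every arc of $G'$ is a derived arc of an arc of $G$, so $G'$ is a spanning subgraph of the full derived graph of $G$ with respect to $D$; by Lemma \ref{lemma:derived-parent} that graph has $D$ with the parent property, hence so does $G'$, and therefore by Corollary \ref{corollary:parent} every $D$-ancestor of $v$ dominates $v$ in $G'$. Now fix $v$: by (ii) and the independence of $B$ and $R$, $B'[s,v]\cap R'[s,v]\subseteq B[s,v]\cap R[s,v]$, which is the set of dominators of $v$ in $G$, i.e.\ the set of $D$-ancestors of $v$; conversely each $D$-ancestor of $v$ dominates $v$ in $G'$ and so lies on both $s$-to-$v$ paths $B'[s,v]$ and $R'[s,v]$. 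Thus the $D$-ancestors of $v$, the dominators of $v$ in $G'$, and $B'[s,v]\cap R'[s,v]$ all coincide, which is precisely the statement that $B'$ and $R'$ are independent spanning trees in $G'$ (and, incidentally, that $D$ is the dominator tree of $G'$).

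The point that needs care is the overloading of ``$G'$'': in this section it denotes only the subgraph formed by the derived arcs of $B$ and $R$, not the full derived graph of Section \ref{sec:derived-graph}, so Lemma \ref{lemma:derived-sibling} does not apply off the shelf; the dominator structure of $G'$ must instead be recovered from the parent property together with the two spanning trees built in (i), as above. The technical heart of the argument is the inclusion $B'[s,v]\subseteq B[s,v]$, which rests on the single observation that the tail $b'(v)$ of the derived arc into $v$, being a dominator of $b(v)$, sits strictly above $v$ on $B[s,v]$; everything else is bookkeeping.
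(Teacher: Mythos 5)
Your proof is correct and follows essentially the same approach as the paper's: the crux in both is the observation that $b'(v)$, being a $D$-ancestor of $b(v)$, lies on $B[s,b(v)]$ and hence on $B[s,v]$, which then forces $B'[s,v]\subseteq B[s,v]$ (and likewise for $R$). You do supply two details the paper compresses into a single ``it follows'': the induction on $B$-depth that makes $B'$ acyclic, and the verification that the $D$-ancestors of $v$ coincide with its dominators in the (sub-)derived graph $G'$, so that ``independent in $G'$'' is well-posed even though $G'$ here is not the full derived graph of Section~\ref{sec:derived-graph}. Both of those fill-ins are accurate and worth making explicit.
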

\begin{proof}
Any path from $s$ to $v$ in $B$ or $R$ contains $d(v)$, so if $(d(v), v) \in A$, replacing $b(v)$ and $r(v)$ by $d(v)$ leaves $B$ and $R$ trees.  Furthermore such replacements only eliminate vertices from paths in $B$ and $R$, so $B$ and $R$ remain independent.  Similarly, for any vertex $v \not= s$, $v$ does not dominate $b(v)$ or $r(v)$, so every arc in $B$ and $R$ has a derived arc.  If $v \not= s$, $b'(v)$ ($r'(v)$) is on any path from $s$ to $b(v)$ ($r(v)$), and hence on the path from $s$ to $v$ in $B$ ($R$). It follows that replacing all the arcs in $B$ and $R$ by their derived arcs produces two independent spanning trees in $G'$.
\end{proof}

As in Algorithms 4 and 6, the tree-based low-high ordering algorithm, Algorithm 7, builds ordered lists of the children in $D$ of each vertex $v \not= s$.  It uses the arcs in $B'$ and $R'$ to determine the insertion positions.  It inserts vertices in an order determined recursively, unlike the iterative orders used in Algorithms 4 and 6 (topological order in Algorithm 4, reverse postorder in Algorithm 6).

\begin{figure}
\begin{center}
\fbox{
\begin{minipage}[h]{16cm}
\begin{center}
\textbf{Algorithm 7: Construction of a Low-High Order Given Two Independent Spanning Trees}
\end{center}
Let $B$ and $R$ be independent spanning trees.
\begin{description}\setlength{\leftmargin}{10pt}
\item[Step 1:] For each vertex $v \not= s$, if $(d(v), v) \in A$, replace $b(v)$ and $r(v)$ by $d(v)$.
\item[Step 2:] Compute the derived arcs of the arcs in $B$ and $R$, and let the resulting derived graph be $G'$. Let $B'$ and $R'$, respectively, be the spanning trees in $G'$ whose arcs are the derived arcs of $B$ and $R$, respectively.
\item[Step 3:] For each $v$, initialize its list of children $C(v)$ to be empty.
\item[Step 4:] If $G'$ contains only one vertex $v \not= s$, insert $v$ anywhere in $C(s)$. Otherwise, let $v$ be a vertex whose in-degree in $G'$ exceeds its number of children in $B'$ plus its number of children in $R'$. Assume $v$ is a leaf in $R'$; proceed symmetrically if $v$ is a leaf in $B'$.  If $v$ is not a leaf in $B'$, let $w$ be its child in $B'$, and replace $b'(w)$ by $b'(v)$. Delete $v$, and apply Step 4 recursively to insert the remaining vertices other than $s$ into lists of children. If $b'(v) = d(v)$, insert $v$ anywhere in $C(d(v))$; otherwise, insert $v$ just before $b'(v)$ in $C(d(v))$ if $r'(v)$ is before $b'(v)$ in $C(d(v))$, just after $b'(v)$ otherwise.
\item[Step 5:] Do a depth-first traversal of $D$, visiting the children of each vertex $v$ in their order in $C(v)$. Number the vertices from $1$ to $n$ as they are visited. The resulting order is low-high on $G$.
\end{description}
\end{minipage}
}
\end{center}
\end{figure}

Figure \ref{fig:low-high-ist-2} illustrates how this algorithm works.

\begin{figure}
\begin{center}
\scalebox{0.62}[0.62]{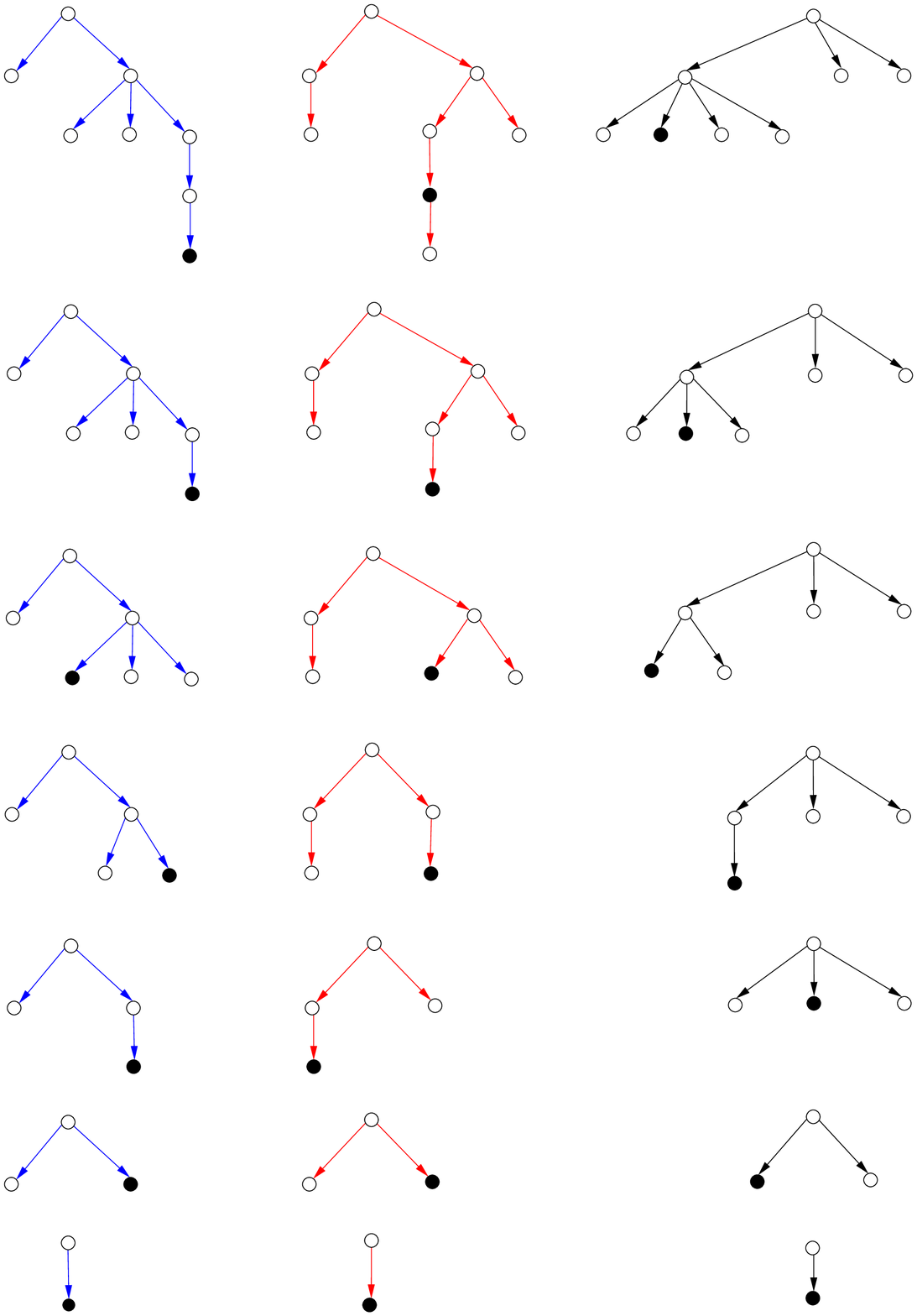}
\end{center}
\caption{\label{fig:low-high-ist-2} Computation of a low-high order starting from two independent spanning trees of the derived graph. The vertex removed at each application of Step 4 is shown filled.}
\end{figure}

\begin{theorem}
The vertex order computed by Algorithm 7 is low-high on $G'$ and hence on $G$.
\end{theorem}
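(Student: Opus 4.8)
The plan is to reduce to showing that the order is low-high on $G'$ -- by Lemma~\ref{lemma:derived-preorder}, and because enlarging the arc set only makes the low-high condition easier to meet, this yields low-high on $G$ -- and then to prove the statement by induction on the recursion in Step~4. Two preliminary observations frame the argument. First, Step~5 produces a preorder of $D$, which is the dominator tree of $G'$ by Lemma~\ref{lemma:derived-sibling}, provided every vertex $v\neq s$ is inserted exactly once and always into $C(d(v))$: the recursion peels, hence inserts, each vertex once, and I will show that the (possibly re-parented) $b'(v)$ used to place $v$ is either $d(v)$ or a $D$-sibling of $v$, so $C(d(v))$ is the right list, and since insertion order is the reverse of peel order, $b'(v)$ has already been inserted when $v$ is placed, so the placement is well defined. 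Second, restricted to the children of $d(v)$ the final preorder is exactly the order of $C(d(v))$, and distinct $D$-siblings are never descendants of one another; hence to establish the low-high condition at $v$ it suffices to show that either $(d(v),v)\in A'$, or $v$ is placed in $C(d(v))$ strictly between two $D$-siblings $x,y$ with $(x,v),(y,v)\in A'$.

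I would then carry the following invariant through the recursion. The current graph $\bar G$ -- $G'$ with the already-peeled vertices and their incident arcs deleted -- is a flow graph whose dominator tree is $D$ restricted to its vertices; the current trees $\bar B,\bar R$, in which the parent maps $b'(\cdot)$ and $r'(\cdot)$ may have been altered by contractions, are independent spanning trees of $\bar G$, meaning the $\bar B$- and $\bar R$-paths from $s$ to each current $v$ meet only in dominators of $v$ in $\bar G$; and for every current $v\neq s$, each of $b'(v)$ and $r'(v)$ is either $d(v)$ -- in which case $(d(v),v)\in A'$ -- or a $D$-sibling of $v$ that carries a genuine arc of $A'$ into $v$ (when $b'(v)$ is the image of a contracted $\bar B$-path, the witnessing arc is the one into $v$ from the $D$-sibling heading that path, and maintaining this is the delicate part of the bookkeeping). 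Granting the invariant, the second preliminary observation applies at each peel: if neither $b'(v)$ nor $r'(v)$ equals $d(v)$, then $v$ is inserted between two $D$-siblings and the invariant supplies the two arcs of $A'$; otherwise $(d(v),v)\in A'$ directly. This disposes of the low-high condition for every vertex.

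The substance of the proof is then (a) that a legal pivot always exists and (b) that the invariant is preserved. For (a) one needs: any flow graph with at least two non-root vertices that carries independent spanning trees $B',R'$ has a non-root vertex $v$ that is a leaf of $B'$ or of $R'$ with in-degree exceeding its total number of children in the two trees. This is the genuine combinatorial content of the theorem and the only place the hypothesis that $B,R$ are independent (rather than arbitrary spanning trees) is really used; the argument combines a count of arcs against children -- exploiting that a common arc of $B'$ and $R'$ must be of the form $(d(v),v)$ since a $D$-sibling cannot dominate $v$, and that a vertex which is a leaf of both trees qualifies automatically -- with structural consequences of independence, such as the fact that a $B'$-child of an $R'$-leaf has two distinct parents and hence in-degree at least two. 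For (b) one checks that deleting an $R'$-leaf $v$ and re-parenting its $B'$-child keeps $\bar B$ a spanning tree, keeps both trees independent (removing a vertex can only enlarge the dominator relation, so the two paths to any surviving vertex still meet only in its dominators), and preserves the clause guaranteeing genuine witnessing arcs, the last point ultimately resting on the arc $(v,w)\in A'$ to the re-parented child $w$ together with the in-degree condition used to select $v$.

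I expect the main obstacle to be the interaction of the contractions with the eventual order of the children lists: keeping honest the part of the invariant asserting that the contracted $b'(v)$ and $r'(v)$ still witness genuine arcs of $A'$ into $v$ is where a naive induction breaks, and handling it requires carefully tracking what happens to a vertex's position when one of its ancestors in $\bar B$ or $\bar R$ is spliced out before it. The pivot-existence lemma in (a) is the other delicate point; everything else (that $\bar B,\bar R$ remain spanning trees, that all insertions are well defined, that the final traversal yields a genuine preorder) is routine once the invariant is in place.
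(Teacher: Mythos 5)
Your overall architecture matches the paper's: induct on the peeling in Step~4, maintain an invariant on the current $\bar B$, $\bar R$, and establish that a legal pivot always exists. But there is a genuine gap in the invariant itself, and it is precisely the one you flag in your last paragraph; you name the obstacle but do not overcome it, and the obstacle is not a bookkeeping wrinkle but the conceptual heart of the proof.

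Your invariant asserts that for every current $v$, each of $b'(v)$ and $r'(v)$ either equals $d(v)$ or is a $D$-sibling that ``carries a genuine arc of $A'$ into $v$.'' After a splice this is simply false: if $v$ is peeled with $B'$-child $w$, the new $b'(w) = b'(v)$ does not in general carry any arc of $G'$ into $w$ — the witnessing arc into $w$ remains $(v,w)$, not $(b'(v),w)$, and $v$ is now gone from $\bar G$. Your parenthetical acknowledges this and proposes to track ``the $D$-sibling heading the contracted path,'' but the invariant you would actually need is positional — something like ``the vertex carrying the genuine arc into $w$ will end up on the same side of $w$ as $b'(w)$ in the final order'' — and proving that this positional claim survives further peeling is exactly where a naive induction on invariants stalls. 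The paper avoids this entirely by \emph{not} maintaining a link back to genuine arcs of $A'$ during the recursion. Instead, the induction hypothesis is that the recursive call produces an order that is low-high for the \emph{current, spliced} $\bar G$ (whose arc set is just the current $B' \cup R'$, spliced arcs and all). Then, on re-inserting $v$ adjacent to $b'(v)$, only $w$'s low-high witness needs re-examination, and a two-line positional argument does it: low-high in the spliced graph places $w$ strictly between $r'(w)$ and $b'(v)$, so once $v$ is placed adjacent to $b'(v)$, $w$ is also strictly between $r'(w)$ and $v$, and $(v,w)\in A'$ is the genuine arc. This ``prove low-high for the shrinking graph, then translate across one insertion'' step is the missing idea; it replaces your delicate bookkeeping with a single before/after comparison.

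Two smaller points. For pivot existence (your item (a)), your sketch is in the right spirit but does not give the argument: the paper restricts attention to the children $X$ of a deepest non-leaf $u$ of $D$ (so that every arc of $G'$ leaving $X$ lands in $X$, and at least one arc enters $X$ from $u$), then a degree count inside $X$ produces a vertex with in-degree exceeding out-degree, and independence forces it to be a leaf of $B'$ or $R'$; without restricting to such a deepest $u$ the counting does not close. Also, citing Lemma~\ref{lemma:derived-sibling} for ``$D$ is the dominator tree of $G'$'' is off: $G'$ here is only the $B' \cup R'$ subgraph of the full derived graph, so the right source is Lemma~\ref{lemma:ist-derived} (independence of $B'$ and $R'$ in $G'$), from which the dominator-tree claim follows.
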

\begin{proof}
We claim that Step 4 maintains the invariant that (i) $B'$ and $R'$ are independent spanning trees rooted at $s$; (ii) for every $v \not= s$, either $b'(v) = r'(v) = d(v)$, or $b'(v)$, $r'(v)$, and $d(v)$ are all distinct; and (iii) each arc in $B'$ or $R'$ is its own derived arc.
Let $u$ be a non-leaf of $D$ all of whose children in $D$ are leaves of $D$.  Let $X$ be the set of children of $u$ in $D$, and let $Y$ be the subset of $X$ that consists of the vertices $y$ such that $b'(y)=r'(y)=d(y)=u$. The invariant implies that each vertex in $Y$ has in-degree 1 in $G'$ and each vertex in $X - Y$ has in-degree 2 in $G'$.  Since each arc in $B'$ or $R'$ is its own derived arc, each arc in $G'$ leaving a vertex in $X$ enters a vertex in $X$. Since at least one arc from $u$ enters $X$, there must be a vertex $v$ in $X$ whose in-degree in $G'$ exceeds its out-degree in $G'$.  We claim that $v$ can be selected in Step 4.  Indeed, $v$ can be selected if is a leaf in both $B'$ and $R'$.  If not, then $v$ must have in-degree 2 and out-degree 1 in $G'$.  But then $v$ must be a leaf in either $B'$ or $R'$; if not, its common child $w$ in $B'$ and $R'$ would violate the invariant; $B'$ and $R'$ would not be independent spanning trees, since $v$ is an ancestor of $w$ in both $B'$ and $R'$, but $v$ and $w$ are siblings in $D$.  Hence $v$ can be selected in this case also. It follows that replacing $b'(w)$ by $b'(v)$ and deleting $v$ preserves the invariant.
\ignore{
The claim is true before any vertices are deleted.  Let $v$ be a vertex chosen for deletion.  If $v$ has no outgoing arcs, deletion of $v$ preserves the invariant, since $v$ dominates no vertices.  The choice of $v$ guarantees that if $b'(v) = r'(v) = d(v)$, $v$ has no outgoing arcs.  Thus suppose $b'(v)$, $r'(v)$, and $d(v)$ are distinct, and that $b'(w) = v$; the argument is symmetric if $r'(w) = v$. The choice of $v$ guarantees that $r'(w) \not= v$, which implies that $v$ dominates no vertices.  In particular, $v \not= d(w)$, so $b'(v)$, $v$, and $w$ are siblings in $D$.  It cannot be the case that $r'(w) = b'(v)$, for then the paths in $B'$ and $R'$ from $s$ to $w$ would share $b'(v)$, which does not dominate $w$.  Thus $r'(w)$, $v$, and $b'(v)$ are distinct siblings in $D$.  It follows that replacing $b'(w)$ by $b'(v)$ and deleting $v$ preserves the invariant.

Next we claim that Step 4 runs to completion; that is, there is always a vertex $v$ to choose in Step 4.  In $G'$ the total in-degree equals the total out-degree.  Since $s$ has in-degree zero and positive out-degree, at least one vertex $v \not= s$ has in-degree exceeding its out-degree and hence is a candidate for $v$.  Since $v$ has in-degree one or two, its out-degree is zero or one, which means it is a leaf in $B'$ or $R'$.
}

Finally, we claim that the computed order is low-high.  This is immediate if $G'$ has two vertices.  Suppose this is true if $G'$ has $k \ge 2$ vertices.  Let $G'$ have $k + 1$ vertices and let $v$ be the vertex chosen for deletion.  The insertion position of $v$ guarantees that $v$ has the low-high property.  All vertices in $G'$ after the deletion of $v$ have the low-high property in the new $G'$ by the induction hypothesis, so they have the low-high property in the old $G'$ with the possible exception of $w$, one of whose incoming arcs differs in the old and the new $G'$.  Suppose $b'(w)$ differs; the argument is symmetric if $r'(w)$ differs.  By the argument above, $v$, $w$, $b'(v)$, and $r'(w)$ are distinct siblings in $D$.  Since $w$ has the low-high property in the new $G'$, it occurs in $C(d(v))$ between $r'(w)$ and $b'(v)$.  Insertion of $v$ next to $b'(v)$ leaves $w$ between $r'(w)$ and $v$, so it has the low-high property in the old $G'$ as well.
\end{proof}

Using Algorithm 5 to implement Step 4, it is straightforward to implement Algorithm 7 to run in $O(n)$ time.

\subsection{Independent spanning trees from semi-dominators}
\label{sec:two-independent-spanning-trees}

Now we turn to the problem of finding two independent spanning trees on which to run Algorithm 7.  We need to define semi-dominators and related concepts.  These are defined with respect to a vertex numbering given by depth-first search, but in contrast to the reverse postorder numbering used in Section \ref{sec:low-high-reducible}, this numbering is preorder, the order in which the vertices are first visited by the search.  The tree construction algorithm does not actually use any vertex numbering; we introduce the preorder numbering just for definitions and analysis.

Let $F$ be a depth-first spanning tree of $G$ rooted at $s$, with vertices numbered from $1$ to $n$ as they are first visited by the search.  Identify vertices by number.  We shall use the following basic lemma about depth-first search repeatedly:

\begin{lemma} \emph{(Path Lemma \cite{dfs:t})}
\label{lemma:dfs}
If $v$ and $w$ are vertices such that $v<w$, then any path from  $v$ to $w$ contains a common
ancestor of $v$ and $w$ in $F$.
\end{lemma}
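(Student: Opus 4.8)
The plan is to prove the Path Lemma by induction on the number of arcs of the path. Write the path as $P = (v = x_0, x_1, \ldots, x_k = w)$ with $v < w$, and recall two standard facts about depth-first search \cite{dfs:t}: (i) if $(x,y)\in A$ and $x < y$, then $y$ is a descendant of $x$ in $F$ (an arc cannot lead to a higher-numbered vertex that is not its descendant); and (ii) the descendants of a vertex $x$ in $F$ are exactly the vertices $x, x+1, \ldots, x + \mathit{size}(x) - 1$, so the ancestor relation is interval containment, and for any two vertices $a$ and $b$ the subtree of their nearest common ancestor in $F$ contains both the subtree of $a$ and the subtree of $b$. Throughout I take ``ancestor'' to be reflexive, matching the convention used elsewhere in the paper; this only makes the conclusion weaker in the degenerate sense that the common ancestor exhibited might be $v$ or $w$ itself.

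For the base case $k = 1$, the single arc $(v, w)$ has $v < w$, so by fact (i) $w$ is a descendant of $v$; thus $v$ lies on $P$ and is an ancestor of both $v$ and $w$. For the inductive step with $k \ge 2$, let $j$ be the largest index such that $x_j$ is a descendant of $v$ in $F$; this is well defined since $x_0 = v$. If $j = k$, then $w$ is a descendant of $v$ and $v$ again serves as the desired common ancestor on $P$. Otherwise $j < k$, $x_j$ is a descendant of $v$ but $x_{j+1}$ is not; since a descendant of $x_j$ would be a descendant of $v$, $x_{j+1}$ is not a descendant of $x_j$, so fact (i) forces $x_{j+1} < x_j$. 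Because $x_j \le v + \mathit{size}(v) - 1$ while $x_{j+1}$ lies outside the interval $[v, v + \mathit{size}(v) - 1]$, we conclude $x_{j+1} < v$.

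Now $x_{j+1} < v < w$, so the sub-path $P' = (x_{j+1}, x_{j+2}, \ldots, x_k)$ runs from $x_{j+1}$ to $w$, has fewer arcs than $P$, and is nontrivial since $x_{j+1} \neq w$. By the induction hypothesis $P'$ contains a vertex $z$ that is an ancestor in $F$ of both $x_{j+1}$ and $w$. Let $q$ be the nearest common ancestor of $x_{j+1}$ and $w$ in $F$; then $z$ is an ancestor of $q$. By fact (ii) the interval of $q$ contains both $x_{j+1}$ and $w$, hence the whole range $[x_{j+1}, w]$, which contains $v$; therefore $q$ is an ancestor of $v$, and so is $z$. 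Since $z$ also lies on $P'$ and hence on $P$, the induction is complete. The step that deserves the most care is where $P$ first exits the subtree of $v$: the arc $(x_j, x_{j+1})$ may be a back arc or a cross arc, and in the cross-arc case $x_{j+1}$ need bear no ancestor/descendant relation to $v$ at all; the uniform way past this is to extract only the inequality $x_{j+1} < v$ and then recover an ancestor of $v$ via the ``sandwiching'' property of $q$. One should also verify the harmless degenerate possibilities (a walk with a repeated vertex cannot have $x_j = x_{j+1}$ here, since that would place $x_{j+1}$ in $v$'s subtree, and $z$ coinciding with $v$, $w$, or $x_{j+1}$ causes no trouble under the reflexive reading of ``ancestor'').
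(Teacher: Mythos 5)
The paper does not actually prove this lemma—it cites it to Tarjan's depth-first-search paper, so there is no internal proof to compare against. What you have supplied is a self-contained proof of a cited result, and it is correct. Your two facts about DFS preorder (arcs $(x,y)$ with $x<y$ go to descendants; descendant sets are contiguous intervals) are exactly the needed ingredients, the base case is right, the index $j$ is well defined, and the deduction $x_{j+1}<v$ followed by the ``sandwiching'' step via the nearest common ancestor $q$ is sound. You also correctly check the degenerate possibilities ($x_j = x_{j+1}$, $j+1=k$).

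One remark on economy: the classical argument is shorter and avoids the induction. Let $u$ be the minimum-numbered vertex on the path. Then $u\le v<w$. Walking forward along the path from $u$ to $w$ and using your facts (i) and (ii), every vertex encountered stays in the interval $[u,\,u+\mathit{size}(u)-1]$ (a higher successor is a descendant of the current vertex; a lower successor is still $\ge u$ and below the interval's right end, so still inside), hence $w$ is a descendant of $u$. Since $u$'s interval is contiguous and contains $u$ and $w$, it contains $v$ as well, so $u$ is an ancestor of $v$. Thus $u$, already on the path, is the desired common ancestor. Your induction reaches the same conclusion but rediscovers the ``drop below $v$, then sandwich'' idea at each level; the direct choice of the minimum vertex collapses this to a single pass. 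Both are legitimate, but the minimum-vertex version is the one usually quoted and is cleaner to formalize.
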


A path from $u$ to $v$ is \emph{high} if all its vertices other than $u$ and $v$ are higher than both $u$ and $v$.  If $v \not= s$, the \emph{semi-dominator} $\mathit{sd}(v)$ is the minimum vertex $u$ such that there is a high path from $u$ to $v$.  Since $f(v)$ is a candidate, $\mathit{sd}(v) \le f(v)$.  By Lemma \ref{lemma:dfs}, $\mathit{sd}(v)$ is a proper ancestor of $v$ in $F$.  Indeed, $\mathit{sd}(v)$ is the ancestor $u$ of $v$ in $F$ closest to $s$ such that there is a path from $u$ to $v$ avoiding all other vertices on $F[u, v]$ (the path in $F$ from $u$ to $v$).  Since there is a path from $s$ to $v$ avoiding all vertices on $F[\mathit{sd}(v), v]$ except $\mathit{sd}(v)$ and $v$, $d(v) \le \mathit{sd}(v)$.  The \emph{relative dominator} $\mathit{rd}(v)$ is the vertex $x$ on $F(\mathit{sd}(v), v]$ (the path in $F$ from a child of $\mathit{sd}(v)$ to $v$) such that $\mathit{sd}(x)$ is minimum, with a tie broken in favor of the smallest $x$.

Semi-dominators and relative dominators provide a simple way to compute dominators:

\begin{lemma} \emph{(\cite{domin:lt})}
\label{lemma:semi-relative-dominators}
Let $v \not= s$. If $\mathit{sd}(v) = \mathit{sd}(\mathit{rd}(v))$, then $d(v) = \mathit{sd}(v)$; otherwise, $d(v) = d(\mathit{rd}(v))$.
\end{lemma}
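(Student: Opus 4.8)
I would follow the classical route of Lengauer and Tarjan~\cite{domin:lt}, deriving the lemma from one structural fact about semi-dominators. Two auxiliary facts, both already in the text, get used throughout: (i) for every vertex $x\ne s$, $d(x)\le\mathit{sd}(x)$, and more generally if $x$ dominates $y$ with $x\ne y$ then $x$ lies on $F[s,y]$, so $x<y$ and $x$ is a proper ancestor of $y$ in $F$; and (ii) $\mathit{sd}(z)$ is the ancestor $a$ of $z$ in $F$ closest to $s$ for which there is a path from $a$ to $z$ avoiding every other vertex of $F[a,z]$. For the reduction, observe that $v$ itself lies on $F(\mathit{sd}(v),v]$, so the minimum of $\mathit{sd}$ over that path is at most $\mathit{sd}(v)$; hence $\mathit{sd}(\mathit{rd}(v))\le\mathit{sd}(v)$ always, and the two cases $\mathit{sd}(\mathit{rd}(v))=\mathit{sd}(v)$ and $\mathit{sd}(\mathit{rd}(v))<\mathit{sd}(v)$ are exhaustive. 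The heart is the following structural claim: if $u$ is a proper ancestor of $w\ne s$ in $F$ and $\mathit{sd}(x)\ge u$ for every vertex $x$ on $F(u,w]$, then $u$ dominates $w$.

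\noindent To prove that claim I would argue by contradiction. Given a path $P$ from $s$ to $w$ avoiding $u$ (so $u>s$), let $y$ be the last vertex of $P$ with $y<u$; it exists because $s<u$, and $y\ne w$ because $w>u$. Every later vertex of $P$ is $\ge u$, hence (as $P$ avoids $u$) strictly greater than $u$. Applying the Path Lemma (Lemma~\ref{lemma:dfs}) to the suffix of $P$ from $y$ to $w$, some common ancestor of $y$ and $w$ lies on it; it cannot be a vertex $>u$ (no such vertex is an ancestor of $y<u$), so it is $y$, and thus $y$ is an ancestor of $w$ in $F$. Let $z$ be the first vertex of that suffix, after $y$, lying on $F[u,w]$ (it exists, $w$ being such a vertex); then $z>u$, $z\in F(u,w]$, and $y$ is an ancestor of $z$ (both lie on the chain $F[s,w]$ with $y<z$). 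Every vertex of $F[y,z]$ other than $y$ and $z$ is either $\le u$ --- hence missed by the internal vertices of the $y$-to-$z$ portion of $P$, all of which exceed $u$ --- or lies on $F[u,w]$ --- hence missed by those internal vertices, by minimality of $z$. So that portion of $P$ witnesses, via (ii), $\mathit{sd}(z)\le y<u$, contradicting the hypothesis.

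\noindent Now the two cases. If $\mathit{sd}(\mathit{rd}(v))=\mathit{sd}(v)$, then by the definition of $\mathit{rd}(v)$ we have $\mathit{sd}(x)\ge\mathit{sd}(v)$ for all $x$ on $F(\mathit{sd}(v),v]$, so the structural claim with $u=\mathit{sd}(v)$, $w=v$ shows $\mathit{sd}(v)$ dominates $v$; being a proper ancestor of $v$ it is a strict dominator, hence an ancestor of $d(v)$ in $D$, and if it were a proper one it would dominate $d(v)$, forcing $\mathit{sd}(v)<d(v)$ by (i) and contradicting (i) again. So $d(v)=\mathit{sd}(v)$. If $\mathit{sd}(\mathit{rd}(v))<\mathit{sd}(v)$, put $u=\mathit{rd}(v)$, so $\mathit{sd}(v)<u<v$ and $d(v)<u$. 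Appending $F[u,v]$ to any path from $s$ to $u$ gives a path to $v$, which must contain $d(v)$; since $d(v)<u$ we have $d(v)\notin F[u,v]$, so $d(v)$ dominates $u$ and is therefore an ancestor of $d(u)$ in $D$. For the reverse I would show $d(u)$ dominates $v$: if some path $P$ from $s$ to $v$ avoided $d(u)$, it would avoid $u$ (prefix argument) and in fact all of $F[d(u),u]$ (each of whose vertices is dominated by $d(u)$); rerunning the surgery above with $d(u)$ in place of $u$ then produces a vertex $z$ on $F(u,v]\subseteq F(\mathit{sd}(v),v]$ with $\mathit{sd}(z)\le y<d(u)\le\mathit{sd}(u)=\mathit{sd}(\mathit{rd}(v))$, contradicting that $\mathit{rd}(v)$ minimizes $\mathit{sd}$ over $F(\mathit{sd}(v),v]$. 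Hence $d(u)$ dominates $v$, so $d(u)$ is an ancestor of $d(v)$ in $D$ as well, and $d(v)=d(u)=d(\mathit{rd}(v))$.

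\noindent The main obstacle is the path-surgery template itself --- choosing the pivot $y$, using the Path Lemma to force $y$ to be a tree ancestor, keeping track of which vertices lie above or below the threshold, and confirming that the extracted sub-path avoids exactly the interior of the relevant tree path. Once that template is pinned down (it is reused essentially verbatim in the second case, with $d(u)$ as threshold), the remaining deductions are short, relying only on $d(\cdot)\le\mathit{sd}(\cdot)$ and the fact that domination implies ancestry in $F$.
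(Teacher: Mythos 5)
Your proof is correct; since the paper only cites Lengauer and Tarjan here without reproving the lemma, I am checking your reconstruction against the classical argument, which it matches in structure: the path-surgery template (last vertex $y$ below the threshold, first subsequent vertex $z$ on the relevant tree segment, giving $\mathit{sd}(z)\le y$ via the paper's alternative characterization of $\mathit{sd}$), the inequality $d(\cdot)\le\mathit{sd}(\cdot)$, and the two-way domination argument in Case~2. Case~1 and the forward half of Case~2 are clean.

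The one assertion that deserves an explicit line of argument rather than a parenthetical is the claim, in the reverse half of Case~2, that every vertex of $F[d(u),u]$ (with $u=\mathit{rd}(v)$) is dominated by $d(u)$. It carries real weight: it is what forces the extracted $z$ into $F(u,v]\subseteq F(\mathit{sd}(v),v]$ so that the minimality of $\mathit{rd}(v)$ applies; without it, $z$ could fall in $F(d(u),u]$, and since $d(u)\le\mathit{sd}(u)=\mathit{sd}(\mathit{rd}(v))<\mathit{sd}(v)$, such a $z$ might lie below $\mathit{sd}(v)$ in $F$, where minimality says nothing. The claim is true, and its proof is a one-liner in the same spirit as your forward step: for any $q$ on $F(d(u),u]$ and any path $Q$ from $s$ to $q$, append $F[q,u]$ to obtain an $s$-to-$u$ walk, which must meet $d(u)$; every vertex of $F[q,u]$ is a descendant of $q$ and hence exceeds $d(u)$, so $d(u)$ must already lie on $Q$. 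Stating that sentence closes the only real gap in the sketch.
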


By applying the recurrence in Lemma \ref{lemma:semi-relative-dominators} to the vertices in increasing order, one can find the dominators in $O(n)$ time.  This is the last step of the Lengauer-Tarjan~\cite{domin:lt} and Buchsbaum et al.~\cite{dominators:bgkrtw} algorithms for finding dominators.  The former uses path compression to compute semi-dominators and relative dominators; the latter uses path compression and additional techniques.

Semi-dominators also provide a necessary and sufficient condition for a tree with the parent property to have the sibling property:

\begin{lemma}
\label{lemma:semidominator-sibling}
A tree $T$ with the parent property has the sibling property, and hence is $D$, if and only if, for every $v \not= s$, $f(v) = d(v)$ or $\mathit{sd}(v) < f'(v)$, where $(f'(v), v)$ is the derived arc of $(f(v), v)$ with respect to $T$.
\end{lemma}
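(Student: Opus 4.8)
The plan is to prove the two implications separately, using Theorem~\ref{theorem:parent-sibling-2} to identify, for a tree $T$ with the parent property, ``$T$ has the sibling property'' with ``$T=D$'' (for such $T$, ``$f(v)=t(v)$'' and ``$f(v)=d(v)$'' say the same thing, so I work with $f(v)=t(v)$). Two observations will be used repeatedly: every dominator of a vertex is one of its ancestors in $F$, and hence has a smaller number -- this follows from Corollary~\ref{corollary:parent} together with the Path Lemma (Lemma~\ref{lemma:dfs}) applied to the tree path of $F$ from $s$; and the three facts about $\mathit{sd}(v)$ recalled in the text (it is an ancestor of $v$ in $F$, it is realized by a high path to $v$, and there is a path from $s$ to $v$ avoiding all of $F[\mathit{sd}(v),v]$ except its endpoints). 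I will also use the following easy consequence of the Path Lemma: for an ancestor $x$ of $v$ in $F$, $x$ dominates $v$ if and only if there is no path from some vertex $<x$ to $v$ all of whose interior vertices exceed $x$.

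For the forward implication, assume $T=D$ and fix $v\neq s$ with $f(v)\neq t(v)$. The tree arc $(f(v),v)$ of $F$ has $f(v)$ not a descendant of $v$ in $T$ (since $d(v)\le f(v)<v$ forces $v$ not to dominate $f(v)$) and $f(v)\neq t(v)$, so its derived arc $(f'(v),v)$ has $f'(v)$ a \emph{proper} sibling of $v$ in $D$. By the sibling property $f'(v)$ does not dominate $v$, while $f'(v)$ dominates $f(v)$ by Corollary~\ref{corollary:parent}; hence $f'(v)$ is an ancestor of $v$ in $F$ with $d(v)<f'(v)\le f(v)<v$. It remains to deduce $\mathit{sd}(v)<f'(v)$. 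For this I would start from a path $P$ from a vertex $w<f'(v)$ to $v$ with all interior vertices $>f'(v)$ (such a $P$ exists by the consequence of the Path Lemma quoted above, since $f'(v)$ is a non-dominating ancestor of $v$ in $F$). If $P$ is a single arc it is a high path to $v$ from $w<f'(v)$ and we are done; otherwise I would do path surgery -- using the Path Lemma to locate common $F$-ancestors of the current start and $v$ on $P$, and recurring on the smaller vertex so produced (which stays an ancestor of $v$ in $F$ and $<f'(v)$) -- to whittle $P$ down to a genuinely high path to $v$ from a vertex $<f'(v)$, yielding $\mathit{sd}(v)<f'(v)$.

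For the backward implication I would argue by contradiction, following the proof of Lemma~\ref{lemma:headers-sibling}. Assuming the displayed condition but that $T$ lacks the sibling property, Theorem~\ref{theorem:parent-sibling-2} gives $T$-siblings $v$ and $u$ with $v$ dominating $u$ (so $v<u$ in $F$); choose such a pair with $u$ minimum in the preorder numbering. By Corollary~\ref{corollary:parent-2}, applied up the tree, there is a path from $s$ to $t(u)$ avoiding every proper descendant of $t(u)$, in particular avoiding $v$. If $f(u)=t(u)$, extending this path by the tree arc $(f(u),u)=(t(u),u)$ of $F$ produces a path from $s$ to $u$ avoiding $v$, contradicting that $v$ dominates $u$. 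Otherwise $\mathit{sd}(u)<f'(u)$, and since $f(u)\neq t(u)$ (and $f(u)$ is not a descendant of $u$ in $T$), $f'(u)$ is a proper sibling of $u$ -- hence of $v$ -- in $T$, with $f'(u)\le f(u)<u$. If $v\neq f'(u)$, then by minimality of $u$ the sibling pair $(v,f'(u))$ cannot be bad, so $v$ does not dominate $f'(u)$; take a $v$-avoiding path from $s$ to $f'(u)$, follow it by the path from $f'(u)$ to $u$ given by Lemma~\ref{lemma:derived-arc} (which uses only $u$ and $T$-descendants of $f'(u)$, hence avoids the sibling $v$), and obtain a $v$-avoiding path from $s$ to $u$, a contradiction. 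If $v=f'(u)$, then $\mathit{sd}(u)<v$, so $v$ is a proper $F$-descendant of $\mathit{sd}(u)$; the tree path $F[s,\mathit{sd}(u)]$ therefore avoids $v$, and the high path realizing $\mathit{sd}(u)$, whose interior vertices all exceed $u>v$, also avoids $v$, so their concatenation is once more a $v$-avoiding path from $s$ to $u$, the final contradiction. (Here the high path realizing $\mathit{sd}(u)<f'(u)$ plays the role of the arc $(y,w)$ in Lemma~\ref{lemma:headers-sibling}.)

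The step I expect to be the main obstacle is the forward direction's derivation of the \emph{strict} inequality $\mathit{sd}(v)<f'(v)$ from the fact that $f'(v)$ fails to dominate $v$: the bare existence of a $v$-reaching path avoiding $f'(v)$ is immediate, and the real work -- which is exactly where one must exploit that $f'(v)$ additionally dominates $f(v)$, pinning $f'(v)$ between $d(v)$ and $f(v)$ on the $F$-path to $v$ -- is the inductive depth-first-search path surgery that converts such a path into a high path starting strictly below $f'(v)$. By contrast the backward direction is essentially the bookkeeping of path concatenations, patterned on the already-established Lemma~\ref{lemma:headers-sibling}.
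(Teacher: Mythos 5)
Your backward direction is correct and takes a genuinely different route from the paper's. The paper determines by elimination where $f(v)$ can sit in $T$, concludes $f'(v)$ must equal the dominating sibling $u$, and then reads off that $\mathit{sd}(v)\geq u=f'(v)$, so the chosen $v$ directly violates the condition. You instead assume the condition holds at every vertex and split on whether the dominating sibling $v$ equals $f'(u)$, manufacturing a $v$-avoiding walk to $u$ in each branch (via Lemma~\ref{lemma:derived-arc} and minimality when $v\neq f'(u)$; via $F[s,\mathit{sd}(u)]$ followed by a high path when $v=f'(u)$). Both rest on the same facts; your decomposition is a valid alternative, if slightly longer.

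The forward direction has a real gap, as you yourself flag. After invoking your "consequence of the Path Lemma" you have a path $P$ from some $w<f'(v)$ to $v$ whose interior vertices all exceed $f'(v)$, and you propose a recursive "path surgery" via common $F$-ancestors. But the common $F$-ancestor of $w$ and $v$ that the Path Lemma produces on $P$ is just $w$ itself (every interior vertex is $>f'(v)>w$), so the recursion never advances; nothing about the proposed surgery whittles the interior vertices above $v$ rather than merely above $f'(v)$. What is missing is a single observation, not a recursion: let $w'$ be the last vertex on $P$ less than $v$, so $P[w',v]$ is a high path and $\mathit{sd}(v)\le w'$. Either $w'=w<f'(v)$ and you are done; or $w'$ is an interior vertex, hence $w'>f'(v)$, and by the Path Lemma $w'$ is an $F$-ancestor of $v$ lying strictly between $f'(v)$ and $v$. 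In that case the walk $F[s,w]\cdot P[w,w']\cdot F[w',f(v)]$ reaches $f(v)$ while avoiding $f'(v)$ (the first segment has vertices $\le w<f'(v)$, the other two have vertices $>f'(v)$), contradicting that $f'(v)$ dominates $f(v)$. This is exactly where the domination of $f(v)$ by $f'(v)$ enters, as you anticipated. The paper's version is a little leaner: it starts directly from a simple $f'(v)$-avoiding path from $s$ to $v$ (which exists since $f'(v)$ does not dominate $v$), takes the last vertex $u<v$ on it, and makes the same contradiction argument if $u>f'(v)$; your path $P$ works too, but you must make the "last vertex $<v$" step explicit rather than gesture at a recursion that does not terminate.
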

\begin{proof}
Suppose $T$ has the sibling property.  Then $T = D$ by Theorem \ref{theorem:parent-sibling-2}.  Let $v \not= s$ be a vertex such that $f(v) \not= d(v)$.  Then $f'(v)$ is a sibling of $v$ in $T$.  By the sibling property, there is an $f'(v)$-avoiding simple path $P$ from $s$ to $v$.  Let $u$ be the last vertex on $P$ less than $v$.  The part of $P$ from $u$ to $v$ is a high path, so $\mathit{sd}(v) \le u$.  By Lemma \ref{lemma:dfs}, $u$ is a proper ancestor of $v$ in $F$ and hence an ancestor of $f(v)$.  Suppose $u > f'(v)$.  Then there is a path from $s$ to $f(v)$ that avoids $f'(v)$, consisting of the part of $P$ from $s$ to $u$ followed by $F[u, f(v)]$.  But $f'(v)$ dominates $f(v)$, so there is no such path.  Thus it must be the case that $u < f'(v)$, which implies $\mathit{sd}(v) < f'(v)$.  Hence the condition in the lemma holds.

To prove the converse, suppose that $T$ does not have the sibling property.  Choose $u$ and $v$ with $v$ minimum such that $u$ and $v$ are siblings in $T$ and $u$ dominates $v$.  Then $u$ is an ancestor of $f(v)$ in $F$.  Furthermore $u$ dominates $f(v)$; otherwise, it would not dominate $v$.  By the parent property, $f(v)$ is not a descendant of $v$ in $T$, for then $v$ would dominate $f(v)$.  It is not a proper ancestor of $u$ in $T$, for then $f(v)$ would dominate $u$.  Nor can it be a descendant in $T$ of a sibling $x$ of $u$ and $v$, for then $x \le f(v) < v$ and $x$ dominates $f(v)$, which implies that one of $x$ and $u$ dominates the other, contradicting the choice of $v$.  The only remaining possibility is that $f(v)$ is a descendant of $u$ in $T$, possibly $u$ itself.  This implies $f'(v) = u$.  It cannot be the case that $\mathit{sd}(v) < u$, for then $F[s, \mathit{sd}(v)]$ would avoid $u$, as would the high path from $\mathit{sd}(v)$ to $v$, so $u$ would not dominate $v$.  Hence $\mathit{sd}(v) \ge u$, and $v$ violates the condition in the lemma.
\end{proof}

If the semi-dominators are available, we can use Lemma \ref{lemma:semidominator-sibling} to verify that a tree with the parent property has the sibling property and hence is $D$. But we know of no simple $O(m)$-time way to verify the correctness of the semi-dominators.  We thus prefer to use a low-high order to verify the sibling property, since it gives a complete solution to the dominator verification problem.

Our algorithm to construct two independent spanning trees requires as input a depth-first spanning tree $F$ along with the corresponding semi-dominators, relative dominators, and, for each vertex $v \not= s$, an arc $(g(v), v)$ that is last on some high path from $\mathit{sd}(v)$ to $v$.  It may happen that $g(v) = f(v)$, but only if $f(v) = d(v)$.  The converse need not be true.

It is straightforward to augment the computation of semi-dominators to compute last arcs as well; indeed, such a computation is implicit in the computation of semi-dominators.  The increase in running time is at most a constant factor.

To prove that the tree-construction algorithm is correct, we need the following technical lemma about semi-dominators:

\begin{lemma}
\label{lemma:semidominator-equal-node}
Let $v \not= s$ be a vertex such that $g(v) > v$, and let $a$ be the nearest common ancestor of $v$ and $g(v)$.  Then there is a vertex $w$ on $F(a, g(v)]$ such that $\mathit{sd}(w) = \mathit{sd}(v)$.
\end{lemma}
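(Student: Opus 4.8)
The plan is to pin down a specific high path, read off the location of $a$ from it, and then choose $w$ carefully as a minimum along a suffix of that path. Fix a high path $P$ from $\mathit{sd}(v)$ to $v$ whose last arc is $(g(v),v)$, and shortcut any repeated vertex so that $P$ is simple; since every vertex of $P$ other than $\mathit{sd}(v)$ and $v$ is $>v$, this shortcutting never touches the endpoints and it preserves the last arc. Write $P=(x_0,x_1,\dots,x_{k-1},x_k)$ with $x_0=\mathit{sd}(v)$, $x_{k-1}=g(v)$, $x_k=v$; as $g(v)>v>\mathit{sd}(v)$ we have $k\ge 2$, and $x_1,\dots,x_{k-1}$ all exceed $v$. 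First I would show $\mathit{sd}(v)$ is an ancestor of $g(v)$ in $F$: apply the Path Lemma (Lemma~\ref{lemma:dfs}) to the subpath of $P$ from $x_0$ to $x_{k-1}$, all of whose vertices other than $x_0$ exceed $v$ and hence exceed $\mathit{sd}(v)$; the common ancestor of $\mathit{sd}(v)$ and $g(v)$ it must contain is therefore $x_0=\mathit{sd}(v)$. Consequently $a=$ the nearest common ancestor of $v$ and $g(v)$ is a descendant of $\mathit{sd}(v)$, and $a$ is a proper ancestor of $g(v)$ (since $g(v)>v\ge a$). Let $c$ be the child of $a$ on $F[a,g(v)]$ and let $F_c$ be the subtree of $F$ rooted at $c$, so that $F(a,g(v)]=F[c,g(v)]$ is exactly the set of vertices of $F_c$ that are ancestors of $g(v)$.

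Next I would choose $w$. Let $\ell$ be the least index with $x_\ell,x_{\ell+1},\dots,x_{k-1}$ all in $F_c$; this exists because $x_{k-1}=g(v)\in F_c$, and $\ell\ge 1$ because $x_0=\mathit{sd}(v)$, being an ancestor of $a$, cannot lie in $F_c$. Let $w=x_m$ where $m\in\{\ell,\dots,k-1\}$ minimizes $x_m$. For any $i$ with $\ell\le i\le k-1$, the subpath of $P$ between $x_m$ and $x_i$ has all its vertices $\ge x_m$, so by the Path Lemma the common ancestor of $x_m$ and $x_i$ it contains must be $x_m$; thus $w$ is an ancestor of $x_i$, in particular of $g(v)$. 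Since also $w\in F_c$, we conclude $w\in F(a,g(v)]$, the location claimed. It then remains to prove $\mathit{sd}(w)=\mathit{sd}(v)$.

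For $\mathit{sd}(w)\le\mathit{sd}(v)$ I would use the characterization of the semi-dominator recalled just before Lemma~\ref{lemma:semi-relative-dominators}: it suffices to exhibit a path from $\mathit{sd}(v)$ to $w$ avoiding all other vertices of $F[\mathit{sd}(v),w]$ (note $\mathit{sd}(v)$ is a proper ancestor of $w$, being an ancestor of $a$). I claim $P[x_0,x_m]$ works. The path $F[\mathit{sd}(v),w]$ runs through $a$, so every vertex on it other than $\mathit{sd}(v)$ and $w$ either has preorder at most $a$ or is a proper ancestor of $w$ lying in $F_c$. Each interior vertex $x_i$ ($1\le i\le m-1$) has preorder $>v\ge a$, excluding the first case; and $x_i$ either lies outside $F_c$ (when $i<\ell$, by minimality of $\ell$) or, when $\ell\le i\le m-1$, satisfies $x_i>x_m=w$ by minimality of $m$ and simplicity of $P$, so is not an ancestor of $w$ — excluding the second case. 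For $\mathit{sd}(w)\ge\mathit{sd}(v)$, suppose not; take a high path $H$ from $\mathit{sd}(w)$ to $w$ and form the walk $H$, then $P[x_m,x_{k-1}]$, then the arc $(g(v),v)$. Since $w>v$, every interior vertex of this walk exceeds $v$ (the ones from $H$ exceed $w>v$, the remaining ones are $w$ itself or interior vertices of $P$), and none equals $\mathit{sd}(w)$ or $v$; hence it is a high path from $\mathit{sd}(w)$ to $v$, forcing $\mathit{sd}(v)\le\mathit{sd}(w)<\mathit{sd}(v)$, a contradiction. Combining the two inequalities gives $\mathit{sd}(w)=\mathit{sd}(v)$.

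I expect the main obstacle to be the avoidance argument in the third paragraph: showing that the prefix $P[x_0,x_m]$ meets $F[\mathit{sd}(v),w]$ only at its ends is exactly what forces the choice of $w$ as the minimum of the portion of $P$ that has already descended into $F_c$ — neither $g(v)$ itself nor the global minimum of $P$ would do. A secondary subtlety is that $a$ may coincide with $v$ or with $\mathit{sd}(v)$, but describing $F[\mathit{sd}(v),w]$ uniformly as ``vertices of preorder $\le a$'' together with ``proper ancestors of $w$ inside $F_c$'' avoids any case split.
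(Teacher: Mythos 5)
Your choice of $w$ is correct — in fact it coincides with the paper's (which simply takes $w$ to be the minimum of \emph{all} interior vertices of $P$, not just a suffix) — but the step ``$x_i$ lies outside $F_c$ (when $i<\ell$, by minimality of $\ell$)'' is a genuine gap. Minimality of $\ell$ tells you only that for each $\ell'<\ell$ the suffix $x_{\ell'},\dots,x_{k-1}$ is not entirely in $F_c$; combined with $x_\ell,\dots,x_{k-1}\in F_c$ this pins down $x_{\ell-1}\notin F_c$ and nothing more. The statement you want, that $P$ never visits $F_c$ before index $\ell$, is equivalent to the claim that $P$ cannot leave $F_c$ and re-enter, and that requires its own argument. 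One way: if $x_q\in F_c$ and $x_{q+1}\notin F_c$ then $x_{q+1}<x_q$ (a DFS arc to a larger vertex goes to a descendant), forcing $x_{q+1}<c$; symmetrically a re-entry arc $(x_{r-1},x_r)$ with $x_{r-1}\notin F_c$, $x_r\in F_c$, $x_{r-1}>v$ forces $x_{r-1}>c+\mathit{size}(c)-1$. The Path Lemma on $P[x_{q+1},x_{r-1}]$ then yields a common ancestor of preorder at most $a$, impossible since every interior vertex of $P$ exceeds $v\ge a$. Without something of this sort, your avoidance argument for $\mathit{sd}(w)\le\mathit{sd}(v)$ is incomplete; it is the very step you yourself flagged as the main obstacle.

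The paper sidesteps all of this: with $w$ the global minimum of the interior vertices, every other interior vertex strictly exceeds $w$, so $P[\mathit{sd}(v),w]$ is \emph{already} a high path to $w$ (no tree-path avoidance needed), giving $\mathit{sd}(w)\le\mathit{sd}(v)$ immediately; $\mathit{sd}(w)\ge\mathit{sd}(v)$ follows as in your last step; and $w\in F(a,g(v)]$ follows from one application of the Path Lemma to $P[w,g(v)]$ exactly as you do for $g(v)$. The $F_c$/$\ell$ machinery, the preliminary observation that $\mathit{sd}(v)$ is an ancestor of $g(v)$, and the tree-path-avoidance characterization of semi-dominators all become unnecessary. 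If you want to keep your suffix-minimum formulation, supply the no-re-entry argument above; otherwise replacing ``minimum over $\{\ell,\dots,k-1\}$'' by ``minimum over $\{1,\dots,k-1\}$'' closes the gap and shortens the proof considerably.
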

\begin{proof}
Let $P$ be a high path from $\mathit{sd}(v)$ via $g(v)$ to $v$, and let $w$ be the smallest vertex on this path other than $\mathit{sd}(v)$ and $v$.  We prove that $w$ satisfies the lemma.  Since $w > v \ge a$, $w$ is not an ancestor of $a$ in $F$.  The part of $P$ from $\mathit{sd}(v)$ to $w$ is a high path, so $\mathit{sd}(w) \le \mathit{sd}(v)$.  Any high path to $w$ can be extended to a high path to $v$ by adding the part of $P$ from $w$ to $v$, so $\mathit{sd}(w) \ge sd(v)$.  Hence $\mathit{sd}(w) = \mathit{sd}(v)$.  If $w = g(v)$, $w$ satisfies the lemma.  Otherwise, $w < g(v)$, so by Lemma \ref{lemma:dfs} there is a common ancestor of $w$ and $g(v)$ in $F$ on the part of $P$ from $w$ to $g(v)$.  Since $w$ is smallest on this part of $P$, it must be this ancestor.  Thus $w$ satisfies the lemma.
\end{proof}

Here is our algorithm to construct two independent spanning trees:
\begin{figure}[h]
\begin{center}
\fbox{
\begin{minipage}[h]{16cm}
\begin{center}
\textbf{Algorithm 8: Construction of Two Independent Spanning Trees $B$ and $R$}
\end{center}
For each vertex $v \not= s$ in increasing order, choose one of $f(v)$ and $g(v)$ to be the parent $b(v)$ of $v$ in $B$ and the other to be the parent $r(v)$ of $v$ in $R$, as follows: if $\mathit{sd}(v) = \mathit{sd}(\mathit{rd}(v))$ or $b(\mathit{rd}(v)) = f(\mathit{rd}(v))$, set $b(v) = g(v)$ and $r(v) = f(v)$; otherwise, set $b(v) = f(v)$ and $r(v) = g(v)$.
\end{minipage}
}
\end{center}
\end{figure}

We call a vertex $v \not= s$ \emph{blue} if $b(v) = g(v)$ and \emph{red} otherwise.  An equivalent way to state Algorithm 8 is: color $v$ blue if $\mathit{sd}(v) = \mathit{sd}(\mathit{rd}(v))$ or $\mathit{rd}(v)$ is red; color $v$ red otherwise.  Vertex $s$ has no color.  Figure \ref{fig:ind-st-construction} gives an example of the construction.

\begin{figure}
\begin{center}
\scalebox{1}[1]{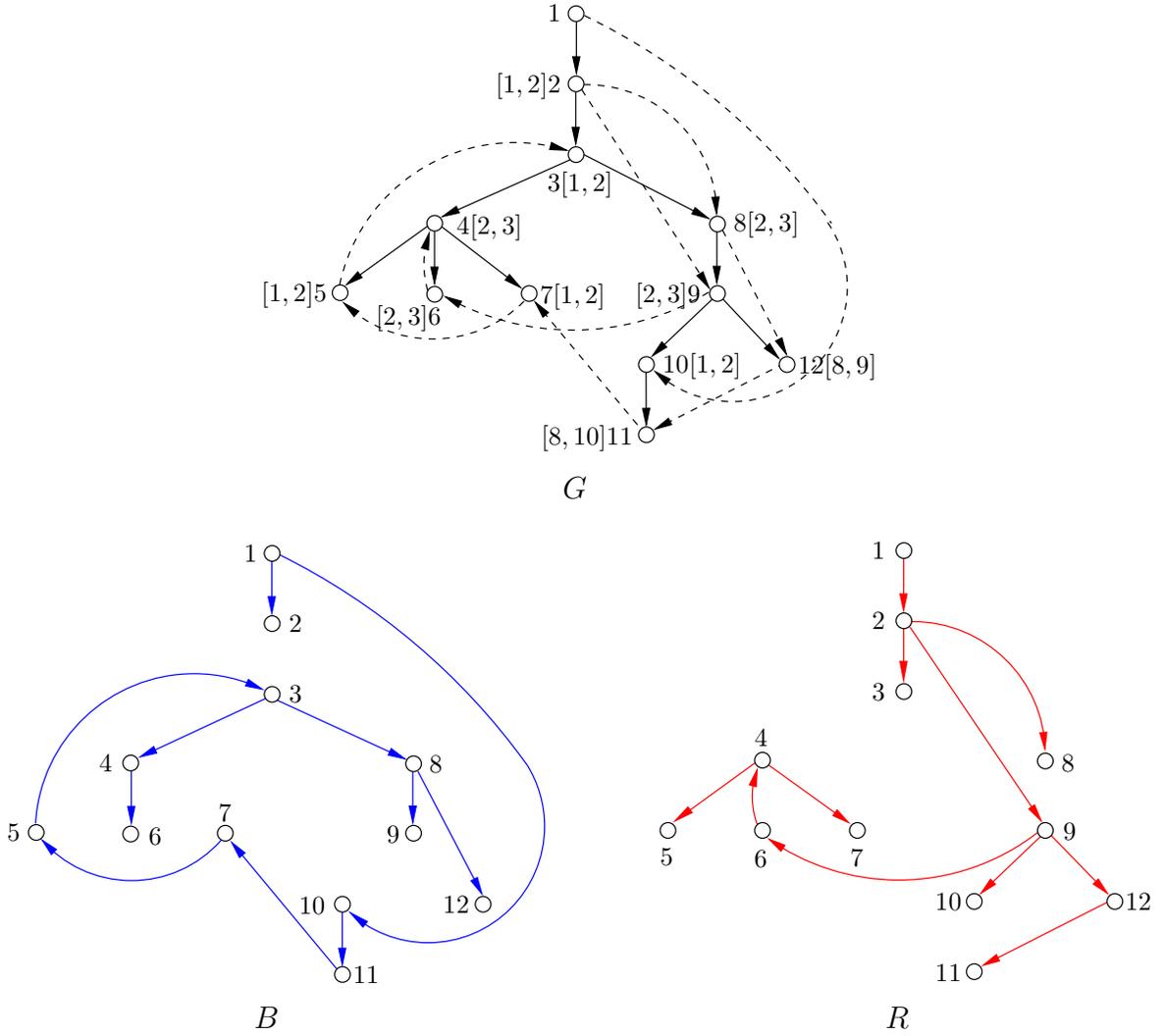}
\end{center}
\caption{Construction of two independent spanning trees $B$ and $R$. Vertices are numbered in preorder
with respect to a depth-first spanning tree $F$ shown with solid arcs; dashed arcs are not in
$F$. Numbers inside the brackets correspond to $\mathit{sd}(v)$ and $\mathit{rd}(v)$.
\label{fig:ind-st-construction}}
\end{figure}

Interestingly, Algorithm 8 does not use dominators at all, although we can simplify the coloring rule with their use, since $\mathit{sd}(\mathit{rd}(v)) = d(v)$: color $v$ blue if $\mathit{sd}(v) = d(v)$ or $\mathit{rd}(v)$ is red; color $v$ red otherwise.

Although Algorithm 8 is simple, its correctness proof is quite intricate. Indeed, it is amazing that the algorithm actually works. Its subtlety is suggested by the fact that there is a natural alternative way to define relative dominators such that Lemma \ref{lemma:semi-relative-dominators} holds but Algorithm 8 fails. Specifically, for $v \not= s$ let $\mathit{rd'}(v)$ be the largest vertex $x$ on $F(\mathit{sd}(v), v]$ with $sd(x) < sd(v)$ if such a vertex exists, $v$ otherwise.  Then Lemma \ref{lemma:semi-relative-dominators} holds with $\mathit{rd'}$ in place of $\mathit{rd}$, but the example in Figure \ref{fig:alt-rd} shows that Algorithm 8 fails with $\mathit{rd'}$ in place of $\mathit{rd}$.

\begin{figure}
\begin{center}
\scalebox{1.}[1.]{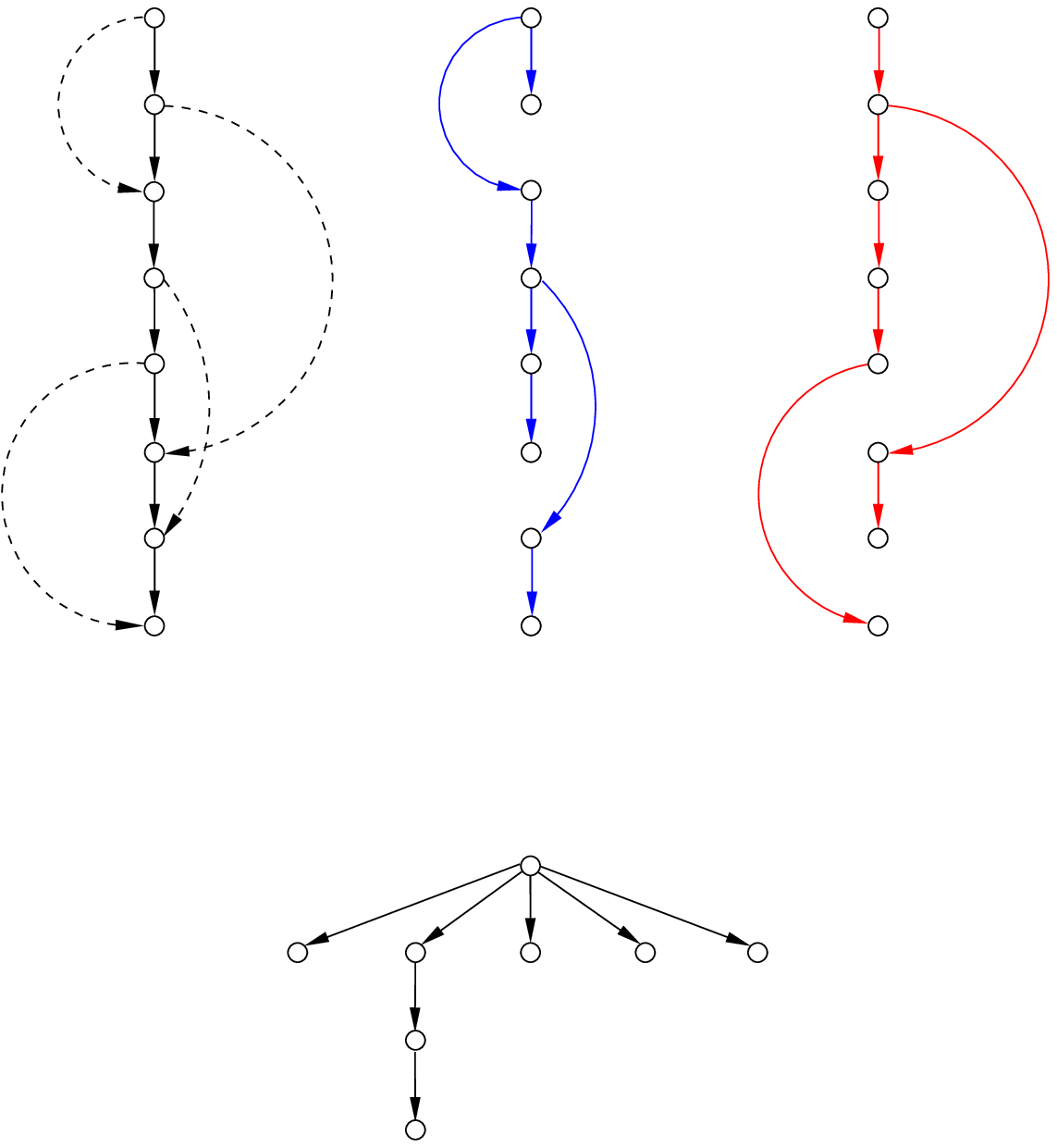}
\end{center}
\caption{An example where Algorithm 8 fails to compute two independent spanning trees if a natural alternative definition of relative dominators that satisfies Lemma \ref{lemma:semi-relative-dominators} is used.
Vertices are numbered in preorder
with respect to a depth-first spanning tree shown with solid arcs; dashed arcs are not in the tree. Numbers inside the brackets correspond to
$\mathit{sd}(v)$ and $\mathit{rd'}(v)$. Vertex $8$ violates the definition of independence.\label{fig:alt-rd}}
\end{figure}

To prove the correctness of Algorithm 8 (for the original definition of relative dominators), we first prove that $B$ and $R$ are acyclic and hence spanning trees rooted at $s$, and then that they are independent.  Both parts of the proof require some groundwork.  We begin with two lemmas that relate the colors of certain vertices.

\begin{lemma}
\label{lemma:2-vertices-color}
If $v$ and $w$ are vertices such that $v$ is an ancestor of $w$ in $F$, $\mathit{sd}(v) = \mathit{sd}(w)$, and $\mathit{sd}(x) \ge \mathit{sd}(v)$ for every $x$ on $F[v, w]$, then $v$ and $w$ are the same color.
\end{lemma}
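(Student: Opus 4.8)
The plan is to reduce the statement to the single assertion that $v$ and $w$ have the \emph{same relative dominator}, $\mathit{rd}(v)=\mathit{rd}(w)$. Once that is known, recall the coloring rule: a vertex $u\neq s$ is \emph{blue} precisely when $\mathit{sd}(u)=\mathit{sd}(\mathit{rd}(u))$ or $\mathit{rd}(u)$ is \emph{red}, and \emph{red} otherwise. Thus the color of $u$ is determined entirely by whether $\mathit{sd}(u)=\mathit{sd}(\mathit{rd}(u))$ and by the color of $\mathit{rd}(u)$. Since the hypothesis gives $\mathit{sd}(v)=\mathit{sd}(w)$, as soon as we establish $\mathit{rd}(v)=\mathit{rd}(w)=:r$ we get that the two deciding conditions ``$\mathit{sd}(v)=\mathit{sd}(r)$?'' and ``$\mathit{sd}(w)=\mathit{sd}(r)$?'' are literally the same, and the color of $r$ is the same in both cases, so $v$ and $w$ receive the same color.

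The core of the proof is therefore to show $\mathit{rd}(v)=\mathit{rd}(w)$. Set $s_0=\mathit{sd}(v)=\mathit{sd}(w)$. By Lemma \ref{lemma:dfs}, $s_0$ is a proper ancestor of $v$ in $F$, and since $v$ is an ancestor of $w$, the path $F(s_0,w]$ decomposes as the disjoint union $F(s_0,v]\cup F(v,w]$, and $\mathit{rd}(w)$ is by definition the vertex of $F(s_0,w]$ with smallest semi-dominator value, ties broken toward the smaller preorder number. On the subpath $F(s_0,v]$ the minimum of $\mathit{sd}$ is attained at $\mathit{rd}(v)$ and equals $\mathit{sd}(\mathit{rd}(v))\le \mathit{sd}(v)=s_0$, because $v$ itself lies on $F(s_0,v]$. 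On the subpath $F(v,w]$ the hypothesis ``$\mathit{sd}(x)\ge \mathit{sd}(v)$ for every $x$ on $F[v,w]$'' guarantees $\mathit{sd}(x)\ge s_0$ for every vertex $x$, so no vertex of $F(v,w]$ has semi-dominator value below $s_0$. Hence the global minimum of $\mathit{sd}$ over $F(s_0,w]$ equals $\mathit{sd}(\mathit{rd}(v))$. If $\mathit{sd}(\mathit{rd}(v))<s_0$, this minimum is attained only on $F(s_0,v]$, where $\mathit{rd}(v)$ is the smallest attaining vertex, so $\mathit{rd}(w)=\mathit{rd}(v)$. If $\mathit{sd}(\mathit{rd}(v))=s_0$, a vertex of $F(v,w]$ may also attain the minimum, but every such vertex is a proper descendant of $v$ in $F$ and hence has preorder number larger than $v\ge \mathit{rd}(v)$, so the tie-break still picks $\mathit{rd}(v)$; thus again $\mathit{rd}(w)=\mathit{rd}(v)$. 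Note also that $\mathit{rd}(v)\in F(s_0,v]\subseteq F(s_0,w]$, so $\mathit{rd}(v)$ is indeed an eligible candidate for $\mathit{rd}(w)$.

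The main obstacle is simply the careful bookkeeping in this minimization: verifying the decomposition $F(s_0,w]=F(s_0,v]\cup F(v,w]$, checking that $\mathit{rd}(v)$ still lies in the candidate set defining $\mathit{rd}(w)$, and handling the borderline case $\mathit{sd}(\mathit{rd}(v))=s_0$ where the tie-breaking rule must be invoked. The hypothesis on $F[v,w]$ is exactly what is needed to prevent a cheaper candidate for $\mathit{rd}(w)$ from appearing strictly below $v$. Beyond Lemma \ref{lemma:dfs} and the definitions of $\mathit{sd}$, $\mathit{rd}$, and the coloring rule of Algorithm 8, no further machinery is required.
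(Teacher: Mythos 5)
Your proof is correct and takes essentially the same approach as the paper's: establish $\mathit{rd}(v)=\mathit{rd}(w)$ from the hypothesis and the definition of $\mathit{rd}$, then conclude via the coloring rule together with $\mathit{sd}(v)=\mathit{sd}(w)$. The paper states this in two sentences; you have simply spelled out the minimization/tie-break bookkeeping that the paper treats as immediate.
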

\begin{proof}
The hypothesis of the lemma and the definition of $\mathit{rd}$ imply $\mathit{rd}(v) = \mathit{rd}(w)$.  The lemma follows from the equality $\mathit{sd}(v) = \mathit{sd}(w)$ and the coloring rule.
\end{proof}

Lemma \ref{lemma:2-vertices-color} implies that $B$ and $R$ do not change if we define $\mathit{rd}(v)$ to be \emph{any} vertex $x$ on $F(\mathit{sd}(v), v]$ with $\mathit{sd}(x)$ minimum, since by the lemma every such $x$ has the same color.  Choosing $x$ smallest simplifies some of our proofs, however.

\begin{lemma}
\label{lemma:3-vertices-color}
Let $x$, $y$, and $z$ be vertices other than $s$ such that
\begin{itemize}
\item[(i)]   $x$ is an ancestor of y and z in F and y and z are related in F;
\item[(ii)]	 $\mathit{sd}(x) < \mathit{sd}(y) < \mathit{sd}(z) < x$;
\item[(iii)] $\mathit{sd}(v) \ge \mathit{sd}(x)$ for all $v$ on $F[\min\{y, z\}, \max\{y, z\}]$
\item[(iv)]	 $x$ and $z$ are the same color.
\end{itemize}
Then $y$ is the same color as $x$ and $z$.
\end{lemma}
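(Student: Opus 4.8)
The statement is a transitivity-type lemma for the blue/red coloring: under the geometric hypotheses (i)--(iv), the color of $y$ is forced once we know $x$ and $z$ agree. The natural strategy is to trace the relative-dominator pointers. The coloring rule says $y$ is blue iff $\mathit{sd}(y) = \mathit{sd}(\mathit{rd}(y))$ or $\mathit{rd}(y)$ is red, and similarly for $z$. So the plan is: (1) locate $\mathit{rd}(y)$ and $\mathit{rd}(z)$ on the tree path using hypotheses (i)--(iii); (2) relate $\mathit{rd}(z)$ to $y$; (3) use hypothesis (iv), the known equality of colors of $x$ and $z$, to pin down what color $\mathit{rd}(z)$ contributes, hence the color of $z$, and propagate this back to $y$.

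\textbf{Key steps.} First I would pin down $\mathit{rd}(z)$. Since $\mathit{sd}(y) < \mathit{sd}(z)$ and, by (i), $y$ and $z$ lie on a common root path with $x$ an ancestor of both, $y$ is a vertex on $F(\mathit{sd}(z), z]$ (we need $\mathit{sd}(z) < y$; if $y$ were an ancestor of $\mathit{sd}(z)$ then $\mathit{sd}(y) \le \mathit{sd}(z)$ would still hold, but we'd have to rule out $\mathit{sd}(z)$ being strictly above $y$ on the path -- here hypothesis (iii) combined with $\mathit{sd}(y) < \mathit{sd}(z) < x \le \min\{y,z\}$... wait, (iii) only gives $\mathit{sd}(v) \ge \mathit{sd}(x)$, so I need to be careful). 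The cleaner route: because $\mathit{sd}(y) < \mathit{sd}(z)$ and $y$ is an ancestor of $z$ (which I expect is forced by (i)--(iii): if $z$ were an ancestor of $y$, then $y \in F(\mathit{sd}(z),z]$ would be false and instead $z \in F(\cdot, y]$, and using (iii) one derives $\mathit{sd}(z) \ge \mathit{sd}(y)$, contradiction unless equality -- so $y$ is a proper ancestor of $z$), $y$ lies on $F(\mathit{sd}(z), z]$, and by the remark following Lemma~\ref{lemma:2-vertices-color} we may take $\mathit{rd}(z)$ to be any minimizer of $\mathit{sd}$ on $F(\mathit{sd}(z), z]$. I would argue $\mathit{sd}(\mathit{rd}(z))$ equals $\mathit{sd}(y)$: indeed $y$ itself is a candidate on $F(\mathit{sd}(z),z]$ with $\mathit{sd}(y) < \mathit{sd}(z)$, and by (iii) no vertex strictly between $\min\{y,z\}$ and $\max\{y,z\}$ beats $\mathit{sd}(x) < \mathit{sd}(y)$; combined with a similar analysis on $F(\mathit{sd}(z), y]$ I would conclude the minimum of $\mathit{sd}$ over $F(\mathit{sd}(z),z]$ is exactly $\mathit{sd}(y)$, so we may choose $\mathit{rd}(z) = y$ (or at least $\mathit{sd}(\mathit{rd}(z)) = \mathit{sd}(y)$ and, by Lemma~\ref{lemma:2-vertices-color} applied appropriately, $\mathit{rd}(z)$ has the same color as $y$).

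\textbf{Finishing.} Once $\mathit{sd}(\mathit{rd}(z)) = \mathit{sd}(y)$ and $\mathit{rd}(z)$ has $y$'s color: by (ii), $\mathit{sd}(z) > \mathit{sd}(y) = \mathit{sd}(\mathit{rd}(z))$, so the first clause of the coloring rule fails for $z$, meaning $z$ is blue iff $\mathit{rd}(z)$ is red, i.e. iff $y$ is red; equivalently $z$ and $y$ have opposite colors. Combined with (iv) ($x$ and $z$ same color), this gives $y$ opposite to $x$... which is the wrong conclusion. So I must have the candidate-selection backwards: the right reading is that $\mathit{rd}(z)$ should be chosen so that $\mathit{rd}(z)$ and $y$ have the \emph{same} color, and then $z$'s color is the opposite of $y$'s only if $\mathit{rd}(z)$ is red -- no: $z$ blue $\iff$ $\mathit{rd}(z)$ red. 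So $z$ and $\mathit{rd}(z)$ always have opposite colors when the first clause fails. Hence $z$ and $y$ opposite, $z = x$ color, so $y \ne x$ color -- contradicting the claim. The resolution must be that the first clause does \emph{not} fail for $z$, OR that between $\mathit{sd}(z)$ and $y$ there is another sign flip. The honest plan is therefore to do the case analysis carefully on whether $\mathit{sd}(y) = \mathit{sd}(\mathit{rd}(y))$ (i.e. whether $y$ is "blue for the first reason"), and within each case trace the chain $z \to \mathit{rd}(z) = \dots \to y$ and $y \to \mathit{rd}(y)$, invoking Lemma~\ref{lemma:2-vertices-color} to collapse segments of equal $\mathit{sd}$ to a single color. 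I expect the main obstacle to be exactly this bookkeeping: correctly locating $\mathit{rd}(y)$ and $\mathit{rd}(z)$ relative to $x$, $y$, $z$ and the segment $F[\min\{y,z\},\max\{y,z\}]$, and handling the subcase where $\mathit{rd}(z)$ lands strictly between $\mathit{sd}(z)$ and $y$ (where hypothesis (iii) must be leveraged to show its $\mathit{sd}$-value is still $\ge \mathit{sd}(x)$, hence by minimality equals $\mathit{sd}(y)$ and Lemma~\ref{lemma:2-vertices-color} forces it to share $y$'s color). Once the pointer structure is nailed down, the coloring rule applied twice (to $z$ and to $y$) together with (iv) should close the argument; I would write it as a short chain of equalities of colors using Lemma~\ref{lemma:2-vertices-color} as the workhorse.
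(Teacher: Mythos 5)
Your key computational step is wrong, and the failure you correctly sense near the end is a direct consequence of it. You claim $\mathit{sd}(\mathit{rd}(z))=\mathit{sd}(y)$, reasoning that $y$ is the best candidate on $F(\mathit{sd}(z),z]$. But $x$ itself lies on $F(\mathit{sd}(z),z]$: by (ii) $\mathit{sd}(z)<x$, and $x$ is an ancestor of $z$, so $\mathit{sd}(z)$ is a proper ancestor of $x$ (both being ancestors of $z$, hence comparable). Since $\mathit{sd}(x)<\mathit{sd}(y)$ by (ii), $x$ beats $y$ as a candidate, and in fact the paper's proof uses exactly the opposite inequality to the one you want: $\mathit{sd}(\mathit{rd}(z))\le\mathit{sd}(x)<\mathit{sd}(y)$. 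This is why your attempt to deduce $z$'s color directly from $y$'s via the rule ``$z$ blue iff $\mathit{rd}(z)$ red'' collapses — you have misidentified $\mathit{rd}(z)$. (Your side argument that $y$ must be a proper ancestor of $z$ is also unsound: (iii) only bounds $\mathit{sd}$ below by $\mathit{sd}(x)$, not by $\mathit{sd}(y)$, and the lemma does not in fact require $y$ to be an ancestor of $z$.)

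The deeper missing idea is structural. A one-shot application of the coloring rule cannot work, because that rule is recursive: the color of $z$ depends on the color of $\mathit{rd}(z)$, whose color depends on the color of $\mathit{rd}(\mathit{rd}(z))$, and so on, and nothing in your setup terminates this chain. The paper's proof is a \emph{descent} argument: assume a violating triple $(x,y,z)$ with $x$ minimum; using $\mathit{sd}(\mathit{rd}(z))\le\mathit{sd}(x)$ together with Lemma~\ref{lemma:2-vertices-color} (to force strict inequality, since $\mathit{rd}(z)$ and $x$ must differ in color) and a similar comparison between $\mathit{rd}(y)$ and $\mathit{rd}(z)$, it shows that $(\mathit{rd}(y),\mathit{rd}(z),x)$ is a new violating triple whose first coordinate $\mathit{rd}(y)$ is strictly less than $x$ — contradicting minimality. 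Your proposal never introduces an extremal choice or a measure that decreases, so even after fixing the $\mathit{rd}(z)$ error you would have no way to close the recursion. To repair the proof you need both the corrected inequality $\mathit{sd}(\mathit{rd}(z))<\mathit{sd}(x)$ and the well-ordering/descent scaffold that turns the coloring recursion into a finite argument.
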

\begin{proof}
Suppose the lemma is false.  Let $x$, $y$, $z$ be three vertices that violate the lemma and such that $x$ is minimum.  We shall show that $\mathit{rd}(y) < x$ and $\mathit{rd}(y)$, $\mathit{rd}(z)$, $x$ is a triple violating the lemma, contradicting the choice of $x$, $y$, $z$.

By (i) and (ii), $x$ is a candidate for $\mathit{rd}(z)$, so $\mathit{sd}(\mathit{rd}(z)) \le \mathit{sd}(x)$.  Also by (ii), $\mathit{sd}(x) < \mathit{sd}(z)$, so $\mathit{sd}(\mathit{rd}(z)) < \mathit{sd}(z)$.  By the coloring rule, $\mathit{rd}(z)$ and $z$ have different colors.  By (iv), $\mathit{rd}(z)$ and $x$ have different colors.  If $\mathit{sd}(\mathit{rd}(z)) = \mathit{sd}(x)$, $\mathit{rd}(z)$ and $x$ would have the same color by Lemma \ref{lemma:2-vertices-color}; since they do not, $\mathit{sd}(\mathit{rd}(z)) < \mathit{sd}(x)$.  By (iii), $\mathit{rd}(z)$ is a proper ancestor of both $y$ and $z$ in $F$.  Thus $\mathit{rd}(z)$ is a candidate for $\mathit{rd}(y)$.  We claim $\mathit{sd}(\mathit{rd}(y)) \not= \mathit{sd}(\mathit{rd}(z))$.  Suppose by way of contradiction that $\mathit{sd}(\mathit{rd}(y)) = \mathit{sd}(\mathit{rd}(z))$.  By Lemma \ref{lemma:2-vertices-color}, $\mathit{rd}(y)$ and $\mathit{rd}(z)$ are the same color.  Since $\mathit{sd}(\mathit{rd}(z)) < \mathit{sd}(x)$, (ii) implies $\mathit{sd}(\mathit{rd}(y)) < \mathit{sd}(y)$ and $\mathit{sd}(\mathit{rd}(z)) < \mathit{sd}(z)$.  By the coloring rule, $y$ and $z$ are the same color, a contradiction.  Thus $\mathit{sd}(\mathit{rd}(y) \not= \mathit{sd}(\mathit{rd}(z))$.  Since $\mathit{rd}(z)$ is a candidate for $\mathit{rd}(y)$, $\mathit{sd}(\mathit{rd}(y)) < \mathit{sd}(\mathit{rd}(z))$ and $\mathit{rd}(y)$ is an ancestor of $\mathit{sd}(z)$ in $F$.

Since $\mathit{rd}(y)$ is an ancestor of $\mathit{sd}(z)$ in $F$, it is a proper ancestor of both $\mathit{rd}(z)$ and $x$.  In particular, $\mathit{rd}(y) < x$.  Also, $\mathit{rd}(x)$ and $x$ are both ancestors of $y$ (and $z$) in $F$, and hence related.  Thus (i) holds for $\mathit{rd}(y)$, $\mathit{rd}(z)$, $x$.   We have $\mathit{sd}(\mathit{rd}(y)) < \mathit{sd}(\mathit{rd}(z)) < \mathit{sd}(x) < \mathit{rd}(y)$, the last inequality following from (ii): $\mathit{sd}(x) < \mathit{sd}(y) < \mathit{rd}(y)$.  Thus (ii) holds for $\mathit{rd}(y)$, $\mathit{rd}(z)$, $x$.  Since $\mathit{rd}(z)$ and $x$ are proper descendants of $\mathit{sd}(y)$ and ancestors of $y$, (iii) holds for $\mathit{rd}(y)$, $\mathit{rd}(z)$, $x$.  Since $\mathit{sd}(\mathit{rd}(y)) < \mathit{sd}(\mathit{rd}(z)) < \mathit{sd}(x) < \mathit{sd}(y)$ (the last inequality by (ii)), the coloring gives $\mathit{rd}(y)$ and $y$ different colors.  Since $x$ and $y$ have different colors by assumption, $\mathit{rd}(y)$ and $x$ have the same color, so (iv) holds for $\mathit{rd}(y)$, $\mathit{rd}(z)$, $x$.  Finally, we showed above that $\mathit{rd}(z)$ and $x$ have different colors, so $\mathit{rd}(y)$, $\mathit{rd}(z)$, $x$ is a triple violating the lemma.
\end{proof}

We use Lemmas \ref{lemma:dfs}, \ref{lemma:semidominator-equal-node}, \ref{lemma:2-vertices-color}, and \ref{lemma:3-vertices-color} to prove that $B$ and $R$ are acyclic and hence spanning trees rooted at $s$.

\begin{theorem}
\label{theorem:semidominators-spanning-trees}
Both $B$ and $R$ are spanning trees rooted at $s$.
\end{theorem}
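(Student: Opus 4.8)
The plan is to reduce the statement to acyclicity and then dispatch the cycle case with the coloring lemmas that were just proved. First I would observe that Algorithm 8 assigns to every vertex $v \neq s$ exactly one parent $b(v) \in \{f(v), g(v)\}$ in $B$ and the other of $f(v), g(v)$ as its parent $r(v)$ in $R$, while $s$ has no parent; since $(f(v), v)$ is a tree arc of $F$ and $(g(v), v)$ is the last arc of a high path to $v$, both are arcs of $G$, so $B$ and $R$ are functional subgraphs of $G$ in which $s$ is the unique vertex of out-degree $0$. A finite functional graph of this form is a spanning tree rooted at $s$ if and only if it contains no directed cycle, so it suffices to show that neither $B$ nor $R$ has a cycle.

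Next, suppose for contradiction that $B$ contains a directed cycle $C$; the argument for $R$ is symmetric with the two colors interchanged. Let $v$ be the smallest-numbered vertex of $C$. Its parent $b(v)$ lies on $C$, so $b(v) \geq v$, and $b(v) \neq v$, hence $b(v) > v$. Since $f(v) < v$ (proper ancestors in a depth-first spanning tree have smaller preorder number), this forces $b(v) = g(v) > v$; in particular $v$ is \emph{blue}, and the high path from $\mathit{sd}(v)$ to $v$ realizing $g(v)$ has at least two arcs.

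Then I would extract the tree geometry around this ascent. By the Path Lemma (Lemma \ref{lemma:dfs}), $\mathit{sd}(v)$ is a proper ancestor of both $v$ and $g(v)$ in $F$; let $a$ be their nearest common ancestor, so $\mathit{sd}(v)$ is an ancestor of $a$. By Lemma \ref{lemma:semidominator-equal-node} there is a vertex $w$ on $F(a, g(v)]$ with $\mathit{sd}(w) = \mathit{sd}(v)$, and by Lemma \ref{lemma:semi-relative-dominators} the chain $v, \mathit{rd}(v), \mathit{rd}(\mathit{rd}(v)), \dots$ has strictly decreasing semi-dominators until it reaches a vertex $x$ with $\mathit{sd}(x) = \mathit{sd}(\mathit{rd}(x)) = d(v)$; Lemmas \ref{lemma:2-vertices-color} and \ref{lemma:3-vertices-color} together with the coloring rule then let me read off the colors of the vertices along these $F$-paths as a function of $v$'s color. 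The crux is to run this analysis along the whole cycle: each step of $C$ is either a descent ($b = f$, forced at red vertices) or an ascent to $g$ (at blue vertices); at each ascent the geometry above recurs, and at the junctions between consecutive steps I would feed an appropriate triple into Lemma \ref{lemma:3-vertices-color} (verifying its four hypotheses, in particular that $\mathit{sd}$ does not drop below $\mathit{sd}(x)$ along the relevant $F$-segment) to force the color of an intermediate vertex. Because $C$ is finite and the semi-dominators in play are bounded below by $d(v)$, the resulting chain of forced colors must terminate at a vertex whose coloring-rule value contradicts the value forced by the cycle — e.g.\ a vertex that must be blue because its relative dominator is red and simultaneously must be red because that relative dominator is blue. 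This contradiction shows $B$, and by symmetry $R$, is acyclic, and the first paragraph then yields the theorem.

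The step I expect to be the main obstacle is the last one: choreographing Lemmas \ref{lemma:2-vertices-color} and \ref{lemma:3-vertices-color} around an arbitrary cycle — identifying at each junction the triple $x, y, z$ to which Lemma \ref{lemma:3-vertices-color} applies and checking its hypothesis (iii) about the semi-dominators along an $F$-path — and choosing the right minimal counterexample (smallest cycle, or smallest vertex on it) to drive the descent. The two preceding color lemmas were proved precisely to make this bookkeeping possible, but assembling them into a clean global contradiction is where essentially all the work lies.
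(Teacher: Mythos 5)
Your reduction to acyclicity is correct and matches the paper, as does your observation that the minimum vertex $x$ on a putative cycle of $B$ must be blue with $g(x) > x$. You also correctly identify the toolkit: Lemmas~\ref{lemma:dfs}, \ref{lemma:semidominator-equal-node}, \ref{lemma:2-vertices-color}, and \ref{lemma:3-vertices-color}. But the proof stops exactly where the substance begins: you never actually construct the triple that violates Lemma~\ref{lemma:3-vertices-color}. Your final paragraph openly concedes this (``assembling them into a clean global contradiction is where essentially all the work lies''), and the proposed substitute---iterating around the cycle, classifying each step as a descent or ascent, and hoping a ``chain of forced colors'' terminates in a contradiction because semi-dominators are bounded below---is not a proof and does not match how the argument actually goes. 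There is no a priori reason such a chain must become inconsistent; finiteness alone does not force it, and the would-be contradiction you describe (``must be blue because its relative dominator is red and simultaneously must be red because that relative dominator is blue'') is not a coherent statement, since each vertex's relative dominator has one fixed color.

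The paper's proof makes three specific choices that your sketch has no counterpart for, and they are where the difficulty lives. First, $u$ is chosen as the \emph{first} vertex after $x$ on the cycle with $\mathit{sd}(u) < x$ \emph{and} the same color as $x$ (not just the first blue vertex, nor the first with small semi-dominator). This choice is what lets one show $g(u) > u$ and then feed $u$ into Lemma~\ref{lemma:semidominator-equal-node} to produce $z$ on $F(a, g(u)]$ with $\mathit{sd}(z) = \mathit{sd}(u)$; the choice of $u$ also forces $\mathit{rd}(z) = \mathit{rd}(u)$ to be an ancestor of $x$, which gives hypothesis~(iv). Second, $v$ is chosen as the \emph{last} vertex on the cycle segment from $x$ to $g(u)$ whose cycle-predecessor is not a descendant of $z$ in $F$; this ``last crossing'' of $z$'s subtree boundary is what forces $b(v) = g(v)$ and pins $\mathit{sd}(v)$ strictly between $x$ (inclusive) and $z$. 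Third, $y = \mathit{rd}(v)$, and the remaining work is to verify the strict chain $\mathit{sd}(x) < \mathit{sd}(y) < \mathit{sd}(z) < x$ for hypothesis~(ii), using Lemma~\ref{lemma:2-vertices-color} to rule out equalities, and to verify hypothesis~(iii) by exhibiting high paths through the cycle from every vertex on $F(x, v]$. None of this is a ``chain of forced colors around the cycle''; it is a single, carefully engineered triple built from one application of Lemma~\ref{lemma:semidominator-equal-node} and two extremal choices along the cycle. Without these constructions, the proof is a plan, not an argument.
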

\begin{proof}
Suppose not.  Then $B$ or $R$ contains a cycle.  We shall construct a triple $x$, $y$, $z$ violating Lemma \ref{lemma:3-vertices-color}, yielding the theorem by contradiction.  Let $x$ be the minimum vertex on the cycle.  Then $x \not= s$ (since $s$ contains no incoming arcs), so $\mathit{sd}(x) < x$.  Since $f(x) < x$, arc $(g(x), x)$ is on the cycle.  By Lemma \ref{lemma:dfs}, all vertices on the cycle are descendants of $x$ in $F$.  Let $u$ be the first vertex after $x$ on the cycle such that $\mathit{sd}(u) < x$ and $x$ and $u$ are the same color.  Vertex $x$ is a candidate for $u$, so $u$ is well-defined.  Since $x$ and $u$ are the same color, $(g(u), u)$ is on the cycle.  If $g(u) < u$, then $\mathit{sd}(u) = g(u) \ge x$, contradicting the choice of $u$.  Thus $g(u) > u$.  Let $z$ be a vertex satisfying Lemma \ref{lemma:semidominator-equal-node} for $u$: $\mathit{sd}(z) = \mathit{sd}(u)$ and $z$ is on $F(a, g(u)]$, where $a$ is the nearest common ancestor of $u$ and $g(u)$ in $F$.  Since $z$ is a proper descendant of $a$, and $a$ is a descendant of $x$, $z$ is a proper descendant of $x$.  Since $\mathit{sd}(z) = \mathit{sd}(u) < x$, $x$ is a candidate for both $\mathit{rd}(z)$ and $\mathit{rd}(u)$.  Thus $\max\{\mathit{sd}(\mathit{rd}(z)), \mathit{sd}(\mathit{rd}(u))\} \le \mathit{sd}(x)$.  Every descendant $w$ of $x$ that is an ancestor of either $u$ or $z$ has a high path to $x$ through descendants of $z$, so $\mathit{sd}(w) \ge \mathit{sd}(x)$.  It follows that $\mathit{rd}(z) = \mathit{rd}(u)$ and that this vertex is an ancestor of $x$ in $F$.  Hence the coloring rule colors $z$ the same as $u$, and $x$.  That is, (iv) in Lemma \ref{lemma:3-vertices-color} holds.

On the part of the cycle from $x$ to $g(u)$, let $v$ be the last vertex whose predecessor on the cycle is not a descendant of $z$ in $F$; that is, all vertices on the cycle from $v$ to $g(u)$ (inclusive) are descendants of $z$, but not $v$.  There must be such a vertex since $g(u)$ is a descendant of $z$ in $F$ but $x$ is not.  Since $v$ follows $x$ on the cycle and precedes $u$, the choice of $u$ implies $\mathit{sd}(v) \ge x$.  Thus it cannot be the case that $v = z$, since then $\mathit{sd}(v) = \mathit{sd}(u) < x$.  Thus $v \not= z$.  Since $v$ is a descendant of $z$ in $F$, it is a proper descendant.  Since its predecessor on the cycle is not a descendant of $z$, its predecessor is not $f(v)$, but $g(v)$. Thus $v$ has the same color as $x$ and $z$.  The nearest common ancestor of $v$ and $g(v)$ in $F$ is a candidate for $\mathit{sd}(v)$. Since $z$ is an ancestor of $v$ but not $g(v)$, this nearest common ancestor is less than $z$, which implies $\mathit{sd}(v) < z$.  As observed above, the choice of $u$ implies $\mathit{sd}(v) \ge x$.  Thus $\mathit{sd}(v)$ is a descendant of $x$ and a proper ancestor of $z$ in $F$.

Let $y = \mathit{rd}(v)$.  Since $z$ is an ancestor of $v$, $z$ is a candidate for $y$, so $\mathit{sd}(y) \le \mathit{sd}(z) < x \le \mathit{sd}(v)$.  The coloring rule gives $y$ the color not given to $v$, $x$, and $z$, hence $y$ is not $x$ or $z$.  We showed above that $x \not= z$.  Vertex $y$ is a descendant of $x$ in $F$, as is $z$.  Furthermore $y$ and $z$ are related in $F$, since both are ancestors of $v$.  Hence (i) in Lemma \ref{lemma:3-vertices-color} holds.  Since $z$ is a candidate for $\mathit{rd}(v) = y$, it cannot be the case that $\mathit{sd}(y) = \mathit{sd}(z)$, or $y$ and $z$ would have the same color by Lemma \ref{lemma:2-vertices-color}.  Hence $\mathit{sd}(y) < \mathit{sd}(z)$.  For every vertex $w$ on $F(x, v]$, there is a high path from $w$ to $x$, consisting of $F(w, v]$ followed by the part of the cycle from $v$ to $x$.  It follows that $\mathit{sd}(w) \ge \mathit{sd}(x)$ for every such $w$.  Thus (iii) in Lemma \ref{lemma:3-vertices-color} holds.  Since $\mathit{sd}(w) \ge \mathit{sd}(x)$ for every vertex $w$ on $F[x, y]$ and $x$ and $y$ are different colors, $\mathit{sd}(x) \not= \mathit{sd}(y)$ by Lemma \ref{lemma:2-vertices-color}, so $\mathit{sd}(x) < \mathit{sd}(y)$.  We have previously established the other inequalities in (ii) of Lemma \ref{lemma:3-vertices-color}, so (ii) holds as well.  We conclude that $x$, $y$, $z$ is a violating triple.
\end{proof}

To prove that $B$ and $R$ are independent, we need two more definitions and one more technical lemma.  Let $v$ be a blue (red) vertex. Since $s$ is uncolored, $v \not= s$.  The \emph{pseudo-semi-dominator} $\mathit{psd}(v)$ of $v$ is the nearest ancestor of $v$ in $B$ $(R)$ that is less than $v$.  Such a vertex exists since $B$ and $R$ are spanning trees rooted at $s$.  By Lemma \ref{lemma:dfs}, $\mathit{psd}(v)$ is an ancestor of both $v$ and $g(v)$ in $F$.  (Arc $(g(v), v)$ is in the tree defining $\mathit{psd}(v)$.)  Furthermore $\mathit{psd}(v) \ge \mathit{sd}(v)$, since the path from $\mathit{psd}(v)$ to $v$ in the tree defining $\mathit{psd}(v)$ is a high path.  Thus $\mathit{psd}(v)$ is on $F[\mathit{sd}(v), a]$, where $a$ is the nearest common ancestor of $v$ and $g(v)$ in $F$.  The \emph{pseudo-relative-dominator} $\mathit{prd}(v)$ of $v$ is the vertex $x$ on $F(\mathit{psd}(v), v]$ such that $\mathit{sd}(x)$ is minimum, with a tie broken in favor of the smallest $x$. If $\mathit{prd}(v) \not= \mathit{rd}(v)$, then $\mathit{psd}(v) > \mathit{sd}(v)$ and $\mathit{rd}(v)$ is on $F(\mathit{sd}(v), \mathit{psd}(v)]$.  See Figure \ref{fig:psd-prd}.

\begin{figure}
\begin{center}
\scalebox{1.}[1.]{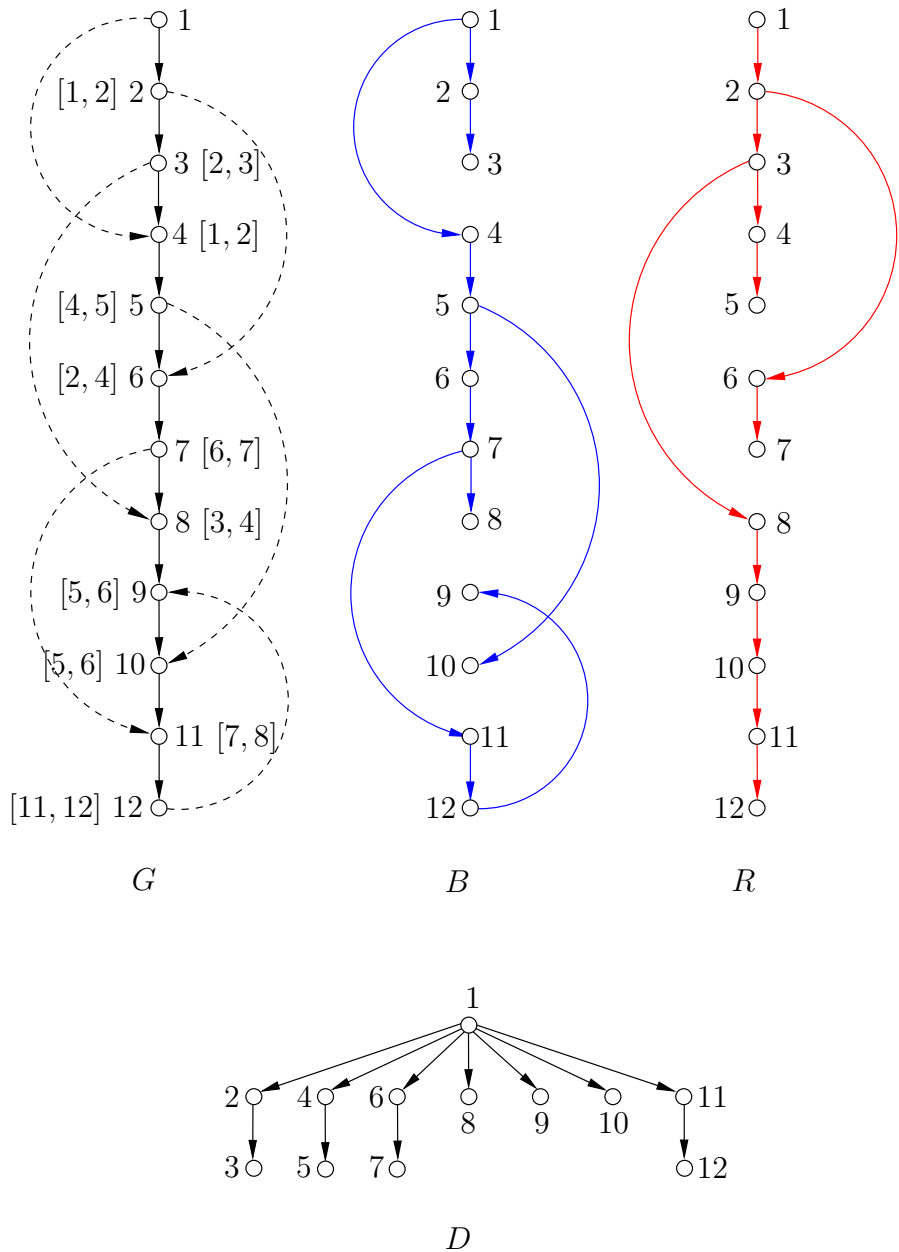}
\end{center}
\caption{A situation where $\mathit{sd}(v) \not= \mathit{psd}(v)$ and $\mathit{rd}(v) \not= \mathit{prd}(v)$, for the blue vertex $v=9$; we have $\mathit{sd}(9)=5$, $\mathit{rd}(9)=6$, $\mathit{psd}(9)=7$ and $\mathit{prd}(9)=8$. Vertices are numbered in preorder
with respect to a depth-first spanning tree shown with solid arcs; dashed arcs are not in the tree. Numbers inside the brackets correspond to
$\mathit{sd}(v)$ and $\mathit{rd}(v)$.\label{fig:psd-prd}}
\end{figure}

\begin{lemma}
\label{lemma:pseudo-semi-dominator}
For any vertex $v \not= s$, either $\mathit{psd}(v) = \mathit{sd}(v)$, or $\mathit{sd}(\mathit{prd}(v)) < \mathit{sd}(v)$ and $v$ and $\mathit{prd}(v)$ are different colors.
\end{lemma}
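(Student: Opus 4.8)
The plan is to clear away the trivial case and then do all the work in a single configuration. If $\mathit{psd}(v)=\mathit{sd}(v)$ the first alternative holds, so I would treat the case $\mathit{psd}(v)>\mathit{sd}(v)$ (note $\mathit{psd}(v)\ge\mathit{sd}(v)$ always). I would assume $v$ is blue — so $b(v)=g(v)$ and $\mathit{psd}(v)$ is reached by walking up $B$ from $v$ to the first vertex below $v$ — the case $v$ red being symmetric (work in $R$). Since $g(v)=\mathit{sd}(v)$ would make $b(v)=g(v)<v$ already a $B$-ancestor of $v$ below $v$, forcing $\mathit{psd}(v)=\mathit{sd}(v)$, we have $g(v)>v$; thus Lemma~\ref{lemma:semidominator-equal-node} is available for $v$.

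Next I would record the relevant structure. Let $a$ be the nearest common ancestor of $v$ and $g(v)$ in $F$; then $\mathit{psd}(v)$ is a proper $F$-ancestor of $v$ lying on $F(\mathit{sd}(v),a]$, and $F(\mathit{psd}(v),v]\subseteq F(\mathit{sd}(v),v]$. Let $z$ be the child of $\mathit{psd}(v)$ in $B$ on $B[\mathit{psd}(v),v]$; since $\mathit{psd}(v)=b(z)$ with $\mathit{psd}(v)<v<z$, and $(g(z),z)$ is the last arc of a high path from $\mathit{sd}(z)$ to $z$, either $z$ is red with $f(z)=\mathit{psd}(v)$, or $z$ is blue with $g(z)=\mathit{sd}(z)=\mathit{psd}(v)$; in either case $\mathit{sd}(z)\le\mathit{psd}(v)$ and $z$ is a proper $F$-descendant of $\mathit{psd}(v)$.

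The crux is the inequality $\mathit{sd}(\mathit{prd}(v))<\mathit{sd}(v)$; the color statement will follow from it. I would prove it by taking $v$ smallest with $\mathit{psd}(v)>\mathit{sd}(v)$ and $\mathit{sd}(\mathit{prd}(v))=\mathit{sd}(v)$, i.e. smallest $v$ such that every vertex of $F(\mathit{psd}(v),v]$ has semi-dominator $\ge\mathit{sd}(v)$ (it is always $\le\mathit{sd}(v)$, since $v$ lies on that path). Because the minimum of $\mathit{sd}$ over the larger path $F(\mathit{sd}(v),v]$ is $d(v)=\mathit{sd}(\mathit{rd}(v))$, either $d(v)=\mathit{sd}(v)$, or $d(v)<\mathit{sd}(v)$ and $\mathit{rd}(v)$ lies on $F(\mathit{sd}(v),\mathit{psd}(v)]$, i.e. is an $F$-ancestor of $\mathit{psd}(v)$. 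In the first case I would derive a contradiction directly: $d(v)=\mathit{sd}(v)$ puts $\mathit{sd}(v)$ on $B[s,v]$, and walking up $B[s,v]$ from $v$ — using $b(v)=g(v)$, that the interior vertices of a high path from $\mathit{sd}(v)$ through $g(v)$ to $v$ all exceed $v$ (Lemma~\ref{lemma:semidominator-equal-node} producing the smallest such vertex $w$ on $F(a,g(v)]$ with $\mathit{sd}(w)=\mathit{sd}(v)$), and that the vertices met before the first one below $v$ all exceed $v$ — one shows the walk meets no vertex of $(\mathit{sd}(v),v)$, so $\mathit{psd}(v)=\mathit{sd}(v)$, a contradiction. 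In the second case I would descend: from $z$, the vertex $w$ above, and $\mathit{rd}(v)$, construct a smaller vertex still violating the two hypotheses, contradicting the choice of $v$; throughout, Lemmas~\ref{lemma:2-vertices-color} and~\ref{lemma:3-vertices-color} are what keep the colors of the vertices produced in the descent consistent, exactly as in the proof of Theorem~\ref{theorem:semidominators-spanning-trees}.

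Finally, the color statement. If $\mathit{prd}(v)=\mathit{rd}(v)$, then $\mathit{sd}(\mathit{rd}(v))=\mathit{sd}(\mathit{prd}(v))<\mathit{sd}(v)$, so $\mathit{sd}(v)\ne\mathit{sd}(\mathit{rd}(v))$ and the coloring rule makes $v$ blue exactly when $\mathit{rd}(v)=\mathit{prd}(v)$ is red; hence $v$ and $\mathit{prd}(v)$ differ. If $\mathit{prd}(v)\ne\mathit{rd}(v)$, then (by the remark after the definition of $\mathit{prd}$) $\mathit{rd}(v)$ lies on $F(\mathit{sd}(v),\mathit{psd}(v)]$ and $\mathit{prd}(v)$ on $F(\mathit{psd}(v),v]$, and I would compare the colors of $v$, $\mathit{prd}(v)$ and $\mathit{rd}(v)$ using the coloring rule together with Lemma~\ref{lemma:2-vertices-color} along $F[\mathit{rd}(v),\mathit{prd}(v)]$ and Lemma~\ref{lemma:3-vertices-color} on the triple $\mathit{rd}(v),\mathit{prd}(v),v$, using the strict semi-dominator inequalities already established. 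The main obstacle is clearly the descent above: showing that $\mathit{psd}(v)$ can exceed $\mathit{sd}(v)$ only when some vertex of $F(\mathit{psd}(v),v]$ has semi-dominator strictly below $\mathit{sd}(v)$ — in particular ruling out $d(v)=\mathit{sd}(v)$ there — which I expect to require the same intricate minimal-counterexample argument, with colors controlled by the two color lemmas, that drives the proof of Theorem~\ref{theorem:semidominators-spanning-trees}.
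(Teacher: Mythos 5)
Your proposal has genuine gaps that prevent it from working as sketched, even though it correctly identifies several of the pieces (the role of Lemma~\ref{lemma:semidominator-equal-node}, Lemmas~\ref{lemma:2-vertices-color} and~\ref{lemma:3-vertices-color}, and the structure around $\mathit{psd}$ and $\mathit{prd}$).

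First, you take $v$ to be the \emph{smallest} counterexample, but every auxiliary vertex your construction produces --- the $B$-child $z$ of $\mathit{psd}(v)$, the vertex $w$ supplied by Lemma~\ref{lemma:semidominator-equal-node}, the interior vertices of the high path through $g(v)$ --- is \emph{larger} than $v$. A smallest-counterexample hypothesis gives you nothing about them. The paper takes $v$ \emph{largest}, precisely so that it may apply the lemma to a carefully chosen vertex $w$ satisfying $w \ge u > v$ (the nearest $B$-ancestor of $g(v)$ with $w$ but not $b(w)$ an $F$-descendant of $u$), and it then uses \emph{both} conclusions of the lemma for $w$ --- the inequality $\mathit{sd}(\mathit{prd}(w)) < \mathit{sd}(w)$ \emph{and} the fact that $\mathit{prd}(w)$ has the opposite color from $w$. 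This also shows that your plan to prove the inequality in isolation and do the coloring afterwards cannot go through: an inequality-only induction hypothesis would not yield the color of $\mathit{prd}(w)$, which the argument needs. Beyond this structural problem, your subcase $d(v)=\mathit{sd}(v)$ is an unproved assertion --- that "the walk meets no vertex of $(\mathit{sd}(v),v)$" is essentially the statement you are trying to establish, not something that falls out of $b(v)=g(v)$ plus Lemma~\ref{lemma:semidominator-equal-node} --- and your other subcase ("construct a smaller vertex still violating the two hypotheses") gives no construction at all.

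Second, the final coloring step is broken. You propose to apply Lemma~\ref{lemma:3-vertices-color} to $x=\mathit{rd}(v)$, $y=\mathit{prd}(v)$, $z=v$, but hypothesis~(iv) of that lemma requires $x$ and $z$ to have the \emph{same} color, while $\mathit{rd}(v)$ and $v$ have \emph{different} colors --- that is exactly what $\mathit{sd}(\mathit{rd}(v))<\mathit{sd}(v)$ and the coloring rule give you, and you invoke this very fact one sentence earlier. So Lemma~\ref{lemma:3-vertices-color} says nothing about this triple. The paper instead applies it to $x=\mathit{rd}(v)$, $y=\mathit{prd}(v)$, $z=\mathit{prd}(w)$, after first establishing that $\mathit{rd}(v)$ and $\mathit{prd}(w)$ are both red; producing that third vertex $\mathit{prd}(w)$ with the right color is precisely what the largest-counterexample induction you discarded was for.
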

\begin{proof}
Suppose the lemma is false; let $v$ be the largest vertex that violates the lemma.  We prove by a case analysis that the lemma holds for $v$, a contradiction.  The last case produces an instance of Lemma \ref{lemma:3-vertices-color}.

By the choice of $v$, $\mathit{psd}(v) > \mathit{sd}(v)$. This implies $g(v) > v$: if $g(v) < v$, $\mathit{psd}(v) = g(v) = \mathit{sd}(v)$.  Let $a$ be the nearest common ancestor of $v$ and $g(v)$ in $F$.  Assume $\mathit{rd}(v)$ is a proper descendant of $a$ and $\mathit{sd}(\mathit{rd}(v)) < \mathit{sd}(v)$.  Since $\mathit{psd}(v)$ is an ancestor of $a$ by Lemma \ref{lemma:dfs}, $\mathit{rd}(v)$ a proper descendant of $a$ implies $\mathit{prd}(v) = \mathit{rd}(v)$.  Furthermore $\mathit{sd}(\mathit{prd}(v)) = \mathit{sd}(\mathit{rd}(v)) < \mathit{sd}(v)$, so the coloring rule colors $v$ and $\mathit{prd}(v) = \mathit{rd}(v)$ differently, which means that the lemma holds for $v$.  Hence we can assume that $\mathit{rd}(v)$ is an ancestor of $a$ or $\mathit{sd}(\mathit{rd}(v)) = \mathit{sd}(v)$.

Each vertex $u$ on $F(a, g(v)]$ has a high path to $v$, so $\mathit{sd}(u) \ge \mathit{sd}(v)$.  By Lemma \ref{lemma:semidominator-equal-node} there is such a vertex $u$ on $F(a, g(v)]$ with $\mathit{sd}(u) = \mathit{sd}(v)$.  Since $u$ is on $F(a, g(v)]$, $u > v$.  We prove that $u$ and $v$ have the same color.  If $\mathit{rd}(v)$ is an ancestor of $a$, then $\mathit{rd}(u) = \mathit{rd}(v)$, so the coloring rule colors $u$ and $v$ the same.  Suppose on the other hand that $\mathit{sd}(\mathit{rd}(v)) = \mathit{sd}(v)$.  Then $v$ is blue by the coloring rule.  Also, $\mathit{sd}(w) \ge \mathit{sd}(v)$ for every $w$ on $F(\mathit{sd}(v), a]$.  It follows that $\mathit{sd}(\mathit{rd}(u)) = \mathit{sd}(v) = \mathit{sd}(u)$, so $u$ is also blue by the coloring rule.  By the previous paragraph these are the only cases.

Suppose $u$ and $v$ are blue; the symmetric argument applies if they are red.  Let $w$ be the nearest ancestor of $g(v)$ in $B$ such that $w$ but not $b(w)$ is a descendant of $u$ in $F$.  There is such a $w$ since $g(v)$ is a descendant of $u$ in $F$ but $s$ is not.  Vertex $w$ is blue, since if $w \not= u$, $f(w)$ is a descendant of $u$ in $F$, so $b(w) = g(w)$ by the choice of $w$. Also $w \ge u > v$. Since $v$ is the maximum vertex that violates the lemma, the lemma holds for $w$.

There is a high path from $w$ to $v$, so $\mathit{sd}(w) \ge \mathit{sd}(v)$. Assume $\mathit{psd}(w) = \mathit{sd}(w) = \mathit{sd}(v)$.  Then $\mathit{psd}(w) < v$ and all vertices on $B(\mathit{psd}(w), w]$ are no less than $w$ and hence no less than $v$.  This is also true of the vertices on $B(w, g(v)]$, since they are all descendants of $u$, which is greater than $v$. It follows that $\mathit{psd}(v) = \mathit{psd}(w)$.  But then $\mathit{psd}(v) = \mathit{sd}(v)$, so the lemma holds for $v$.  Hence we can assume that $\mathit{psd}(w) > \mathit{sd}(w)$ or $\mathit{sd}(w) > \mathit{sd}(v)$.

We prove that $\mathit{prd}(w)$ is red, and that $\mathit{sd}(\mathit{prd}(w)) < \mathit{sd}(v)$. If $\mathit{sd}(w)=\mathit{sd}(v)$, both are true since $\mathit{psd}(w) > \mathit{sd}(w)$ by the previous paragraph, and the lemma holds for $w$, which is blue.
Thus assume $\mathit{sd}(w) > \mathit{sd}(v)$.
By Lemma \ref{lemma:dfs}, $\mathit{psd}(w)$ is an ancestor in $F$ of every vertex on $B[\mathit{psd}(w), w]$, including $w$ and $g(w)$.  Since $w$ but not $g(w)$ is a descendant of $u$ in $F$,  $u$ must be on $F(\mathit{psd}(w), w]$, making $u$ a candidate for $\mathit{prd}(w)$.  Thus $\mathit{sd}(\mathit{rd}(w)) \le \mathit{sd}(\mathit{prd}(w)) \le \mathit{sd}(u) = \mathit{sd}(v) < \mathit{sd}(w)$, so $\mathit{rd}(w)$ is red by the coloring rule, since $w$ is blue.
But then $\mathit{prd}(w)$ is also red, since $\mathit{psd}(w) = \mathit{sd}(w)$ implies $\mathit{prd}(w) = rd(w)$, and $\mathit{psd}(w) > \mathit{sd}(w)$ implies $\mathit{prd}(w)$ is red since the lemma holds for $w$. It cannot be true that $\mathit{sd}(\mathit{prd}(w)) = \mathit{sd}(v)$, for then $\mathit{sd}(\mathit{prd}(w)) = \mathit{sd}(v) = \mathit{sd}(u)$, making $\mathit{prd}(w)$ and $u$ the same color by Lemma \ref{lemma:2-vertices-color}, a contradiction.  Thus $\mathit{sd}(\mathit{prd}(w)) < \mathit{sd}(v)$.

If $\mathit{prd}(v) = \mathit{prd}(w)$, then $\mathit{prd}(v) = \mathit{prd}(w)$ is red and $\mathit{sd}(\mathit{prd}(v)) = \mathit{sd}(\mathit{prd}(w)) < \mathit{sd}(v)$ by the paragraph above, so the lemma holds for $v$.  Thus assume $\mathit{prd}(v) \not= \mathit{prd}(w)$.  Every vertex $x$ on $F(a, w]$ has a high path to $v$ via the path in $F$ to $w$ followed by the path in $B$ from $w$ to $v$, so $\mathit{sd}(x) \ge \mathit{sd}(v)$.  Since $\mathit{sd}(\mathit{prd}(w)) < \mathit{sd}(v)$, $\mathit{prd}(w)$ is not on $F(a, w]$, which means it is an ancestor of $a$ in $F$, and $\mathit{psd}(w)$ is a proper ancestor of $a$ in $F$.  This implies $\mathit{psd}(v) =\mathit{psd}(w)$, since $\mathit{psd}(w) < a \le v$ and every vertex on $B(\mathit{psd}(w), g(w)]$ is greater than $w$, which is greater than $v$.  Since $\mathit{prd}(v) \not= \mathit{prd}(w)$, $\mathit{prd}(v)$ must be on $F(a, v]$ and $\mathit{sd}(\mathit{prd}(v)) < \mathit{sd}(\mathit{prd}(w))$.  But $\mathit{sd}(\mathit{prd}(w)) < \mathit{sd}(v)$, so $\mathit{sd}(\mathit{prd}(v)) < \mathit{sd}(v)$.  It remains to prove that $\mathit{prd}(v)$ is red.

Since $\mathit{sd}(\mathit{rd}(v)) \le \mathit{sd}(\mathit{prd}(v)) < \mathit{sd}(v)$, $\mathit{rd}(v)$ is red by the coloring rule.  If $\mathit{sd}(\mathit{rd}(v)) = \mathit{sd}(\mathit{prd}(v))$, then $\mathit{prd}(v)$ is red by Lemma \ref{lemma:2-vertices-color}.  Thus assume $\mathit{sd}(\mathit{rd}(v)) < \mathit{sd}(\mathit{prd}(v))$.  We prove that $x = \mathit{rd}(v)$, $y = \mathit{prd}(v)$, $z = \mathit{prd}(w)$ satisfies the hypothesis of Lemma \ref{lemma:3-vertices-color}.  Indeed, $\mathit{sd}(\mathit{rd}(v)) < \mathit{sd}(\mathit{prd}(v)) < \mathit{sd}(\mathit{prd}(w)) < \mathit{sd}(v) < \mathit{rd}(v)$, so (ii) in the hypothesis of Lemma \ref{lemma:3-vertices-color} holds.  Since $\mathit{sd}(\mathit{rd}(v)) < \mathit{sd}(\mathit{prd}(v))$, $\mathit{rd}(v)$ is an ancestor of $\mathit{psd}(v) = \mathit{psd}(w)$ and hence an ancestor of both $\mathit{prd}(v)$ and $\mathit{prd}(w)$.  Also, $\mathit{prd}(v)$ and $\mathit{prd}(w)$ are related since $\mathit{prd}(w)$ is an ancestor of $a$.  Thus (i) holds.  Since $\mathit{prd}(v)$ and $\mathit{prd}(w)$ are ancestors of $v$ and descendants of $\mathit{rd}(v)$, (iii) holds by the definition of $\mathit{rd}(v)$.  We proved above that $\mathit{rd}(v)$ and $\mathit{prd}(w)$ are red, so (iv) holds.  By Lemma \ref{lemma:3-vertices-color}, $\mathit{prd}(v)$ is red.  Thus the lemma holds for $v$.
\end{proof}

\begin{theorem}
\label{theorem:BR-independent}
Trees $B$ and $R$ are independent.
\end{theorem}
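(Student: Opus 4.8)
By Theorem \ref{theorem:semidominators-spanning-trees}, $B$ and $R$ are already spanning trees rooted at $s$, so only independence remains. Every dominator of a vertex $v$ lies on every path from $s$ to $v$, hence on both $B[s,v]$ and $R[s,v]$, so the inclusion $D[s,v]\subseteq B[s,v]\cap R[s,v]$ is automatic; the content of the theorem is the reverse inclusion for all $v\neq s$. I would prove $B[s,v]\cap R[s,v]=D[s,v]$ by strong induction on the preorder number of $v$. Fix $v\neq s$ and assume, by symmetry (interchanging the roles of $B$ and $R$), that $v$ is blue, so $b(v)=g(v)$ and $r(v)=f(v)$. Since $v$ is a child of $g(v)$ in $B$ and of $f(v)$ in $R$, it lies on neither $B[s,g(v)]$ nor $R[s,f(v)]$, so
\[ B[s,v]\cap R[s,v] \;=\; \bigl(B[s,g(v)]\cap R[s,f(v)]\bigr)\cup\{v\}. \]
Because $d(v)$ dominates both $g(v)$ and $f(v)$, the strict dominators of $v$ are common to $B[s,g(v)]$ and $R[s,f(v)]$; the task is therefore to show that $B[s,g(v)]\cap R[s,f(v)]$ contains nothing else, i.e.\ equals $D[s,d(v)]$.

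The plan is a descent along the two paths guided by $\mathit{psd}(v)$ and $\mathit{prd}(v)$, which play for the pair $(B,R)$ the structural role that $\mathit{sd}(v)$ and $\mathit{rd}(v)$ play for the dominator tree. The value $\mathit{sd}(\mathit{prd}(v))$ cannot exceed $\mathit{sd}(v)$, and I would split on whether it equals $\mathit{sd}(v)$ or is strictly smaller. If $\mathit{sd}(\mathit{prd}(v))=\mathit{sd}(v)$, then by definition of $\mathit{prd}$ every vertex on $F(\mathit{psd}(v),v]$ has semi-dominator at least $\mathit{sd}(v)$, so Lemma \ref{lemma:pseudo-semi-dominator} forces $\mathit{psd}(v)=\mathit{sd}(v)$; hence $\mathit{prd}(v)=\mathit{rd}(v)$, $\mathit{sd}(\mathit{rd}(v))=\mathit{sd}(v)$, and $d(v)=\mathit{sd}(v)=\mathit{psd}(v)$ by Lemma \ref{lemma:semi-relative-dominators}. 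In this ``bottomed-out'' case one argues directly from the definition of $\mathit{psd}$ (the portion of $B[s,g(v)]$ below $d(v)$ uses only vertices numbered above $v$) and from Lemma \ref{lemma:dfs} (which limits where $R[s,f(v)]$, reaching $f(v)<v$, can meet such vertices) that $B[s,g(v)]\cap R[s,f(v)]=B[s,d(v)]\cap R[s,d(v)]$, which equals $D[s,d(v)]$ by the induction hypothesis applied to $d(v)<v$.

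If instead $\mathit{sd}(\mathit{prd}(v))<\mathit{sd}(v)$, then Lemma \ref{lemma:pseudo-semi-dominator} tells us that $q=\mathit{prd}(v)$ is \emph{red} --- the opposite color of $v$ --- and $q<v$ (equality would force $\mathit{sd}(q)=\mathit{sd}(v)$). The opposite color is the heart of the matter: it places $(g(q),q)$ in $R$ and $(f(q),q)$ in $B$, mirroring the situation at $v$, and this alternation is exactly what keeps the $B$-path and the $R$-path from re-merging as one follows them down toward $v$ through the loop structure encoded by $\mathit{psd}$ and $\mathit{prd}$. I would use this, together with the induction hypothesis at $q$ and the recurrence of Lemma \ref{lemma:semi-relative-dominators}, to peel off the part of $B[s,g(v)]\cap R[s,f(v)]$ strictly below $d(v)$ --- showing it empty --- and to identify the rest with $D[s,d(v)]$; the colour of each vertex met during the descent is pinned down by Lemmas \ref{lemma:2-vertices-color} and \ref{lemma:3-vertices-color}, which is where those lemmas earn their keep.

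The main obstacle is precisely this descent/surgery step in both cases: proving that, below $d(v)$, the $B$-path to $v$ and the $R$-path to $v$ have no common vertex. What makes it delicate is that $B$ and $R$ are arbitrary spanning trees without the parent property, so Corollary \ref{corollary:parent} is unavailable; one must instead track vertex numbers along the two partial paths, use Lemma \ref{lemma:dfs} to control how high paths cross tree paths, and use the coloring rule together with the opposite-color conclusion of Lemma \ref{lemma:pseudo-semi-dominator} to forbid a shared vertex numbered above $v$. Making this ``cannot re-merge'' argument airtight while disposing cleanly of the degenerate subcases ($d(v)=f(v)$; $g(v)<v$ versus $g(v)>v$; $\mathit{prd}(v)=\mathit{rd}(v)$ versus $\mathit{prd}(v)\neq\mathit{rd}(v)$) is where essentially all the work lies; everything else is routine manipulation of Lemmas \ref{lemma:semi-relative-dominators} and \ref{lemma:pseudo-semi-dominator}.
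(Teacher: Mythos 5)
You have correctly identified the cast of characters (Lemma \ref{lemma:pseudo-semi-dominator}, the coloring Lemmas \ref{lemma:2-vertices-color} and \ref{lemma:3-vertices-color}, the recurrence of Lemma \ref{lemma:semi-relative-dominators}, and the Path Lemma), and your bookkeeping reductions are sound: $B[s,v]\cap R[s,v]=(B[s,g(v)]\cap R[s,f(v)])\cup\{v\}$ holds, and in the ``bottomed-out'' subcase $\mathit{sd}(\mathit{prd}(v))=\mathit{sd}(v)$ does force $\mathit{psd}(v)=\mathit{sd}(v)=d(v)$ via Lemmas \ref{lemma:pseudo-semi-dominator} and \ref{lemma:semi-relative-dominators}. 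But the proposal stops precisely where the proof begins: you describe, rather than execute, the step that shows the $B$-path and $R$-path cannot meet strictly between $d(v)$ and $v$, and you acknowledge this yourself. That step is not ``routine manipulation''; it is the theorem.

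There is also a structural problem with closing the gap via the induction you set up. The inductive hypothesis at $d(v)$ controls only $B[s,d(v)]\cap R[s,d(v)]$; it says nothing about whether a vertex $z>v$ on $B(d(v),g(v)]$ could also lie on $R[s,d(v)]$ or on $R(d(v),f(v)]$, and the observation that all of $B(\mathit{psd}(v),g(v)]$ is numbered above $v$ does not by itself forbid such a collision. Similarly, applying the hypothesis at $q=\mathit{prd}(v)$ is of unclear use, because $\mathit{prd}(v)$ is defined as a vertex on the $F$-path $F(\mathit{psd}(v),v]$ and need not lie on either $B[s,v]$ or $R[s,v]$, so the hypothesis $B[s,q]\cap R[s,q]=D[s,q]$ does not obviously constrain the two paths you care about. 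What is missing is exactly the machinery the paper builds for this purpose: taking a minimum offending $v$, extracting a shared non-dominator $w$ of $B[d(v),v]$ and $R[d(v),v]$, introducing the minimum vertices $x_B,x_R$ on the two subpaths from $w$ to $v$, introducing the vertex $u$ of minimum semidominator on $F(x_B,v]$, establishing a chain of color and inequality constraints ($u$ red, $x_B$ blue, $\mathit{psd}(u)<x_B$, $\mathit{sd}(u)\le\mathit{psd}(x_B)$, $\mathit{sd}(u)<\mathit{sd}(x_B)$), and finally assembling $x=\mathit{prd}(x_R)$, $y=u$, $z=x_B$ into a triple that violates Lemma \ref{lemma:3-vertices-color}. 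Your two-way split on whether $\mathit{sd}(\mathit{prd}(v))$ equals $\mathit{sd}(v)$, with a descent to $d(v)$ or to $\mathit{prd}(v)$, does not by itself produce these objects or the contradiction, and I see no shortcut. As a plan it points at the right lemmas; as a proof it has a hole where the argument should be.
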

\begin{proof}
The proof is like that of Lemma \ref{lemma:pseudo-semi-dominator}: we assume the lemma is false and work our way through a number of cases, each of which leads to a contradiction.  Two cases produce instances of Lemma \ref{lemma:3-vertices-color}.

Thus suppose the lemma is false.  Let $v$ be minimum such that $B[s, v]$ and $R[s, v]$ share a vertex other than a dominator of $v$.  By the argument in the proof of Theorem \ref{theorem:B-R-strong}, $B[d(v), v]$ and $R[d(v), v]$ share a vertex $w$ other than $v$ and $d(v)$.  Let $x_B$ and $x_R$ be the minimum vertices on $B[w, v]$ and $R[w, v]$, respectively. Assume $x_B \le x_R$; the symmetric argument applies if $x_R < x_B$.  By Lemma \ref{lemma:dfs}, $v$ is a descendant of both $x_B$ and $x_R$ in $F$, so $x_B$ is an ancestor of $x_R$ in $F$.

Let $u$ be a vertex of minimum $\mathit{sd}(u)$ on $F(x_B, v]$.  We investigate the properties of $u$.  First we prove that $\mathit{sd}(u)$ is a proper ancestor of $x_B$ in $F$.  Since $B[d(v), v]$ is simple, $x_B \not = d(v)$, which implies that $x_B$ does not dominate $v$; if it did, it would equal $d(v)$.  But $d(v)$ dominates $x_B$, since it dominates $v$.  Thus $d(v)$ is a proper ancestor of $x_B$ in $F$, and $d(v) < x_B$.  Assume $\mathit{sd}(u) \ge x_B$.   Lemma \ref{lemma:semi-relative-dominators} and the definition of $u$ imply that $d(u) = \mathit{sd}(u)$.  Furthermore an induction on $x$ for $x \in F[u, v]$ using Lemma \ref{lemma:semi-relative-dominators} shows that $d(x) \ge \mathit{sd}(u)$.  Thus $d(v) \ge \mathit{sd}(u) \ge x_B$, a contradiction.  We conclude that $\mathit{sd}(u) < x_B$, and $\mathit{sd}(u)$ is a proper ancestor of $x_B$ in $F$.

Second we prove that $\mathit{psd}(u)$ is also a proper ancestor of $x_B$ in $F$.  This is immediate if $\mathit{psd}(u) = \mathit{sd}(u)$.  If not, $\mathit{sd}(u) < \mathit{psd}(u)$, which implies by the choice of $u$ that $\mathit{prd}(u)$ is an ancestor of $x_B$ in $F$; hence $\mathit{psd}(u)$ is a proper ancestor of $x_B$ in $F$.

Third we prove that $u$ is red.  Suppose to the contrary that $u$ is blue.  If $u$ were on $B[x_B, v]$, $\mathit{psd}(u)$ would also be on $B[x_B, v]$, since $x_B < u$, but this contradicts $\mathit{psd}(u) < x_B$.  Thus $u$ is not on $B[x_B, v]$.  Let $x$ be the first vertex on $B[x_B, v]$ that is also on $F(u, v]$.  Since $f(x)$ is on $F[u, v]$, $b(x) = g(x)$, so $x$ is blue.  This implies $\mathit{psd}(x)$ is on $B[x_B, x]$, so $\mathit{psd}(x)$ is on $F[x_B, u]$.  The definition of $u$ gives $\mathit{prd}(x) = u$.  Since $x$ is blue, if $\mathit{psd}(x) = \mathit{sd}(x)$ then $x = \mathit{prd}(x) = \mathit{rd}(x)$ is red by the coloring rule; if $\mathit{psd}(x) > \mathit{sd}(x)$, then $u = \mathit{prd}(x)$ is red by Lemma \ref{lemma:pseudo-semi-dominator}.

Fourth we prove that $u$ is an ancestor of $x_R$ in $F$.  If not, $u$ is a proper descendant of $x_R$ in $F$.  But then an argument symmetric to the one in the previous paragraph shows that $u$ is blue, a contradiction.  Thus $u$ is an ancestor of $x_R$ in $F$.

Now we consider the relationship of $u$ and $x_B$, our goal being to construct an instance of Lemma \ref{lemma:3-vertices-color} in which $y = x_B$ and $z = u$ or vice-versa.  First we prove that $x_B$ is blue.  Since $u$ is on $F(x_B, x_R]$, $x_B \not= x_R$.  If $x_B = w$, $x_R = w = x_B$ since $x_B \le x_R \le w$, a contradiction.  Thus $x_B \not= w$.  The choice of $x_B$ implies $b(x_B) > x_B$, so $x_B$ is blue.

Second we prove $\mathit{sd}(u) \le \mathit{psd}(x_B)$.  Vertex $w$ is on $B[\mathit{psd}(x_B), x_B]$.  The path $P$ consisting of $B[\mathit{psd}(x_B), w]$ followed by $R[w, v]$ avoids $x_B$: $B[w, v]$ is simple, so $B[\mathit{psd}(x_B), w]$ avoids $x_B$, and $x_B < x_R$, so $x_B$ is not on $R[w, v]$.  Let $y$ be the first vertex on $P$ that is also on $F(x_B, v]$. The part of $P$ from $\mathit{psd}(x_B)$ to $y$ is a high path, so $\mathit{sd}(y) \le \mathit{psd}(x_B)$.  The choice of $u$ implies $\mathit{sd}(u) \le \mathit{sd}(y) \le \mathit{psd}(x_B)$.

Third we prove $\mathit{sd}(u) < \mathit{sd}(x_B)$.  If $\mathit{sd}(u) = \mathit{sd}(x_B)$, $u$ and $x_B$ are the same color by Lemma \ref{lemma:2-vertices-color}, a contradiction.  If $\mathit{sd}(u) > \mathit{sd}(x_B)$, then $\mathit{psd}(x_B) > \mathit{sd}(x_B)$ by the result in the previous paragraph.  By Lemma \ref{lemma:pseudo-semi-dominator}, $\mathit{sd}(\mathit{prd}(x_B)) < \mathit{sd}(x_B)$ and $\mathit{prd}(x_B)$ is red.  But then $x = \mathit{prd}(x_B)$, $y = x_B$, $z = u$ satisfies the hypothesis of Lemma \ref{lemma:3-vertices-color}, so $x_B$ is red, a contradiction.  The only remaining possibility is $\mathit{sd}(u) < \mathit{sd}(x_B)$.

We conclude the proof by showing that $x = \mathit{prd}(x_R)$, $y = u$, $z = x_B$ satisfies the hypothesis of Lemma \ref{lemma:3-vertices-color}, so $u$ is blue, a contradiction.  This requires some additional work.  Suppose $w$ is a descendant of $u$ in $F$.  Then the path from $\mathit{sd}(u)$ to $x_B$ consisting of the high path from $\mathit{sd}(u)$ to $u$ followed by $F[u,w]$ followed by $B[w, x_B]$ is a high path for $x_B$, giving $\mathit{sd}(x_B) \le \mathit{sd}(u)$, contradicting the result in the previous paragraph.  Thus $w$ is not a descendant of $u$ in $F$.

Since $x_R$ is a descendant of $u$ in $F$, $x_R \not= w$.  This implies $r(x_R) > x_R$, so $x_R$ is red.  Since $w$ is on $R[\mathit{psd}(x_R), x_R]$, $w$ is a descendant of $\mathit{psd}(x_R)$ by Lemma \ref{lemma:dfs}, so $u$ cannot be an ancestor of $\mathit{psd}(x_R)$ in $F$, or $w$ would be a descendant of $u$.  Thus $\mathit{psd}(x_R)$ is a proper ancestor of $u$, making $u$ a candidate for $\mathit{prd}(x_R)$.  This implies $\mathit{sd}(\mathit{prd}(x_R)) \le \mathit{sd}(u)$ and $\mathit{prd}(x_R)$ is an ancestor of $u$, so (i) and (iii) in the hypothesis of Lemma \ref{lemma:3-vertices-color} hold.  We proved the first inequality of (ii) above.  The path $R[\mathit{psd}(x_R), w]$ followed by $B[w, x_B]$ is a high path, so $\mathit{sd}(x_B) \le \mathit{psd}(x_R) < \mathit{prd}(x_R)$, giving the third inequality of (ii).

We prove that $\mathit{prd}(x_R)$ is blue, so (iv) holds.  If $\mathit{psd}(x_R) > \mathit{sd}(x_R)$, then $\mathit{prd}(x_R)$ is blue by Lemma \ref{lemma:pseudo-semi-dominator}.  Otherwise $\mathit{psd}(x_R) = \mathit{sd}(x_R)$.  This implies $\mathit{prd}(x_R) = \mathit{rd}(x_R)$, giving $\mathit{sd}(\mathit{rd}(x_R)) = \mathit{sd}(\mathit{prd}(x_R)) \le \mathit{sd}(u) < \mathit{sd}(x_B) \le \mathit{psd}(x_R) = \mathit{sd}(x_R)$, so $\mathit{prd}(x_R) = \mathit{rd}(x_R)$ is blue by the coloring rule.

Finally, since $\mathit{prd}(x_R)$ is an ancestor of $u$, $\mathit{sd}(\mathit{prd}(x_R)) = \mathit{sd}(u)$ would imply that $u$ is blue by Lemma \ref{lemma:2-vertices-color}, a contradiction.  Thus $\mathit{sd}(\mathit{prd}(x_R)) < \mathit{sd}(u)$, giving the second inequality of (ii).  By Lemma \ref{lemma:pseudo-semi-dominator}, $u$ is blue, the final contradiction.
\end{proof}

Summarizing the results of this section, we can construct a low-order for an arbitrary graph from its semi-dominator information by building a pair of independent spanning trees using Algorithm 8, and then constructing a low-high order using Algorithm 7.  This takes $O(n)$ time.  Although the algorithms are simple, the correctness proof of Algorithm 8 is quite complicated.

\section{Applications and Open Problems}
\label{sec:remarks}

We conclude by mentioning some further applications of our constructions and some open problems. We begin with applications.

\ignore{
In Section \ref{sec:two-independent-spanning-trees} we mentioned a conjecture that the two independent spanning trees constructed by Algorithm 8 may actually be strongly independent. It is also possible that they are constructable by Algorithm 1. This last conjecture (which subsumes strong independence) is supported by experiments with large graphs. We will present experimental results that show the practical efficiency of the algorithms we report here in a forthcoming paper.
}

The problem of testing the 2-vertex connectivity of a directed graph can be reduced in linear time
to verifying that the dominator tree of a flow graph is flat~\cite{2vc,Italiano2012}. Thus the remark in Section \ref{sec:low-high-headers} gives a simple linear-time algorithm for this problem. Our linear-time algorithms for computing two (strongly) independent spanning trees can be used in a data structure that computes pairs of vertex-disjoint $s$-$t$ paths in $2$-vertex connected digraphs (for any two query vertices $s$ and $t$) ~\cite{2vc}, and in fast algorithms for approximating the smallest 2-vertex connected spanning subgraph of a directed graph~\cite{2VCSS:Geo}.

Tholey~\cite{Tholey2012} considers the following problem: Let $(u_i,v_i)$, $1 \le i \le k$, be $k$ pairs of vertices of a directed acyclic graph with two distinguished start vertices $s_1$ and $s_2$. For each pair $(u_i,v_i)$, we wish to test if there are two vertex disjoint paths $P_1=(s_1,\ldots,t_1)$ and $P_2 = (s_2,\ldots,t_2)$, where $\{ t_1, t_2 \} = \{u_i, v_i \}$, and to construct such paths if they exist.
Tholey showed how to test the existence of $P_1$ and $P_2$ in constant time and how to produce them in $O(|P_1| + |P_2|)$ time after linear-time preprocessing.
He then uses this result to give a linear-time algorithm for the $2$-disjoint paths problem on a directed acyclic graph.

\ignore{
This result is based on algorithm of Suurballe and Tarjan~\cite{suurballe-tarjan} for finding shortest pairs of disjoint paths in a directed graph, and is used in \cite{Tholey05} to give an efficient algorithm for the $2$-disjoint paths problem on a directed acyclic graph.
}

Tholey's algorithm for testing the existence of $P_1$ and $P_2$ and for constructing them uses dominator trees, shortest-path trees, a topological order of the directed acyclic graph, and other structures.
The use of a low-high order gives us an alternative solution that works for a general directed graph $G$. We add to $G$ a new root vertex $s$ together with arcs $(s,s_1)$ and $(s,s_2)$, and compute the following structures: the dominator tree $D$, a low-high order of $D$, and two strongly independent spanning trees $B$ and $R$ as described in Section \ref{sec:properties}. Also, for each vertex $v \not= s$, we store the child $c(v)$ of $s$ in $D$ that dominates $v$. All computations take linear time. Let $u,v \not= s$. We can test if $G$ contains the above vertex-disjoint paths $P_1$ and $P_2$ as follows. If $u=v$, then the paths exist if and only if $c(u)=u$. Otherwise, the paths exist if and only if $c(u) \not= c(v)$. This test takes constant time. If the paths exist, we can produce them in $O(|P_1| + |P_2|)$ time using $B$ and $R$ as mentioned in the remark in Section \ref{sec:properties}.

Now we turn to open problems.
A conjecture, attributed to Frank in \cite{vdp:whitty},
states that any strongly $k$-connected graph contains $k$
disjoint branchings. This vertex-disjointness
conjecture is analogous to a well-known theorem of
Edmonds on edge-disjoint branchings~\cite{matroids:edmonds}.
Edmonds's result states that $G$ has $k$ edge-disjoint branchings
rooted at $s$ if and only if, for every vertex $v$, there are $k$
edge-disjoint paths from $s$ to $v$. (See also
\cite{edge-disjoint:edmonds} for a different characterization of
edge-disjoint branchings.) The vertex-disjointness conjecture was disproved in 1995
by Huck~\cite{ind_st:huck95}, who showed that for any $k \ge 3$
there is a $k$-connected graph that does not have $k$ independent
branchings. This disproof of the conjecture does not hold in
special cases. For instance, it does not apply to planar
graphs~\cite{ind_st:huck99}.
In acyclic graphs a statement related to the vertex-disjointness conjecture is true.
Specifically, let $G$ be an acyclic directed graph that is $k$-connected to
a vertex $t$: for each $v \neq t$, there are $k$
vertex-disjoint paths from $v$ to $t$. Then $G$ has $k$ independent
in-trees rooted at $t$~\cite{ind_acyclic_st:huck99}. The
same result was proved independently
in~\cite{acyclic_independence:abhs00}; there, the authors gave an
algorithm that constructs the $k$ trees in $O(k^2n + km)$ time,
starting from a topological order of $G$.
Itai and Rodeh made an analogous vertex-disjointness conjecture
for undirected graphs \cite{ind_st:ir84, ind_st:ir}. Specifically
they conjectured that for any $k$-connected undirected graph
$G=(V,E)$ and for any vertex $v \in V$, $G$ has $k$ independent
spanning trees rooted at $v$. Itai and Rodeh proved their conjecture
for the case $k=2$, and gave a linear-time construction. The case
$k=3$ was proved by Cheriyan and Maheshwari~\cite{ind_st:cm}, who
also gave a corresponding $O(n^2)$-time algorithm, and by Itai and
Zehavi~\cite{ind_st:iz}. Curran, Lee and
Yu~\cite{ind_st:cly06} provided a $O(n^3)$-time algorithm that
constructs four independent spanning trees of a 4-connected graph,
thus proving the $k=4$ case. To the best of our knowledge, the case $k \ge 5$ is open.

Algorithm 8 raises a couple of open problems:
\begin{itemize}
\item[(1)] Are the trees $B$ and $R$ constructed by Algorithm 8 strongly independent?  The example in Figure \ref{fig:independent-strong} shows that independent trees are not in general strongly independent, but the specific trees constructed by Algorithm 8 might be. If so, Algorithm 8 gives a direct construction of such trees, avoiding the need to use Algorithms 7 and 1.
\item[(2)] Can the trees $B$ and $R$ constructed by Algorithm 8 be generated from some low-high order of $D$ by Algorithm 1?  If this is true then (1) is true as well.  Furthermore, the following would be an alternative to Algorithm 7: Apply Algorithm 8 to construct spanning trees $B$ and $R$. Construct the corresponding spanning trees $B'$ and $R'$ in the derived graph. Form the graph $K$ whose arcs are all those in $B'$ but not in $D$ and the reversals of all arcs in $R'$ that are not in $D$. Find a topological order of $K$. For each vertex $v \not= s$, arrange its children in $D$ in an order consistent with the topological order of $K$. The preorder on $D$ corresponding to its ordered lists of children will be a low-high order.
\end{itemize}
Experiments on large graphs suggest that (2) and hence (1) is true, but we have no proof.

Finally, is there a simple way to extend the iterative algorithm of Cooper et al~\cite{dom:chk01} or incremental algorithms for computing dominators~\cite{dyndom:2012,RR:incdom,SGL} so that they also compute a low-high order of the dominator tree?
Such an extension would make these into certifying algorithms.

\bibliographystyle{plain}
\bibliography{domcert}

\end{document}